\documentclass{article}
\usepackage{graphicx}
\usepackage{amsfonts, amsmath, amsthm, amssymb}
\usepackage{booktabs}
\usepackage{xcolor}
\usepackage{mathtools}
\usepackage{setspace} 
\usepackage[left=1in,top=1in,right=1in,bottom=1in]{geometry}
\usepackage{multirow}
\usepackage{algorithm}
\usepackage{algorithmic}
\usepackage[round]{natbib}
\usepackage[colorlinks=true, linkcolor=blue, citecolor=blue, urlcolor=blue]{hyperref}
\usepackage{pifont} 
\makeatletter
\newenvironment{breakablealgorithm}
  {%
   \begin{center}
     \refstepcounter{algorithm}%
     \hrule height.8pt depth0pt \kern2pt%
     \renewcommand{\caption}[2][\relax]{%
       {\raggedright\textbf{Algorithm \thealgorithm} ##2\par}%
       \ifx\relax##1\relax
         \addcontentsline{loa}{algorithm}{\protect\numberline{\thealgorithm}##2}%
       \else
         \addcontentsline{loa}{algorithm}{\protect\numberline{\thealgorithm}##1}%
       \fi
       \kern2pt\hrule\kern2pt
     }%
  }{%
     \kern2pt\hrule\relax
   \end{center}
  }
\makeatother

\newtheorem{remark}{Remark}
\newtheorem{theorem}{Theorem}[section]
\newtheorem{corollary}{Corollary}[theorem]
\newtheorem{lemma}[theorem]{Lemma}

\newcommand{\eval}[0]{\biggr\rvert}
\newcommand{\RR}[0]{\mathbb{R}}

\newcommand{\norm}[1]{\left\lVert#1\right\rVert}

\title{Improving the Efficiency of Subgroup Analysis in Randomized Controlled Trials with TMLE}
\author{Sky Qiu$^{0,1,2}$, Nerissa Nance$^{2,3}$, Rachael Phillips$^2$, Jens Tarp$^{3}$, \\Maya Petersen$^{1,2}$, and Mark van der Laan$^{1,2}$}
\date{\today}

\begin{document}
\maketitle
\begin{abstract}

Subgroup analyses within randomized controlled trials are often underpowered due to limited sample sizes. We address this challenge by leveraging trial participants outside the subgroup of interest to augment estimation within the subgroup. Specifically, we study two Targeted Maximum Likelihood Estimators (TMLEs) that borrow information from non-subgroup participants within the same trial: a TMLE with pooled regression (TMLE-PR) and an Adaptive Targeted Maximum Likelihood Estimator (A-TMLE). Both estimators enable information sharing without relying on any external real-world data, thereby capitalizing on key strengths of the trial: most importantly, the protection against bias afforded by the randomized treatment, but also harmonized data collection, and consistent treatment and outcome definitions. The general strategy proposed here directly advances the priorities of key regulatory agencies, including the FDA, by improving the precision of subgroup-specific treatment effect estimates without introducing external sources of bias, thereby facilitating rigorous inference to support equitable labeling, access, and post-market evaluation. Our simulation study suggests that both TMLE-PR and A-TMLE improve efficiency for estimation of subgroup effects in finite samples, compared to both conventional unadjusted estimator and semi-parametrically efficient subgroup-only estimators. In a case study based on analysis of data from a cardiovascular outcome trial (LEADER, NCT01179048), we estimate the risk reduction of major adverse cardiac events (MACE) under liraglutide treatment among Black and Asian subgroups---each comprising less than 10\% of the trial population---using the proposed estimators that borrow information from the remainder of the trial. Using A-TMLE, in particular, we find estimated absolute MACE risk reductions of 1.6, 1.5, and 1.5 percentage points among Asian participants and 2.1, 2.0, and 2.1 percentage points among Black participants at 365, 540, and 730 days, respectively, with 95\% confidence intervals excluding the null at each time point. These results highlight TMLE-PR and A-TMLE as practical approaches for enhancing the efficiency and precision of subgroup-specific evidence generation using only data collected within the same trial.
\end{abstract}

\section{Introduction}
\footnotetext[0]{Correspondence: \tt sky.qiu@berkeley.edu}
\footnotetext[1]{Division of Biostatistics, School of Public Health, University of California, Berkeley, CA, USA}
\footnotetext[2]{Center for Targeted Machine Learning and Causal Inference, School of Public Health, University of California, Berkeley, CA, USA}
\footnotetext[3]{Novo Nordisk A/S, Bagsvaerd, Denmark}

Estimating treatment effects within specific patient subgroups in randomized controlled trials (RCTs) is important for characterizing subgroup-specific average treatment effects, which may differ from the population average treatment effect \citep{kravitz_evidence_2004,kent_assessing_2010}. However, RCTs often lack sufficient representation of certain subpopulations of interest, limiting the statistical power of subgroup-specific analyses \citep{wang_statistical_2021}. Real-world data (RWD) sources may contain a large number of patients in the subgroup of interest and therefore offer a potential avenue to augment trial subpopulations and enhance inference for subgroup effects. However, integrating RWD with RCT data introduces substantial statistical challenges, particularly in controlling bias arising from incompatibilities between data sources. These challenges include distributional shifts and structural differences in how data are generated and recorded, such as variations in patient populations and eligibility criteria, treatment implementation and adherence, endpoint definitions, confounding mechanisms, and coding systems \citep{fda_rwe_2023}. Therefore, it is of interest to explore methods for improving the precision of subgroup analyses using trial data alone, without reliance on external RWD.

When restricting attention to RCT data alone, one approach to conduct subgroup analysis is to consider whether information from patients outside the subgroup, yet still enrolled in the trial, can be leveraged to improve estimation of the subgroup-specific treatment effect. The idea of improving subgroup-specific treatment effect estimates by borrowing information from the broader trial population has a long history in Bayesian and empirical-Bayes subgroup analysis. For example, \cite{davis_empirical_1990} proposed empirical-Bayes estimators that ``shrink the extreme estimates toward the overall measure of treatment difference,'' while \cite{dixon_bayesian_1992} formulated Bayesian subset-analysis models that shrink treatment-by-subgroup interaction effects toward zero, thereby shrinking subgroup-specific effects toward a common treatment effect. \cite{simon_bayesian_2002} later described this idea as estimating subset-specific effects as an average of observed within-subset differences and overall differences. 

In this article, we introduce two new estimators for subgroup analysis. Our proposals are motivated by the same information-borrowing principle, but target a prespecified subgroup causal estimand through a semi-parametric analogue of classical Bayesian/empirical-Bayes partial pooling for subgroup effects \citep{jones_bayesian_2011,henderson_bayesian_2016,wang_bayesian_2024,wolbers_using_2025,wolbers_unified_2026}. Our proposed estimators fall under the Targeted Maximum Likelihood Estimation (TMLE) framework, which is a general statistical estimation framework for constructing asymptotically linear, efficient, doubly- or multiply-robust, plug-in estimators \citep{vdl_targeted_2006,vdl_targeted_2011,vdl_targeted_2018}. Given sample size limitations, efficient estimators are essential in subgroup analysis. TMLE allows for the integration of flexible machine learning algorithms, through, for example, Super Learner \citep{sl_vdl_2007}, into its nuisance parameter estimation, thereby avoiding making any restrictive parametric modeling assumptions.

Our first proposed estimator works by fitting an initial regression of the outcome on the trial arm and baseline covariates on the pooled trial sample, evaluating that regression on the subgroup sample, and then targeting the subgroup-specific estimand with a standard TMLE. We refer to this estimator as TMLE with pooled regression (TMLE-PR). TMLE-PR may improve upon a subgroup-only TMLE by obtaining a better estimate of the subgroup-specific outcome regression, which is a key component that drives the variance of the estimator.

Our second proposed estimator is inspired by data integration methods \citep{shi_data_2023,colnet_causal_2024,lin_data_2024}, originally developed and often applied in hybrid studies that augment an RCT with RWD to improve power. One such approach is Adaptive TMLE (A-TMLE) \citep{vanderlaan2024atmle}. Note that A-TMLE is a general estimation framework that aims to regularize the TMLE targeting step in challenging settings with, e.g., limited overlap in covariate distribution between the treatment and control arms \citep{van2023adaptive}. The version of A-TMLE considered here is a specific one tailored to the data integration setting \citep{vanderlaan2024atmle}, with the goal of attaining efficiency gains in finite samples. In the data integration setting, this A-TMLE decomposes the target estimand into (1) a pooled estimand, which combines information from the target and external sources ignoring potential biases, and (2) a bias estimand, which captures the bias induced by incorporating external data. This A-TMLE can then be viewed as performing a bias correction to the pooled estimand. In the subgroup analysis context, we may conceptualize the subgroup as the target source and patients outside the subgroup (but still in the RCT) as an external source, and apply data integration estimators to improve the efficiency of treatment effect estimation within the subgroup. This way, we draw on the internal infrastructure and environment of the RCT, where data quality, randomization, and measurement protocols are already harmonized, while correcting for subgroup differences that are not captured by measured baseline covariates. By regarding the rest of the trial population as ``external,'' we sidestep many of the comparability issues that arise when integrating RCT data with external RWD, while possibly still gaining efficiency in finite samples through statistical borrowing. The second estimator we propose is a version of this A-TMLE tailored to subgroup analysis. For subgroup analysis, the pooled estimand is naturally defined over the full RCT population, while the bias estimand captures the difference between the subgroup-specific target estimand and the pooled estimand. Such an A-TMLE can be used to borrow information from the non-subgroup individuals to improve the estimation of subgroup-specific treatment effects. We emphasize, however, that conducting subgroup analysis in the manner described in this article is estimator-agnostic. In principle, any data integration estimator developed that augments an RCT with RWD to improve efficiency of treatment effect estimations in trials could potentially be adapted to conduct subgroup analysis in this fashion. 

In simulation studies, TMLE-PR and A-TMLE show improved efficiency (as measured in mean-squared-error) compared to both unadjusted and covariate-adjusted subgroup-only estimators. In particular, A-TMLE tends to offer larger gain in efficiency compared to TMLE-PR in settings where baseline covariates are strongly predictive of subgroup membership. We further illustrate the utility and practicality of this strategy using the Liraglutide Effect and Action in Diabetes (LEADER) trial, a phase 3B international, randomized, double-blind, placebo-controlled cardiovascular safety trial sponsored by Novo Nordisk \citep{marso2013design,marso_liraglutide_2016}. The trial enrolled patients with type 2 diabetes mellitus (T2DM) and followed them for up to five years, measuring a composite primary endpoint of first evidence of major adverse cardiac event (MACE). While the trial enrolled 9,340 patients and was adequately powered for its primary safety endpoint, subgroup analyses were generally underpowered (as is common in such settings). No significant subgroup effects by relevant demographic characteristics, such as race, were found in the main trial analysis. While there is no expected biological effect of race, a recent exploratory meta-analysis of relevant glucagon-like peptide 1 receptor agonist (GLP1) cardiovascular trials, including LEADER, found a stronger effect among Asian race \citep{lee_comparative_2025}. Several mechanisms have been hypothesized, including greater glycemic response to incretin-based therapy in Asian populations, type 2 diabetes pathophysiology characterized by greater beta-cell dysfunction and metabolic risk at lower BMI, differences in diet and postprandial glucose excursions, and potentially larger cardiometabolic consequences of a given amount of weight loss \citep{kim_hba1c_2014,kang_asian_2019,davis_race_2022}. At the same time, evidence for these mechanisms is mixed; for example, an EXSCEL analysis did not find clear race-specific differences in short-term glycemic, blood-pressure, or lipid responses to once-weekly exenatide, apart from a larger pulse-rate response among Asian participants \citep{davis_race_2022}. While pooling different trials can increase power, it comes with expected drawbacks of treatment, population, and trial environment heterogeneity, which may pose the risk of biasing the subgroup-specific treatment effects. To address the question of differential race subgroup effects while maintaining the integrity of the original RCT design, we apply our proposed estimator to estimate subgroup-specific treatment effects within the LEADER trial, drawing strength from the remainder of the trial population while adjusting for subgroup-specific differences. 

This article is organized as follows. We begin in Section \ref{sec:setup} with the problem setup, where we introduce the observed data structure, define the subgroup-specific causal estimand, and state the identification assumptions. In Section \ref{sec:tmle_pr}, we present TMLE-PR. Next, in Section \ref{sec:atmle}, we develop the subgroup A-TMLE. Since A-TMLE is a relatively more sophisticated estimator to implement, we share some practical implementation recommendations in Section \ref{sec:atmle_practical}. We then report simulation studies in Section \ref{sec:simulation} comparing unadjusted and covariate-adjusted subgroup-only estimators with TMLE-PR and A-TMLE. Finally, in Section \ref{sec:LEADER}, we apply the proposed methods to subgroup analyses in the LEADER trial and conclude in Section \ref{sec:discussion} with a discussion.

\section{Problem setup}\label{sec:setup}

We begin by introducing notation for the observed data. We then define the causal estimand of interest, which is the risk difference of a study outcome---in the LEADER trial, MACE---between the treatment and control arm at a user-specified time $\tau$ among the subgroup of interest. Next, we present a set of identification assumptions under which this causal estimand can be expressed as a functional of the observed data distribution $P_0$, thereby yielding a statistical estimand. 

\subsection{Observed data}

Let $T$ denote the time to first occurrence of MACE, $C$ denote the censoring time, and let $\Delta\in\{0,1\}$ be the event indicator, where $\Delta=1$ indicates that the event is observed and $\Delta=0$ indicates right-censoring. The observed data for each individual can thus be written as $O=(S,W,A,\tilde{T}=\min(T,C),\Delta)$, where $S\in\mathcal{S}=\{0,1\}$ indicates subgroup membership, $W\in\mathcal{W}\subset\RR^d$ is a vector of baseline covariates, and $A\in\mathcal{A}=\{0,1\}$ is a binary treatment. In the LEADER trial, participants were closely monitored with minimal loss to follow-up \citep{marso_liraglutide_2016}. Therefore, we make the simplifying assumption of no loss to follow-up. Our outcome of interest is a binary indicator of MACE occurrence by a fixed time horizon $\tau$, namely, $Y=I(T\leq\tau)$, and for the remainder of the paper, we use this simplified observed data notation $O=(S,W,A,Y)$. We adopt the notation $Pf\equiv\int f(O)dP$ from empirical process literature. For example, $P_nf\equiv 1/n\sum_{i=1}^nf(O_i)$.

Our observed data consists of $n$ copies of $O$: $O_1,\dots,O_n$ drawn i.i.d. from a probability distribution $P_0$ that lies in a statistical model $\mathcal{M}$. For our statistical model, we have knowledge that the treatment in the trial is 1:1 randomized, i.e., $P_0(A=1\mid W)=P_0(A=1)=1/2$, and that the subgroup membership is independent of treatment, i.e., $S\perp A$. We assume the following Structural Causal Model (SCM) \citep{pearl_causality_2009}:
\begin{align*}
S&=f_S(U_S),\\
W&=f_W(S,U_W),\\
A&=f_A(U_A),\\
Y&=f_Y(W,S,A,U_Y),
\end{align*}
where $f_S$, $f_W$, $f_Y$ are unrestricted unknown functions and $(U_S,U_W,U_A,U_Y)$ is a vector of exogenous errors. The function $f_A$ is known and corresponds to a randomized assignment mechanism, that is, e.g., $A=I(U_A\leq 1/2)$ with $U_A\sim\text{Uniform}(0,1)$, implying that $A$ is independent of $(W,S)$. This SCM encodes knowledge that subgroup membership $S$ may influence baseline covariates $W$ and outcome $Y$; and baseline covariates $W$ and treatment $A$ may causally affect the outcome $Y$. Potential outcomes are defined via interventions on the treatment node $A$. Specifically, for $a\in\mathcal{A}$, we define $Y_1=f_Y(S,W,A=1,U_Y)$ and $Y_0=f_Y(S,W,A=0,U_Y)$, the outcomes that would have been observed for a randomly sampled individual had they, possibly contrary to fact, been assigned to treatment or control, respectively \citep{pearl_causality_2009}. 

\begin{remark}
We note that the notation used here mirrors \cite{vanderlaan2024atmle}, apart from $S$ denoting subgroup membership here instead of RCT membership in the original article.
\end{remark}

\subsection{Causal estimand and its identification}
Our target causal estimand $\Psi:\mathcal{M}\to\RR$ is the subgroup-specific risk difference of MACE between the treatment and control arm at time $\tau$, formally defined as
$$
\Psi(P)=E_{P_{W\mid S=1}}[E_P(Y_1-Y_0\mid S=1,W)],
$$
where the outer expectation marginalizes over the baseline covariate distribution among individuals in the subgroup of interest ($S=1$).

\begin{remark}\label{rem:avg_over_pooled}
In the original A-TMLE framework for data integration \citep{vanderlaan2024atmle}, the target causal estimand is defined as
$$
E_{P_W}[E_P(Y_1-Y_0\mid S=1,W)],
$$
which differs from our definition here in that the outer expectation is taken over the marginal baseline covariate distribution $P_W$, corresponding to the pooled population (that is, the RCT enrolled population in the current context). In contrast, our goal here is to characterize the treatment effect within the subgroup itself. Accordingly, we define the target estimand with respect to $P_{W\mid S=1}$, which is more appropriate for subgroup-specific inference. Notably, the estimand $\Psi$ introduced here was also mentioned in \cite{vanderlaan2024atmle}, and it is denoted by $\Psi_2$.
\end{remark}

The following identification assumptions are needed to nonparametrically identify the causal estimand:
\begin{itemize}
\item[\textbf{A1}] \textbf{No unmeasured confounding in the subgroup.} $A\perp(Y_0,Y_1)\mid W,S=1$;
\item[\textbf{A2}] \textbf{Positivity of treatment assignment in the subgroup.} For all $w\in\mathcal{W}$ among $S=1$, $0<P(A= a\mid S=1,W=w)< 1$ for $a\in\mathcal{A}$.
\end{itemize}
We note that the consistency assumption \citep{rubin_estimating_1974}, which states that $Y_a$ when $A=a$ for $a\in\mathcal{A}$ given $S=1$, is implied by the SCM. In our setting, we view the subgroup of interest as the population on which we intend to draw inferences, and treat observations from the remainder of the RCT population as auxiliary data ($S=0$) to support estimation. Therefore, due to randomization of the treatment assignment in the RCT, all these assumptions are satisfied by design. Under these assumptions, the causal estimand $\Psi$ is equal to the statistical estimand:
$$
\Psi(P_0)=E_{P_0}[E_{P_0}(Y\mid S=1,W,A=1)-E_{P_0}(Y\mid S=1,W,A=0)\mid S=1].
$$

\subsection{Preliminary}

Before introducing specific estimators, we briefly recall two key objects for the construction and analysis of a TMLE in our subgroup setting: the canonical gradient of the target parameter and the exact remainder. The canonical gradient of a pathwise differentiable parameter $\Psi:\mathcal{M}\to\RR$ determines the semi-parametric efficiency bound and, from a practical perspective, identifies the score equation that the targeting step is designed to solve. Once this gradient is available, it immediately suggests the corresponding clever covariate of a TMLE and thus the fluctuation submodel used to update an initial estimate of the relevant factors of the data likelihood. For this reason, our starting point is to derive the canonical gradient of the nonparametric subgroup target itself. The second object is the exact remainder term. For a distribution $P\in\mathcal{M}$, let $D_{\Psi,P}$ denote the canonical gradient of $\Psi$ at $P$, and define the exact remainder $R_{\Psi}(P,P_0)\coloneq \Psi(P)-\Psi(P_0)+P_0D_{\Psi,P}$. If $P_n^\star$ is a targeted plug-in estimator such that $P_nD_{\Psi,P_n^\star}=o_P(n^{-1/2})$, then we have the expansion
$$
\Psi(P_n^\star)-\Psi(P_0)=(P_n-P_0)D_{\Psi,P_n^\star}+R_{\Psi}(P_n^\star,P_0)+o_P(n^{-1/2}).
$$
Equivalently,
$$
\Psi(P_n^\star)-\Psi(P_0)-P_nD_{\Psi,P_0}=R_{\Psi}(P_n^\star,P_0)+(P_n-P_0)\{D_{\Psi,P_n^\star}-D_{\Psi,P_0}\}-P_nD_{\Psi,P_n^\star}.
$$
Thus, once the targeting step makes the empirical mean of the canonical gradient negligible, and the empirical process term is controlled either by Donsker conditions \citep{vaart_weak_1997} or by cross-fitting \citep{zheng_asymptotic_2010}, the main rate condition for asymptotic linearity reduces to showing that the exact remainder is $o_P(n^{-1/2})$. After introducing each estimator, we show the analytic form of the exact remainders and conditions for them to be $\sqrt{n}$-negligible.

\section{TMLE with pooled regression for subgroup analysis}\label{sec:tmle_pr}

As a first estimator, we consider a standard TMLE for the nonparametric subgroup target $\Psi$ whose initial outcome regression is fitted on the full trial sample and then evaluated at $S=1$. We refer to this estimator as TMLE with pooled regression (TMLE-PR). The scientific target remains the same subgroup-specific estimand $\Psi$. The borrowing occurs only through the initial outcome regression fit.

\subsection{Canonical gradient of the target parameter}

\begin{lemma}\label{lem:can_grad_tmle_pr}
The canonical gradient of the nonparametrically defined target estimand $\Psi$ at $P\in\mathcal{M}$ is 
\begin{align*}
D_{\Psi,P}(O)&=\frac{S}{P(S=1)}\bigl[Q_P(1,W,1)-Q_P(1,W,0)-\Psi(P)\bigr]\\
&+\frac{S}{P(S=1)}\left\{\frac{I(A=1)}{g_P(1\mid 1,W)}-\frac{I(A=0)}{g_P(0\mid 1,W)}\right\}\bigl[Y-Q_P(1,W,A)\bigr].
\end{align*}
\end{lemma}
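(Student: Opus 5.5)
I will derive the gradient by computing the pathwise derivative of $\Psi$ along a generic one-dimensional smooth submodel $\{P_\epsilon\}$ through $P$ with score $h(O)$. Then I will identify the resulting mean-zero, square-integrable function $D$ with $\frac{d}{d\epsilon}\Psi(P_\epsilon)\big|_{\epsilon=0}=P[Dh]$. I will write $\Psi(P)=\int\{Q_P(1,w,1)-Q_P(1,w,0)\}\,dP(w\mid S=1)$, with $Q_P(s,w,a)=E_P(Y\mid S=s,W=w,A=a)$ and $g_P(a\mid s,w)=P(A=a\mid S=s,W=w)$. The likelihood factorizes as $p(s)p(w\mid s)g(a\mid s,w)p(y\mid s,w,a)$, so the score decomposes orthogonally as $h=h_S+h_W+h_A+h_Y$, with each piece having conditional mean zero given its parents. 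Since $\Psi$ does not depend on $g$, the $h_A$ component contributes nothing. The model's knowledge that $g=1/2$ only restricts the tangent space in the $A$-direction, which will not change the projection argument below.

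\textbf{Step 1: the outcome regression factor.} Perturbing $p(y\mid s,w,a)$ gives $\frac{d}{d\epsilon}Q_{P_\epsilon}(1,w,a)=E_P[(Y-Q_P(1,w,a))h_Y(O)\mid S=1,W=w,A=a]$. Integrating over $P(w\mid S=1)$ and converting the conditional expectations into an unconditional expectation, I will reweight by $I(S=1)/P(S=1)$ and by $I(A=a)/g_P(a\mid 1,W)$. This yields $P[D_Y h]$ with $D_Y$ equal to the second line of the claimed gradient. Positivity A2 makes these weights finite.

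\textbf{Step 2: the covariate and subgroup factors.} The map $\epsilon\mapsto\int f(w)\,dP_\epsilon(w\mid S=1)$, with $f(w)=Q_P(1,w,1)-Q_P(1,w,0)$, is a ratio $E_{P_\epsilon}[Sf(W)]/P_\epsilon(S=1)$. Differentiating by the quotient rule gives
\[
\frac{1}{P(S=1)}P\bigl[S f(W) h\bigr]-\frac{\Psi(P)}{P(S=1)}P[S h]=P\left[\frac{S}{P(S=1)}\{f(W)-\Psi(P)\}h\right].
\]
This is the first line of the claimed gradient. It has mean zero, and it is orthogonal to $h_A$ and $h_Y$ because it depends only on $(S,W)$. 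Therefore its inner product with $h$ equals its inner product with $h_S+h_W$.

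\textbf{Step 3: membership in the tangent space.} Summing the two contributions gives $P[Dh]$ for every score $h$. The remaining point is to check that $D$ lies in the model's tangent space, since that is what makes it the canonical gradient rather than just a gradient. The $(S,W)$ part and the $Y$-residual part are unrestricted, because $f_S$, $f_W$ and $f_Y$ are unrestricted. $D$ has no component in the $A$-score direction, since $D_Y$ has conditional mean zero given $(S,W,A)$. So $D$ lies in the closure of $T_S\oplus T_W\oplus T_Y$, which is contained in the tangent space of $\mathcal{M}$. Hence $D$ is the canonical gradient, and it coincides with the nonparametric efficient influence function even though $g$ is known.

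\textbf{Main obstacle.} I expect the main difficulty to be bookkeeping in Step 2. The normalization by $P(S=1)$ generates the $-\Psi(P)$ centering term, and it is easy to drop this term or to mistakenly center by the marginal rather than the subgroup mean.
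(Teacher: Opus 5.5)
Your proposal is correct and follows essentially the paper's route. The paper defers the derivation to the cited A-TMLE reference, but in the appendix it recovers $D_{\Psi,P}$ by writing down the standard AIPW-type gradient of the numerator $N_\Psi(P)=E_P[S\,m_P(W)]$ and applying the quotient rule of Lemma~\ref{lem:ratio_remainder_conditional_mean}, which amounts to your Steps 1 and 2 combined; your Step 3 check that $D$ lies in $T_S\oplus T_W\oplus T_Y$, so it stays canonical in the model where $g$ is known by design, is a useful point the paper leaves implicit.
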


The derivation can be found in \cite{vanderlaan2024atmle}. This canonical gradient has a particularly favorable form in the RCT setting. Because treatment is randomized at baseline and subgroup membership $S$ is a pre-treatment variable, we still have $A\perp W\mid S=1$, and in fact under complete randomization, $A\perp (S,W)$. Therefore, $g_P(a\mid 1,W)=P(A=a\mid S=1,W)=P(A=a)$, so the treatment mechanism within the subgroup is known by design and does not vary with $W$. In the balanced two-arm trial considered here, $g_P(1\mid 1,W)=g_P(0\mid 1,W)=1/2$, and hence the inverse-weighting part of the canonical gradient is uniformly bounded and does not involve the kind of extreme propensity-score weighting that can arise in observational studies. The main source of difficulty is the factor $1/P(S=1)$, which reflects the fact that inference is restricted to a possibly small subgroup, together with the need to estimate the subgroup-specific outcome regression $Q_P(1,W,A)$ well.

\subsection{Constructing the TMLE-PR estimator}

To construct TMLE-PR, we first obtain an initial estimate $Q_n(s,w,a)$ of $Q_0(s,w,a)=E_0(Y\mid S=s,W=w,A=a)$ using the full RCT sample, with $S$ included as a predictor, and then evaluate this regression at $s=1$. The treatment mechanism $g_0$ is known by randomization and can therefore be plugged in directly (or estimated, which may improve efficiency further \citep{hirano_efficient_2003,balzer_adaptive_2016}). For $P_0(S=1)$, we can take the empirical mean of $S$, i.e., $P_n(S=1)$. Restricting the targeting step to the subgroup observations, we define the clever covariate 
$$
H_n(W,A)\coloneq
\frac{1}{P_n(S=1)}
\left\{
\frac{I(A=1)}{g_n(1\mid 1,W)}-\frac{I(A=0)}{g_n(0\mid 1,W)}
\right\}.
$$
For binary or bounded continuous outcomes, a standard logistic fluctuation submodel for the subgroup part of the regression is
$$
\text{logit}\,Q_{n,\epsilon}(1,W,A)=\text{logit}\,Q_n(1,W,A)+\epsilon H_n(W,A).
$$
The fluctuation parameter $\epsilon_n$ is estimated by logistic regression of $Y$ on $H_n$ using only the subgroup sample and with offset $\text{logit}\,Q_n(1,W,A)$. Let $Q_n^\star$ denote the updated regression. The resulting TMLE-PR estimator is then
$$
\Psi_n^{\mathrm{TMLE\mbox{-}PR}}=\frac{P_n\Bigl[S\bigl\{Q_n^\star(1,W,1)-Q_n^\star(1,W,0)\bigr\}\Bigr]}{P_n(S=1)}.
$$
Inference can be based on the empirical variance of the efficient influence curve in Lemma \ref{lem:can_grad_tmle_pr} evaluated at the targeted estimate.

\begin{remark}\label{rem:subgroup_tmle}
If the initial regression $Q_n$ were instead fitted using only the subgroup observations, then the same targeting construction reduces to an ordinary subgroup-only TMLE. Thus, TMLE-PR differs from the subgroup-only TMLE only in the initial regression step, making it a particularly simple benchmark for us to consider in simulation studies later when assessing how much one gains from pooled outcome-regression borrowing alone.
\end{remark}

We summarize the implementation of TMLE-PR in Algorithm \ref{alg:tmle_pr}.

\begin{breakablealgorithm}
\footnotesize
\caption{Pseudo-code for TMLE with pooled regression (TMLE-PR)}
\label{alg:tmle_pr}
\begin{algorithmic}[1]
\STATE Obtain an initial estimator $Q_n(s,w,a)$ of $Q_0(s,w,a)=E_0(Y\mid S=s,W=w,A=a)$ using the full RCT sample, with $S$ included as a predictor;
\STATE Set the treatment mechanism $g_n(a\mid 1,W)$ equal to the known randomization probability by design (or estimate it);
\STATE Define the clever covariate on the subgroup sample:
$$
H_n(W,A)\coloneq
\frac{1}{P_n(S=1)}
\left\{
\frac{I(A=1)}{g_n(1\mid 1,W)}-\frac{I(A=0)}{g_n(0\mid 1,W)}
\right\};
$$
\STATE Fit a univariate logistic regression using only the observations with $S=1$, with outcome $Y$, offset $\text{logit}\,Q_n(1,W,A)$, and covariate $H_n(W,A)$;
\STATE Let $\epsilon_n$ denote the fitted fluctuation coefficient, and update the subgroup part of the regression:
$$
\text{logit}\,Q_n^\star(1,W,A)
=
\text{logit}\,Q_n(1,W,A)+\epsilon_n H_n(W,A);
$$
for $s=0$, leave the regression unchanged, i.e., $Q_n^\star(0,W,A)=Q_n(0,W,A)$;
\STATE Compute the plug-in TMLE-PR estimator:
$$
\Psi_n^{\mathrm{TMLE\mbox{-}PR}}
=
\frac{
P_n\Bigl[
S\bigl\{Q_n^\star(1,W,1)-Q_n^\star(1,W,0)\bigr\}
\Bigr]
}{
P_n(S=1)
};
$$
\STATE Evaluate the estimated efficient influence curve
\begin{align*}
D_{\Psi,n}^\star(O)
&=\frac{S}{P_n(S=1)}\bigl[Q_n^\star(1,W,1)-Q_n^\star(1,W,0)-\Psi_n^{\mathrm{TMLE\mbox{-}PR}}\bigr]\\
&+\frac{S}{P_n(S=1)}
\left\{\frac{I(A=1)}{g_n(1\mid 1,W)}-\frac{I(A=0)}{g_n(0\mid 1,W)}\right\}\bigl[Y-Q_n^\star(1,W,A)\bigr];
\end{align*}
\STATE Compute the Wald-type 95\% confidence interval
$$
\Psi_n^{\mathrm{TMLE\mbox{-}PR}}
\pm
1.96\sqrt{P_n\bigl\{(D_{\Psi,n}^\star)^2\bigr\}/n}.
$$
\end{algorithmic}
\end{breakablealgorithm}

\subsection{Exact remainder of the target parameter $\Psi$}

We now study the exact remainder of the nonparametrically defined subgroup target
$$
\Psi(P)=E_P\bigl[Q_P(1,W,1)-Q_P(1,W,0)\mid S=1\bigr].
$$
Define $m_P(W)\coloneq Q_P(1,W,1)-Q_P(1,W,0)$,  $p_P\coloneq P(S=1)$, and $p_0\coloneq P_0(S=1)$. Let $N_\Psi(P)\coloneq E_P\bigl[S\,m_P(W)\bigr]$. Then, $\Psi(P)=N_\Psi(P)/p_P=E_P[m_P(W)\mid S=1]$.

\begin{lemma}\label{lem:ratio_remainder_conditional_mean}
Let $h_P(W)$ be a $P$-dependent measurable function of $W$ and define
$$
\Phi_h(P):=E_P[h_P(W)\mid S=1]=\frac{N_h(P)}{p_P}
$$
and $N_h(P):=E_P[S\,h_P(W)]$. Suppose $D_{N_h,P}$ is the canonical gradient of the numerator parameter $N_h(P)$ at $P$, and define its exact remainder $R_{N_h}(P,P_0):=N_h(P)-N_h(P_0)+P_0D_{N_h,P}$. Then, the exact remainder of $\Phi_h$ at $P$ relative to $P_0$ is
\begin{align*}
R_{\Phi_h}(P,P_0)
&:=\Phi_h(P)-\Phi_h(P_0)+P_0D_{\Phi_h,P}\\
&=\frac{1}{p_P}R_{N_h}(P,P_0)+\frac{p_P-p_0}{p_P}\{\Phi_h(P)-\Phi_h(P_0)\},
\end{align*}
where
$$
D_{\Phi_h,P}(O)=\frac{D_{N_h,P}(O)-\Phi_h(P)\{S-p_P\}}{p_P}.
$$
\end{lemma}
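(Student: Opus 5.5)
The plan has two parts: first justify the stated form of $D_{\Phi_h,P}$ through the delta method for a ratio, then verify the remainder identity by a direct algebraic computation.

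\textbf{Step 1 (gradient of the ratio).} The map $P\mapsto p_P=E_P[S]$ is a mean, so in the nonparametric model its canonical gradient is $S-p_P$. Its exact remainder is identically zero, since $p_P-p_0+P_0(S-p_P)=0$. Because $\Phi_h=N_h/p$, the chain rule for pathwise derivatives gives
$$D_{\Phi_h,P}=\frac{D_{N_h,P}}{p_P}-\frac{N_h(P)}{p_P^2}(S-p_P)=\frac{D_{N_h,P}-\Phi_h(P)(S-p_P)}{p_P},$$
which is the displayed formula.

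\textbf{Step 2 (plug into the remainder).} Substitute this gradient into the definition of $R_{\Phi_h}$. Using $P_0(S-p_P)=p_0-p_P$, we get
$$P_0D_{\Phi_h,P}=\frac{P_0D_{N_h,P}}{p_P}-\frac{\Phi_h(P)(p_0-p_P)}{p_P}.$$
Next replace $P_0D_{N_h,P}$ by $R_{N_h}(P,P_0)-N_h(P)+N_h(P_0)$. Then write $N_h(P)=p_P\Phi_h(P)$ and $N_h(P_0)=p_0\Phi_h(P_0)$. This yields
$$R_{\Phi_h}=\Phi_h(P)-\Phi_h(P_0)+\frac{R_{N_h}}{p_P}-\Phi_h(P)+\frac{p_0}{p_P}\Phi_h(P_0)-\frac{(p_0-p_P)}{p_P}\Phi_h(P).$$

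\textbf{Step 3 (collect terms).} The two $\Phi_h(P)$ terms cancel. Combining the $\Phi_h(P_0)$ terms gives $\Phi_h(P_0)(p_0-p_P)/p_P$, and adding the remaining term gives
$$\frac{R_{N_h}}{p_P}+\frac{p_P-p_0}{p_P}\{\Phi_h(P)-\Phi_h(P_0)\},$$
which is exactly the claim.

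The only delicate point is Step 1: confirming that the canonical gradient of the ratio is obtained by the delta-method combination of the gradients of the numerator and denominator. This holds because both are pathwise differentiable in the same (nonparametric) model and $p_P>0$. The rest is routine bookkeeping, where one just needs to track the sign in $P_0(S-p_P)=p_0-p_P$ carefully.
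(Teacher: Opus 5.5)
Your proposal is correct and follows the paper's proof essentially step for step: the paper also obtains $D_{\Phi_h,P}$ by the quotient rule and then substitutes it into the remainder using $P_0\{S-p_P\}=p_0-p_P$. The only cosmetic difference is in the final bookkeeping: the paper first isolates $R_{N_h}(P,P_0)/p_P$ and then combines $N_h(P_0)/p_P-N_h(P_0)/p_0$, whereas you rewrite each $N_h$ as $p\,\Phi_h$ before collecting terms.
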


\begin{lemma}\label{lem:subgroup_remainder_np_tmle}
The exact remainder $R_\Psi(P,P_0)\coloneq \Psi(P)-\Psi(P_0)+P_0D_{\Psi,P}$ for the nonparametrically defined target parameter $\Psi$ is
\begin{align*}
R_\Psi(P,P_0)&=\frac{1}{p_P}E_0\left[S\cdot\frac{g_P-g_0}{g_P}(1\mid 1,W)\bigl\{Q_P(1,W,1)-Q_0(1,W,1)\bigr\}
\right]\\
&-\frac{1}{p_P}E_0\left[S\cdot\frac{g_P-g_0}{g_P}(0\mid 1,W)\bigl\{Q_P(1,W,0)-Q_0(1,W,0)\bigr\}\right]\\
&+\frac{p_P-p_0}{p_P}\{\Psi(P)-\Psi(P_0)\}.
\end{align*}
\end{lemma}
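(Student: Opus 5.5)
The plan is to apply Lemma \ref{lem:ratio_remainder_conditional_mean} with $h_P=m_P$, so that $\Phi_h=\Psi$ and $N_h=N_\Psi$. That lemma already produces the last term $\frac{p_P-p_0}{p_P}\{\Psi(P)-\Psi(P_0)\}$. What remains is to compute the numerator remainder $R_{N_\Psi}(P,P_0)$ and show that it equals the two $E_0[S\cdot\ldots]$ terms, before division by $p_P$.

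\textbf{Step 1: gradient of the numerator.} I will first identify $D_{N_\Psi,P}$. From the form of Lemma \ref{lem:can_grad_tmle_pr}, I take
$$
D_{N_\Psi,P}(O)=S\{m_P(W)-N_\Psi(P)\}+S\left\{\frac{I(A=1)}{g_P(1\mid 1,W)}-\frac{I(A=0)}{g_P(0\mid 1,W)}\right\}\{Y-Q_P(1,W,A)\}.
$$
Here the first term is the gradient of $E_P[Sm(W)]$ for fixed $m$, which can be written as $Sm_P(W)-N_\Psi(P)$. To check consistency, I substitute this into $D_{\Phi_h,P}=\{D_{N_h,P}-\Phi_h(P)(S-p_P)\}/p_P$. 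Since $N_\Psi/p_P=\Psi$, the result reproduces $D_{\Psi,P}$ of Lemma \ref{lem:can_grad_tmle_pr}, which confirms the choice.

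\textbf{Step 2: numerator remainder.} I then compute
$$
R_{N_\Psi}=N_\Psi(P)-N_\Psi(P_0)+P_0D_{N_\Psi,P}.
$$
The constant $-N_\Psi(P)$ cancels against $N_\Psi(P)$. The first term of the gradient contributes $E_0[Sm_P(W)]$, and $-N_\Psi(P_0)=-E_0[Sm_0(W)]$, so together they give $E_0[S\{m_P-m_0\}(W)]$.

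For the residual term, I condition on $(S,W)$ with $S=1$ and use $E_0[I(A=a)(Y-Q_P(1,W,a))\mid S=1,W]=g_0(a\mid 1,W)\{Q_0-Q_P\}(1,W,a)$. This gives
$$
E_0\Big[S\sum_a(\pm)\frac{g_0}{g_P}(a\mid 1,W)\{Q_0-Q_P\}(1,W,a)\Big],
$$
with sign $+$ for $a=1$ and $-$ for $a=0$.

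Next I write $m_P-m_0=\sum_a(\pm)\{Q_P-Q_0\}(1,W,a)$ and combine the two pieces. Each arm then carries the factor $1-\frac{g_0}{g_P}=\frac{g_P-g_0}{g_P}$, which yields exactly the two displayed terms. Dividing by $p_P$, as Lemma \ref{lem:ratio_remainder_conditional_mean} prescribes, completes the argument.

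\textbf{Main obstacle.} The delicate part is Step 1: the constant centering in $D_{N_\Psi,P}$ must be placed correctly so that it combines with the $-\Phi_h(P)(S-p_P)$ correction in Lemma \ref{lem:ratio_remainder_conditional_mean} to reproduce Lemma \ref{lem:can_grad_tmle_pr}. If it is not, the constant terms will not cancel in Step 2. The algebra in Step 2 is routine once the conditioning on $(S=1,W)$ is done carefully.
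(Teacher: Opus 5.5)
Your proposal follows the paper's proof essentially step for step: you write down $D_{N_\Psi,P}$, compute $R_{N_\Psi}$ by conditioning on $(S=1,W)$ to produce the factors $1-g_0/g_P$, and then invoke Lemma~\ref{lem:ratio_remainder_conditional_mean} with $h_P=m_P$. The one slip is in your displayed numerator gradient, where $S\{m_P(W)-N_\Psi(P)\}$ should read $S\,m_P(W)-N_\Psi(P)$; the displayed version is not mean zero and would break both your consistency check and the cancellation in Step 2, but the corrected form is the one you state in words and actually use afterward.
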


The exact remainder in this subgroup RCT setting is particularly favorable. First, if the treatment mechanism is known by design and correctly plugged in, i.e., $g_P(a\mid 1,W)=g_0(a\mid 1,W)$ for $a\in\{0,1\}$, then the numerator remainder vanishes exactly: $R_{N_\Psi}(P,P_0)=0$, regardless of misspecification of $Q_P(1,W,a)$. What remains is only the term
$$
\frac{p_P-p_0}{p_P}\{\Psi(P)-\Psi(P_0)\}.
$$
Since $p_P=P(S=1)$ is a one-dimensional mean parameter, it is estimated at the parametric rate $n^{-1/2}$. Therefore, this denominator term is $o_P(n^{-1/2})$ provided $\Psi(P_n^\star)-\Psi(P_0)=o_P(1)$. If $p_P=p_0$ is correctly specified, this denominator term vanishes exactly. So, in this RCT subgroup-analysis setting, we generally expect the exact remainder of TMLE-PR to be small, and its numerator component is in fact exactly zero once the randomized treatment mechanism is correctly specified.

\section{Adaptive TMLE for subgroup analysis}\label{sec:atmle}

Adaptive TMLE (a.k.a ``adaptive debiased machine learning'') is a statistical estimation framework originally proposed in \cite{van2023adaptive} to improve the stability of TMLE in challenging scenarios, e.g., settings with limited overlap in covariates between the treatment and control groups. A-TMLE regularizes the targeting step in TMLE and often results in more stable variance estimates in finite samples in those scenarios. A version of A-TMLE was developed in the context of RCT-RWD integration to improve efficiency \citep{vanderlaan2024atmle}. We provide a brief review of the key elements of this A-TMLE and adapt them to the subgroup analysis setting.

\subsection{Decomposition of the estimand and its corresponding projection estimand}

In the A-TMLE for RCT-RWD integration, one first decomposes the target estimand into a pooled estimand and a bias estimand \citep{vanderlaan2024atmle}. Here, we follow the same decomposition strategy. For any probability distribution $P\in\mathcal{M}$, define
$$
Q_P(s,w,a):=E_P(Y\mid S=s,W=w,A=a),\quad
\bar Q_P(w,a):=E_P(Y\mid W=w,A=a),
$$
and let
$$
\Pi_P(s\mid w,a):=P(S=s\mid W=w,A=a),\quad
\tau_{S,P}(w,a):=Q_P(1,w,a)-Q_P(0,w,a).
$$
We first define the pooled estimand $\tilde\Psi(P)
:=
E_{P_{W\mid S=1}}
[\bar Q_P(W,1)-\bar Q_P(W,0)]$. This estimand averages over the covariate distribution of the target subgroup, but uses the outcome regression learned from the pooled sample. This estimand is a more efficient target from an estimation perspective because the inner outcome regression allows us to leverage both $S=1$ and $S=0$ (i.e., the entire RCT data) to estimate. However, for
example, when the subgroup population has a different distribution of unmeasured baseline covariates than the rest of the population, and these covariates modify the treatment effect on an absolute scale, or if the variable used to define the subgroup itself modifies the treatment effect directly, then the estimand $\tilde{\Psi}$ may not necessarily equal our target of interest, $\Psi$. To capture this difference, we define the bias estimand as $\Psi^\#(P)=\tilde{\Psi}(P)-\Psi(P)$, which captures precisely the difference between the pooled estimand and our target estimand. It has been shown in \cite{vanderlaan2024atmle} (replacing the outer expectation $P_W$ with $P_{W\mid S=1}$ in the proof of Lemma 2) that the bias estimand $\Psi^\#$ has an analytic form:
$$
\Psi^\#(P):=E_{P_{W\mid S=1}}[\Pi_P(0\mid W,0)\tau_{S,P}(W,0)-\Pi_P(0\mid W,1)\tau_{S,P}(W,1)],
$$
a combination of the bias function $\tau_{S,P}$ weighted by the subgroup enrollment mechanism $\Pi_P$. Now, in order to obtain an estimator for $\Psi$, one may obtain estimators for $\tilde{\Psi}$ and $\Psi^\#$, respectively, and take the difference.

A key feature of A-TMLE is that instead of directly estimating nonparametrically the estimands $\tilde{\Psi}$ and $\Psi^\#$, it estimates the so-called projection estimands. Even though it does not directly estimate the nonparametrically defined target estimand, valid inference is still possible because the difference between the projection estimand and the target estimand is shown to be second-order under certain conditions \citep{van2023adaptive}. We now formally define the projection estimands. Let $\mathcal T_{A,w}=\{\tau_{A,\beta}:\beta\in\mathcal{B}_A\}$ be a working model for the pooled conditional average treatment effect (CATE) $\tau_{A,P}(w):=\bar Q_P(w,1)-\bar Q_P(w,0)$, and let the associated semi-parametric regression working model be $\bar{\mathcal Q}_w
=
\bigl\{\theta(w)+a\,\tau_{A,\beta}(w): \theta,\beta\bigr\}$, where $\theta(\cdot)$ is some unspecified function of $w$. Here, the subscript $w$ denotes ``working model''; throughout the article, any appearance of $w$ in the subscript indicates that the corresponding object is associated with a working model. Let $\bar{g}_P(a\mid W):=P(A=a\mid W)$. Define $\tau_{A,\beta_A(P)}$ as the weighted squared-error projection of $\tau_{A,P}$ onto $\mathcal T_{A,w}$:
$$
\beta_A(P)\in
\arg\min_{\beta}
E_P\!\left[
\bar{g}_P(1\mid W)\{1-\bar{g}_P(1\mid W)\}
\bigl\{\tau_{A,P}(W)-\tau_{A,\beta}(W)\bigr\}^2
\right].
$$
Equivalently, this is the projection induced by the squared-error projection of $\bar Q_P$ onto $\bar{\mathcal Q}_w$ \citep{crump_moving_2006,li_addressing_2019,damour_overlap_2021,morzywolek_on_2023,van2023adaptive}. The corresponding projection parameter for $\tilde{\Psi}$ is then
$$
\tilde\Psi_{\mathcal M_{A,w}}(P)
:=
E_{P_{W\mid S=1}}
\bigl[\tau_{A,\beta_A(P)}(W)\bigr].
$$
Similarly, let $\mathcal T_{S,w}=\{\tau_{S,\beta}:\beta\in\mathcal B_S\}$ be a working model for the conditional average subgroup effect $\tau_{S,P}(W,A)$, with associated semi-parametric regression working model $\mathcal Q_w
=
\bigl\{\theta(W,A)+S\,\tau_{S,\beta}(W,A): \theta,\beta\bigr\}$, where $\theta(\cdot,\cdot)$ is now an unspecified function of $(w,a)$. Define $\tau_{S,\beta_S(P)}$ as the weighted squared-error projection of $\tau_{S,P}$ onto $\mathcal T_{S,w}$:
$$
\beta_S(P)\in
\arg\min_{\beta}
E_P\!\left[
\Pi_P(1\mid W,A)\{1-\Pi_P(1\mid W,A)\}
\bigl\{\tau_{S,P}(W,A)-\tau_{S,\beta}(W,A)\bigr\}^2
\right].
$$
Equivalently, this is induced by the squared-error projection of $Q_P$ onto $\mathcal Q_w$. The corresponding projection parameter for $\Psi^\#$ is then
$$
\Psi^\#_{\mathcal M_{S,w}}(P)
:=
E_{P_{W\mid S=1}}
\Bigl[
\Pi_P(0\mid W,0)\tau_{S,\beta_S(P)}(W,0)
-
\Pi_P(0\mid W,1)\tau_{S,\beta_S(P)}(W,1)
\Bigr].
$$
We therefore obtain the projected version of our target parameter:
$$
\Psi_{\mathcal{M}_w}(P)
:=
\tilde\Psi_{\mathcal M_{A,w}}(P)-\Psi^\#_{\mathcal M_{S,w}}(P),
$$
where $\mathcal{M}_{A,w}\coloneq\{P\in\mathcal{M}:\tau_{A,P}\in\mathcal{T}_{A,w}\}$, $\mathcal{M}_{S,w}\coloneq\{P\in\mathcal{M}:\tau_{S,P}\in\mathcal{T}_{S,w}\}$, and $\mathcal{M}_w\coloneq\mathcal{M}_{A,w}\cap\mathcal{M}_{S,w}$.

\subsection{Canonical gradient of the projection estimand $\Psi_{\mathcal{M}_w}$ at $P\in\mathcal{M}_w$}

We consider the canonical gradient of $\tilde{\Psi}_{\mathcal{M}_{A,w}}$ (Lemma \ref{lem:can_grad_psi_tilde}) and $\Psi^\#_{\mathcal{M}_{S,w}}$ (Lemma \ref{lem:can_grad_psi_pound}), respectively. Their derivations are provided in Appendix \ref{app:can_grads}.
\begin{lemma}\label{lem:can_grad_psi_tilde}
We define
\begin{gather*}
I_{A,P}=E_P\Big[\bar{g}_P(1-\bar{g}_P)(1\mid W)\phi_A\phi_A^\top(W)\Big]\quad\text{and}\\
\bar{Q}_{w,P}(W,A)=E_P(Y\mid W)+(A-\bar{g}_P(1\mid W))\sum_j\beta_{A,P}(j)\phi_{A,j}(W).
\end{gather*}
The canonical gradient of $\tilde{\Psi}_{\mathcal{M}_{A,w}}$ at $P$ is given by
$$
D_{\tilde{\Psi},\mathcal{M}_{A,w},P}=\frac{S}{P(S=1)}\left[\tau_{A,\beta_{A,P}}(W)-\tilde{\Psi}_{\mathcal{M}_{A,w}}(P)\right]+\frac{1}{P(S=1)}\Big\{E_P[S\phi_A(W)]\Big\}^\top D_{A,\beta_A,P},
$$
where
$$
D_{A,\beta_A,P}=I_{A,P}^{-1}(A-\bar{g}_P(1\mid W))\phi_A(W)(Y-\bar{Q}_{w,P}(W,A)).
$$
\end{lemma}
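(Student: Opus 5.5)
We obtain the canonical gradient by writing $\tilde\Psi_{\mathcal M_{A,w}}(P)=N(P)/p_P$ with $N(P):=E_P[S\,\tau_{A,\beta_A(P)}(W)]$ and $p_P=P(S=1)$, and then using the ratio structure of Lemma \ref{lem:ratio_remainder_conditional_mean}. That lemma gives the gradient of the conditional mean $\Phi_h(P)=N_h(P)/p_P$ as
$$
D_{\Phi_h,P}=\frac{D_{N_h,P}-\Phi_h(P)(S-p_P)}{p_P},
$$
so it suffices to compute the gradient of $N(P)$ at $P\in\mathcal M_{A,w}$. Here $h_P=\tau_{A,\beta_A(P)}$ and we take a linear working model $\tau_{A,\beta}(W)=\sum_j\beta(j)\phi_{A,j}(W)=\beta^\top\phi_A(W)$. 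Thus $N(P)=\beta_A(P)^\top E_P[S\phi_A(W)]$, and its pathwise derivative along a score $s$ splits by the product rule into two pieces.
\begin{enumerate}
\item The term from perturbing the distribution of $(S,W)$ with $\beta$ fixed. This has gradient $S\,\beta_A(P)^\top\phi_A(W)-N(P)=S\tau_{A,\beta_A(P)}(W)-N(P)$.
\item The term $\{E_P[S\phi_A(W)]\}^\top\,\frac{d}{d\epsilon}\beta_A(P_\epsilon)$, which requires the influence function of $\beta_A$.
\end{enumerate}

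For the second piece, we characterize $\beta_A(P)$ as the solution of the population score equation of the partially linear (R-learner) projection. Because the weighted projection of $\tau_{A,P}$ equals the least-squares projection of $\bar Q_P$ onto $\{\theta(W)+A\beta^\top\phi_A(W)\}$, $\beta_A(P)$ solves
$$
E_P\bigl[(A-\bar g_P(1\mid W))\phi_A(W)\{Y-E_P(Y\mid W)-(A-\bar g_P(1\mid W))\beta^\top\phi_A(W)\}\bigr]=0.
$$
Differentiating this estimating equation along a path and inverting the derivative matrix gives $I_{A,P}^{-1}$, since $E[(A-\bar g)^2\mid W]=\bar g(1-\bar g)$. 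The candidate gradient is then
$$
D_{A,\beta_A,P}=I_{A,P}^{-1}(A-\bar g_P)\phi_A\bigl(Y-\bar Q_{w,P}(W,A)\bigr),
$$
where $\bar Q_{w,P}(W,A)=E_P(Y\mid W)+(A-\bar g_P)\beta^\top\phi_A$.

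The main obstacle is verifying that the derivatives of the nuisances $E_P(Y\mid W)$ and $\bar g_P$ contribute nothing. This is the usual Neyman-orthogonality check for the partially linear score. Perturbing $E_P(Y\mid W)$ gives a contribution $E[(A-\bar g)\phi_A\,\delta(W)]=0$. Perturbing $\bar g$ gives terms proportional to $E[\phi_A\,\delta(W)\{Y-E(Y\mid W)-(A-\bar g)\beta^\top\phi_A\}]$ and $E[(A-\bar g)\phi_A\,\delta(W)\beta^\top\phi_A]$. These vanish after conditioning on $W$, which uses $P\in\mathcal M_{A,w}$ (so that the residual has conditional mean zero given $W$) together with $E[A-\bar g\mid W]=0$. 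Since the model is otherwise nonparametric apart from the working-model restriction, the resulting influence function lies in the tangent space, so it is the canonical gradient rather than merely a gradient. We would cite van der Laan's projection-parameter derivation for this point.

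Combining the pieces, the gradient of $N$ is
$$
S\tau_{A,\beta_A}(W)-N(P)+\{E_P[S\phi_A]\}^\top D_{A,\beta_A,P}.
$$
Substituting into the ratio formula, the centering terms combine as
$$
-N(P)-\tilde\Psi(P)(S-p_P)=-S\,\tilde\Psi(P),
$$
using $N(P)=\tilde\Psi(P)\,p_P$. Dividing by $p_P=P(S=1)$ then yields exactly the stated expression.
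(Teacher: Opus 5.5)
Your argument is correct and is essentially the paper's proof: the paper likewise applies the quotient rule to $E_P[S\,\tau_{A,\beta_{A,P}}(W)]/P(S=1)$. It splits the numerator's derivative into an $(S,W)$-part plus $E_P[S\phi_A(W)]^\top\frac{d}{d\epsilon}\beta_{A,P_\epsilon}$ and combines the centering terms as you do. The only difference is that it takes $D_{A,\beta_A,P}$ as the known influence function of $\beta_A$, whereas you derive it from the partially linear score with an orthogonality check.

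Two corrections to your asides. First, that orthogonality check does not use $P\in\mathcal{M}_{A,w}$, because $Y-E_P(Y\mid W)-(A-\bar g_P(1\mid W))\beta^\top\phi_A(W)$ has conditional mean zero given $W$ under every $P$. Second, canonicity comes from treating the projection parameter as defined on the nonparametric model, as the paper's arbitrary-score derivation does, where the gradient is unique. It does not come from $D$ lying in the tangent space of $\mathcal{M}_{A,w}$, which generally fails: under heteroskedastic $\mathrm{Var}(Y\mid W,A)$, the unweighted partially linear score is not the efficient score in $\mathcal{M}_{A,w}$.
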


\begin{lemma}\label{lem:can_grad_psi_pound}
Let $\bar{\Pi}_P(1\mid W)\coloneq P(S=1\mid W)$. We define
\begin{align*}
I_{S,P}&=E_P\Big[\Pi_P(1-\Pi_P)(1\mid W,A)\phi_S\phi_S^\top(W,A)\Big]\quad\text{and}\\
Q_{w,P}(S,W,A)&=E_P(Y\mid W,A)+(S-\Pi_P(1\mid W,A))\sum_j\beta_{S,P}(j)\phi_{S,j}(W,A).
\end{align*}
The canonical gradient of the bias projection parameter $\Psi^\#_{\mathcal{M}_{S,w}}$ at $P$ is given by
$$
D_{\Psi^\#,\mathcal{M}_{S,w},P}=D_{\Psi^\#,\mathcal{M}_{S,w},P_W,P}+D_{\Psi^\#,\mathcal{M}_{S,w},\Pi,P}+D_{\Psi^\#,\mathcal{M}_{S,w},\beta_S,P},
$$
where
\begin{align*}
D_{\Psi^\#,\mathcal{M}_{S,w},P_W,P}&=\frac{S}{P(S=1)}\bigg[\Pi_P(0\mid W,0)\tau_{S,\beta_{S,P}}(W,0)-\Pi_P(0\mid W,1)\tau_{S,\beta_{S,P}}(W,1)-\Psi_{\mathcal{M}_{S,w}}^\#(P)\bigg],\\
D_{\Psi^\#,\mathcal{M}_{S,w},\Pi,P}&=\frac{\bar{\Pi}_P(1\mid W)}{P(S=1)}\left[\frac{A}{\bar{g}_P(1\mid W)}\tau_{S,\beta_{S,P}}(W,1)-\frac{1-A}{\bar{g}_P(0\mid W)}\tau_{S,\beta_{S,P}}(W,0)\right](S-\Pi_P(1\mid W,A)),\\
D_{\Psi^\#,\mathcal{M}_{S,w},\beta,P}&=\frac{1}{P(S=1)}\bigg[E_P[S\cdot\Pi_P(0\mid W,0)\phi_S(W,0)]-E_P[S\cdot\Pi_P(0\mid W,1)\phi_S(W,1)]\bigg]^\top D_{S,\beta_S,P}.
\end{align*}
where
$$
D_{S,\beta_S,P}=I_{S,P}^{-1}(S-\Pi_P(1\mid W,A))\phi_{S}(W,A)(Y-Q_{w,P}(S,W,A)).
$$
\end{lemma}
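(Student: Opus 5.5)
We derive the canonical gradient of $\Psi^\#_{\mathcal{M}_{S,w}}$ by writing it as a ratio and using the delta method. The numerator is
$$
N^\#(P)\coloneq E_P\bigl[S\{\Pi_P(0\mid W,0)\tau_{S,\beta_S(P)}(W,0)-\Pi_P(0\mid W,1)\tau_{S,\beta_S(P)}(W,1)\}\bigr],
$$
so that $\Psi^\#_{\mathcal{M}_{S,w}}(P)=N^\#(P)/p_P$. By the same argument as in Lemma \ref{lem:ratio_remainder_conditional_mean}, the gradient of the ratio is $\{D_{N^\#,P}-\Psi^\#_{\mathcal{M}_{S,w}}(P)(S-p_P)\}/p_P$. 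It therefore suffices to compute the gradient of $N^\#$.

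We do this by a pathwise derivative along a score $h$ through $P$. The map $N^\#$ depends on $P$ through three objects: the marginal law of $(S,W)$ in the outer expectation, the enrollment mechanism $\Pi_P(\cdot\mid W,a)$, and the coefficient $\beta_S(P)$. The chain rule splits the derivative into three pieces, which will give the three terms of the lemma.

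\emph{Outer expectation.} Differentiating the outer expectation with $\Pi_P$ and $\beta_S$ held fixed gives the centered term $S[\,\cdot\,]-N^\#(P)$. Combined with the $-\Psi^\#(S-p_P)$ correction from the ratio, this becomes $D_{\Psi^\#,\mathcal{M}_{S,w},P_W,P}$.

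\emph{Enrollment mechanism.} The gradient of $\Pi_P(1\mid w,a)$ is $I(W=w,A=a)\{S-\Pi_P(1\mid w,a)\}/p(w,a)$. Integrating against $S$ with respect to the law of $(S,W)$ requires care: $E_P[S f(W)]=E_P[\bar\Pi_P(1\mid W)f(W)]$, and the evaluation at $A=a$ produces a factor $I(A=a)/\bar g_P(a\mid W)$. Since $\partial\Pi(0\mid\cdot)=-\partial\Pi(1\mid\cdot)$, the signs flip. This yields $\bar\Pi_P(1\mid W)\bigl[\tfrac{A}{\bar g}\tau_S(W,1)-\tfrac{1-A}{\bar g}\tau_S(W,0)\bigr](S-\Pi_P(1\mid W,A))/p_P$, which is $D_{\Psi^\#,\mathcal{M}_{S,w},\Pi,P}$.

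\emph{Coefficient $\beta_S$.} With the linear working model $\tau_{S,\beta}=\beta^\top\phi_S$, the derivative is the row vector $E_P[S\{\Pi_P(0\mid W,0)\phi_S(W,0)-\Pi_P(0\mid W,1)\phi_S(W,1)\}]$ times the influence function of $\beta_S(P)$. To obtain that influence function, write the first-order condition of the projection. Using the partially linear representation, $\beta_S$ solves
$$
E_P\bigl[(S-\Pi_P(1\mid W,A))\phi_S(W,A)\{Y-E_P(Y\mid W,A)-(S-\Pi_P)\beta^\top\phi_S\}\bigr]=0.
$$
This is the standard R-learner, or efficient partially linear, moment. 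Its influence function is $I_{S,P}^{-1}(S-\Pi)\phi_S(Y-Q_{w,P})$. The perturbations of the nuisances $E(Y\mid W,A)$ and $\Pi$ drop out by Neyman orthogonality, because their derivatives multiply mean-zero residuals conditional on $(W,A)$. This piece gives $D_{\Psi^\#,\mathcal{M}_{S,w},\beta,P}$, and the same argument underlies Lemma \ref{lem:can_grad_psi_tilde}.

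\emph{Canonicality.} Finally, we must check that the resulting sum is the canonical gradient in $\mathcal{M}_{S,w}$, not merely a gradient in $\mathcal{M}$. At $P\in\mathcal{M}_{S,w}$ we have $\tau_{S,P}=\tau_{S,\beta_S(P)}$, so the projection functional agrees with the model-restricted parameter. The sum must then lie in the tangent space of $\mathcal{M}_{S,w}$. The first two terms are scores for the marginal of $(S,W)$ and for $\Pi$, and the marginal of $A$ is known, which is fine. The third term is the efficient score direction for $\beta$ in the partially linear model. Hence the sum lies in the tangent space.

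I expect the main obstacle to be the $\Pi$-term bookkeeping. The outer expectation is over $P_{W\mid S=1}$ while $\Pi$ conditions on $(W,A)$, so one has to convert $S$-weighting into $\bar\Pi(1\mid W)$-weighting correctly and handle the evaluation at both $A=0$ and $A=1$ via inverse weighting by $\bar g$. A secondary risk is the claim that the dependence of $I_{S,P}$ and of the nuisance regressions on $P$ contributes nothing. Orthogonality should settle this, but it needs to be verified explicitly.
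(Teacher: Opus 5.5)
Your quotient-rule reduction and your three-way chain rule match the paper's argument exactly, and they are correct. The three pieces are the outer law of $(S,W)$, the enrollment mechanism, and the coefficient $\beta_S$. For the enrollment mechanism you use the pointwise gradient of $\Pi_P(1\mid w,a)$, converted to $\bar\Pi_P(1\mid W)$-weighting with the factor $I(A=a)/\bar g_P(a\mid W)$.

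The one sub-step you handle differently is the influence function of $\beta_S(P)$. The paper imports it from earlier work, while you derive it from the orthogonal partially linear moment. Your derivation is valid:
\begin{itemize}
\item the first-order condition of the weighted projection coincides with your moment, because $\mathrm{Cov}_P(S,Y\mid W,A)=\Pi_P(1-\Pi_P)(1\mid W,A)\,\tau_{S,P}(W,A)$;
\item the $\bar Q$- and $\Pi$-perturbations vanish for the reason you give;
\item $I_{S,P}$ enters only as the Jacobian in $\beta$, so your ``secondary risk'' does not materialize.
\end{itemize}

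The step that fails is your final ``canonicality'' paragraph, and you should delete it.

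\emph{It is unnecessary.} The lemma concerns the projection parameter defined on the larger, locally unrestricted model; the paper's remark closing its discussion of the randomized-trial simplifications says so. A pathwise derivative along arbitrary scores in that model yields the unique, hence canonical, gradient. The paper stops at that point.

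\emph{The claim you make instead is false in general.} Take $P\in\mathcal{M}_{S,w}$, where $Q_{w,P}=Q_P$, and abbreviate $\pi=\Pi_P(1\mid W,A)$. Choose any $k(W,A)$ with $E_P[k\,\pi(1-\pi)\phi_S]=0$.
\begin{itemize}
\item Then $(Y-Q_P)\,k(W,A)(S-\pi)$ is orthogonal to the tangent space of $\mathcal{M}_{S,w}$, because every tangent direction moves $Q_P$ by some $\dot\theta(W,A)+S\,d^\top\phi_S(W,A)$.
\item Its inner product with your third term is $c^\top E_P[k\,\pi(1-\pi)\phi_S\{(1-\pi)\sigma^2(1,W,A)+\pi\,\sigma^2(0,W,A)\}]$. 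Here $c$ is the coefficient vector of that term and $\sigma^2(s,w,a)=\mathrm{Var}_P(Y\mid S=s,W=w,A=a)$.
\item This need not vanish. For binary $Y$, $\sigma^2=Q_P(1-Q_P)$ is typically non-constant.
\end{itemize}
The point is that the unweighted R-learner influence function is efficient for $\beta$ in the partially linear model only under homoscedasticity.

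Your aside that the marginal of $A$ is known also works against you. With $\bar g$ fixed, one of two things happens:
\begin{itemize}
\item $\Pi_P$ may depend on $A$. Then the first term has the non-zero $A\mid W$ component $\{\Pi_P(1\mid W,A)-\bar\Pi_P(1\mid W)\}\{b_P(W)-\Psi^\#_{\mathcal{M}_{S,w}}(P)\}/P(S=1)$, where $b_P(W)=\Pi_P(0\mid W,0)\tau_{S,\beta_{S,P}}(W,0)-\Pi_P(0\mid W,1)\tau_{S,\beta_{S,P}}(W,1)$.
\item $S\perp A\mid W$ is imposed. Then the $\Pi$-term, which depends on $A$ through $A/\bar g_P$, is not a score for $S\mid W$. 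This is exactly the caveat the paper raises.
\end{itemize}
So the displayed formula is the canonical gradient of $\Psi^\#_{\mathcal{M}_{S,w}}$ on the larger model. It is not, in general, the canonical gradient of $\Psi^\#$ restricted to $\mathcal{M}_{S,w}$.
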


\subsection{Understanding the canonical gradients in the context of subgroup analysis}

We now study components of the canonical gradient for the bias projection parameter
$\Psi^\#_{\mathcal{M}_{S,w}}$ to gain some insights on the behavior of the estimator in our RCT subgroup analysis setting. Under complete randomization of treatment, we have $A \perp (S,W)$, and therefore $\Pi_P(s\mid W,a)=P(S=s\mid W)$ and $\bar g_P(a\mid W)=P(A=a)$, for $s,a\in\{0,1\}$. In the balanced two-arm trial considered here,
$P_0(A=1)=P_0(A=0)=1/2$. Under this randomization, the bias projection parameter simplifies to
$$
\Psi^\#_{\mathcal{M}_{S,w}}(P)=E_{P_{W\mid S=1}}
\Big[
\bar{\Pi}_P(0\mid W)\underbrace{\big\{
\tau_{S,\beta_{S,P}}(W,0)-\tau_{S,\beta_{S,P}}(W,1)
\big\}}_{\text{subgroup-by-treatment interaction}}
\Big].
$$
This representation is important. It shows that, in a randomized trial, the bias correction is driven by subgroup-by-treatment interaction, not by subgroup differences in the outcome level that are common to both treatment arms. In particular, if $\tau_{S,\beta_{S,P}}(W,0)=\tau_{S,\beta_{S,P}}(W,1)$ for all $W$, then $\Psi^\#_{\mathcal{M}_{S,w}}(P)=0$. Thus, when subgroup membership affects the outcome equally in the treatment and control arms, borrowing strength from $S=0$ does not induce bias for the subgroup-specific treatment effect, in which case we can fully rely on $\tilde{\Psi}=\Psi$ and leverage the full RCT sample for estimation, allowing us to achieve, potentially, a considerable amount of efficiency gain. On the other hand, when the working model for $\tau_{S,P}$ is selected in a data-adaptive manner from a collection of candidate models, some of which exclude the treatment variable $A$, there is a non-zero probability that the selected model yields a zero bias estimand even when there is a lack of knowledge for that. This may not be desirable. To mitigate this issue, we restrict model selection to a class of working models that all include the treatment variable $A$. In practice, this can be implemented by, for example, leaving the coefficient of $A$ unpenalized when fitting a Lasso model (and similarly for the Highly Adaptive Lasso, as discussed in the following section).

Let's now interpret each of the three components of the canonical gradient of the bias projection parameter respectively. First, for the $P_W$-component, randomization yields
$$
D_{\Psi^\#,\mathcal{M}_{S,w},P_W,P}=\frac{S}{P(S=1)}\Big[\bar{\Pi}_P(0\mid W)\big\{\tau_{S,\beta_{S,P}}(W,0)-\tau_{S,\beta_{S,P}}(W,1)\big\}-\Psi^\#_{\mathcal{M}_{S,w}}(P)\Big].
$$
Therefore, this term is simply the centered version of the weighted function
$$
\bar{\Pi}_P(0\mid W)\big\{\tau_{S,\beta_{S,P}}(W,0)-\tau_{S,\beta_{S,P}}(W,1)\big\}.
$$
Its variance is small when this weighted arm-specific contrast is small or has limited
variability. The important quantity here is the contrast
$\tau_{S,\beta_{S,P}}(W,0)-\tau_{S,\beta_{S,P}}(W,1)$, not the overall magnitude of the arm-specific bias function $\tau_{S,\beta_{S,P}}(W,a)$ by itself. Second, for the $\Pi$-component, randomization implies
$$
D_{\Psi^\#,\mathcal{M}_{S,w},\Pi,P}
=\frac{\bar{\Pi}_P(1\mid W)}{P(S=1)}
\left[
\frac{A}{g_P(1\mid W)}\tau_{S,\beta_{S,P}}(W,1)-\frac{1-A}{g_P(0\mid W)}\tau_{S,\beta_{S,P}}(W,0)
\right](S-\bar{\Pi}_P(1\mid W)).
$$
In the balanced randomization case, this becomes
$$
D_{\Psi^\#,\mathcal{M}_{S,w},\Pi,P}=\frac{\bar{\Pi}_P(1\mid W)}{P(S=1)}
\left[
2A\,\tau_{S,\beta_{S,P}}(W,1)-2(1-A)\,\tau_{S,\beta_{S,P}}(W,0)\right]
(S-\bar{\Pi}_P(1\mid W)).
$$
This term reflects the contribution of estimating the subgroup-membership mechanism in the larger semi-parametric model. Its variance is controlled when the arm-specific subgroup-effect functions are small or well-behaved. At the same time, the factor $1/P(S=1)$ shows that very small subgroup prevalence will still inflate variance, even in this favorable randomized-trial setting. Third, for the $\beta_S$-component, randomization gives
$$
D_{\Psi^\#,\mathcal{M}_{S,w},\beta,P}=\frac{1}{P(S=1)}
\Bigg[
E_P\Big\{
S\,\bar{\Pi}_P(0\mid W)
\big(
\phi_S(W,0)-\phi_S(W,1)
\big)
\Big\}
\Bigg]^\top
D_{S,\beta_S,P}.
$$
Hence, only those basis directions that change with treatment contribute to this term. Equivalently, the size of this component is governed by the treatment-dependent part of the working model for $\tau_{S,\beta_S}$. If the working model contains no $W\times A$ interactions, then all basis functions depending only on $W$ cancel in $\phi_S(W,0)-\phi_S(W,1)$, and only basis functions involving $A$ alone remain. If the working model does not depend on $A$ at all, then $\phi_S(W,0)=\phi_S(W,1)$ and $D_{\Psi^\#,\mathcal{M}_{S,w},\beta,P}=0$. In that special case, the bias projection parameter is also zero, which is consistent with the fact that there is no subgroup-specific treatment-effect bias to correct for.

These simplifications show why subgroup analysis within a randomized trial can be a particularly favorable setting for A-TMLE. The bias correction depends only on treatment-specific subgroup differences, rather than on arbitrary subgroup differences in the outcome regression. This is intuitive because only treatment-specific subgroup differences would result in the CATE differing in the subgroup versus the full population. Consequently, when subgroup-by-treatment interaction is weak, or can be represented by a low-complexity working model, the contribution of the bias term to the asymptotic variance can be small. However, this should be interpreted together with the unavoidable factor $1/P(S=1)$. That is, if the subgroup is rare, variance inflation remains unavoidable.

\begin{remark}
We note one subtle but important point. The display above is obtained by simplifying the canonical gradient derived in the larger model of Lemma \ref{lem:can_grad_psi_pound}. If one further restricts the statistical model to the randomized-trial submodel in which $\Pi_P(s\mid W,a)=P(S=s\mid W)$ by design, then the efficient influence function may simplify further. Thus, plugging the randomization restriction into the displayed canonical gradient is highly informative for interpretation, but it is conceptually distinct from re-deriving the efficient influence function in the smaller randomized-trial model itself.
\end{remark}

\subsection{Asymptotic linearity and efficiency of the subgroup A-TMLE}

The general proof of asymptotic linearity and efficiency of this A-TMLE is provided in \cite{van2023adaptive} and \cite{vanderlaan2024atmle}. Here, for completeness, we summarize again the conditions under which the current subgroup A-TMLE is asymptotically linear and efficient. The theorem is the direct subgroup analogue of the corresponding result in the RCT-RWD setting, with the exact remainder replaced by the one presented later in Theorem \ref{thm:subgroup_exact_remainder}.

\begin{theorem}\label{thm:subgroup_atmle_theory}
Let $D_{\mathcal{M}_w,P}$ be the canonical gradient of $\Psi_{\mathcal{M}_w}$ at $P$, and define $R_{\mathcal{M}_w}(P,P_0):=\Psi_{\mathcal{M}_w}(P)-\Psi_{\mathcal{M}_w}(P_0)+P_0D_{\mathcal{M}_w,P}$. Let ${\cal M}_0\subset {\cal M}$ denote an oracle model containing $P_0$, with canonical gradient $D_{{\cal M}_0,P}$ and exact remainder $R_{{\cal M}_0}(P,P_0)$ for the oracle projection parameter $\Psi_{{\cal M}_0}$. Let $P_n^\star\in{\cal M}_w$ be an estimator satisfying the score equation $P_nD_{\mathcal{M}_w,P_n^\star}=o_P(n^{-1/2})$.
Then, we have
$$
\Psi_{\mathcal{M}_w}(P_n^\star)-\Psi_{\mathcal{M}_w}(P_0)=(P_n-P_0)D_{\mathcal{M}_w,P_n^\star}+R_{\mathcal{M}_w}(P_n^\star,P_0)+o_P(n^{-1/2}).
$$

\paragraph{Oracle model approximation condition.}
Let $P_{0,n}:=\Pi_{{\cal M}_w}(P_0)$ and $\tilde P_{0,n}:=\Pi_{{\cal M}_0}(P_{0,n})$. Let $\tilde D_{{\cal M}_w,\tilde P_{0,n}}$ be the projection of $D_{{\cal M}_0,\tilde P_{0,n}}$ onto the tangent space $T_{{\cal M}_w}(\tilde P_{0,n})$. Assume
\begin{itemize}
\item $\Psi(\tilde P_{0,n})-\Psi(P_{0,n})=o_P(n^{-1/2})$ and
\item $(\tilde P_{0,n}-P_0)\Bigl\{D_{{\cal M}_0,\tilde P_{0,n}}-\tilde D_{{\cal M}_w,\tilde P_{0,n}}\Bigr\}=o_P(n^{-1/2}).$
\end{itemize}
Then, $\Psi_{\mathcal{M}_w}(P_n^\star)-\Psi(P_0)=(P_n-P_0)D_{\mathcal{M}_w,P_n^\star}+R_{\mathcal{M}_w}(P_n^\star,P_0)+o_P(n^{-1/2})$.

\paragraph{Rate condition.}
If $R_{\mathcal{M}_w}(P_n^\star,P_0)=o_P(n^{-1/2})$, then $\Psi_{\mathcal{M}_w}(P_n^\star)-\Psi(P_0)=(P_n-P_0)D_{\mathcal{M}_w,P_n^\star}+o_P(n^{-1/2})$.

\paragraph{Donsker or cross-fitted empirical process condition.}
If either
\begin{itemize}
\item $D_{\mathcal{M}_w,P_n^\star}$ falls with probability tending to one in a $P_0$-Donsker class and $P_0\{D_{\mathcal{M}_w,P_n^\star}-D_{\mathcal{M}_w,P_0}\}^2\rightarrow_p 0$, or
\item a CV-TMLE is used so that the corresponding empirical process term is asymptotically negligible,
\end{itemize}
then $\Psi_{\mathcal{M}_w}(P_n^\star)-\Psi(P_0)
=(P_n-P_0)D_{\mathcal{M}_w,P_0}+o_P(n^{-1/2})$.

\paragraph{Asymptotic normality.}
If, in addition, $n^{1/2}P_nD_{\mathcal{M}_w,P_0}/\sigma_{w,n}\Rightarrow_d N(0,1)$, where $\sigma_{w,n}^2:=P_0D_{\mathcal{M}_w,P_0}^2$, then
\[
\sigma_{w,n}^{-1}n^{1/2}\{\Psi_{\mathcal{M}_w}(P_n^\star)-\Psi(P_0)\}\Rightarrow_d N(0,1).
\]
If the stronger stability condition $\|D_{\mathcal{M}_w,P_0}-D_{{\cal M}_0,P_0}\|_{P_0}=o_P(1)$ holds, then
$$
\Psi_{\mathcal{M}_w}(P_n^\star)-\Psi(P_0)=(P_n-P_0)D_{{\cal M}_0,P_0}+o_P(n^{-1/2}),
$$
so that the subgroup A-TMLE is asymptotically linear with influence curve equal to the efficient influence curve of $\Psi:{\cal M}_0\rightarrow\RR$ at $P_0$.
\end{theorem}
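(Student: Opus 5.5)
The argument is a chain of algebraic identities, each followed by one additional condition that kills one more term. Every step depends only on the definitions of the exact remainders and the properties of projections.

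\textbf{First display.} By definition, $R_{\mathcal{M}_w}(P_n^\star,P_0)=\Psi_{\mathcal{M}_w}(P_n^\star)-\Psi_{\mathcal{M}_w}(P_0)+P_0D_{\mathcal{M}_w,P_n^\star}$. Add and subtract $P_nD_{\mathcal{M}_w,P_n^\star}$ and use the score equation $P_nD_{\mathcal{M}_w,P_n^\star}=o_P(n^{-1/2})$. This gives
\[
\Psi_{\mathcal{M}_w}(P_n^\star)-\Psi_{\mathcal{M}_w}(P_0)=(P_n-P_0)D_{\mathcal{M}_w,P_n^\star}+R_{\mathcal{M}_w}(P_n^\star,P_0)+o_P(n^{-1/2}),
\]
which is the first claim.

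\textbf{Oracle approximation step.} This step is the main obstacle. We must show $\Psi_{\mathcal{M}_w}(P_0)-\Psi(P_0)=o_P(n^{-1/2})$, where $\Psi_{\mathcal{M}_w}(P_0)=\Psi(P_{0,n})$ with $P_{0,n}$ the projection of $P_0$ onto the data-adaptively selected working model. Since $P_0\in\mathcal{M}_0$, we have $\Psi(P_0)=\Psi_{\mathcal{M}_0}(P_0)$. Write
\[
\Psi(P_{0,n})-\Psi(P_0)=\{\Psi(P_{0,n})-\Psi(\tilde P_{0,n})\}+\{\Psi_{\mathcal{M}_0}(\tilde P_{0,n})-\Psi_{\mathcal{M}_0}(P_0)\}.
\]
The first bracket is $o_P(n^{-1/2})$ by the first oracle condition.

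For the second bracket, use the exact-remainder identity in $\mathcal{M}_0$:
\[
\Psi_{\mathcal{M}_0}(\tilde P_{0,n})-\Psi_{\mathcal{M}_0}(P_0)=-P_0D_{\mathcal{M}_0,\tilde P_{0,n}}+R_{\mathcal{M}_0}(\tilde P_{0,n},P_0).
\]
Since $\tilde P_{0,n}D_{\mathcal{M}_0,\tilde P_{0,n}}=0$ (a gradient has mean zero under its own law), this equals $(\tilde P_{0,n}-P_0)D_{\mathcal{M}_0,\tilde P_{0,n}}$ plus the remainder.

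Next, split $D_{\mathcal{M}_0,\tilde P_{0,n}}=\tilde D_{\mathcal{M}_w,\tilde P_{0,n}}+\{D_{\mathcal{M}_0,\tilde P_{0,n}}-\tilde D_{\mathcal{M}_w,\tilde P_{0,n}}\}$. The second piece is $o_P(n^{-1/2})$ by the second oracle condition.

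For the first piece, argue that $(\tilde P_{0,n}-P_0)\tilde D_{\mathcal{M}_w,\tilde P_{0,n}}$ vanishes to second order. The reason is that $P_{0,n}$ is the Kullback--Leibler / $L^2$ projection of $P_0$ onto $\mathcal{M}_w$. Its first-order conditions say that $P_0-P_{0,n}$ is orthogonal to every score in the tangent space of $\mathcal{M}_w$, and $\tilde D_{\mathcal{M}_w,\tilde P_{0,n}}$ lies in that tangent space. The leftover terms (evaluating at $\tilde P_{0,n}$ rather than $P_{0,n}$, and the remainder $R_{\mathcal{M}_0}$) are products of differences among $P_0$, $P_{0,n}$ and $\tilde P_{0,n}$. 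Following \cite{van2023adaptive}, I would absorb these into the stated oracle conditions rather than bound them separately. Making this orthogonality bookkeeping precise is the delicate part of the proof. I would defer to the corresponding lemma of \cite{vanderlaan2024atmle}, replacing $P_W$ by $P_{W\mid S=1}$ via Lemma~\ref{lem:ratio_remainder_conditional_mean}. With these pieces, $\Psi_{\mathcal{M}_w}(P_0)$ may be replaced by $\Psi(P_0)$ in the first display.

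\textbf{Remaining steps.} The rate condition then removes $R_{\mathcal{M}_w}$. Under the Donsker condition, the asymptotic equicontinuity of the empirical process together with $L^2$ consistency of $D_{\mathcal{M}_w,P_n^\star}$ gives
\[
(P_n-P_0)\{D_{\mathcal{M}_w,P_n^\star}-D_{\mathcal{M}_w,P_0}\}=o_P(n^{-1/2}).
\]
Under CV-TMLE, the same conclusion follows by conditioning on the training folds and applying Chebyshev's inequality. Asymptotic normality follows immediately from the assumed standardized CLT for the triangular array $D_{\mathcal{M}_w,P_0}$, which depends on $n$ through the selected model. Finally, under the stability condition, $(P_n-P_0)(D_{\mathcal{M}_w,P_0}-D_{\mathcal{M}_0,P_0})$ has variance $n^{-1}\|D_{\mathcal{M}_w,P_0}-D_{\mathcal{M}_0,P_0}\|^2_{P_0}=o(n^{-1})$. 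It is therefore $o_P(n^{-1/2})$, which yields the efficient influence curve of $\Psi:\mathcal{M}_0\to\RR$.
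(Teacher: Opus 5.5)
The paper gives no self-contained proof of this theorem. It presents it as the subgroup analogue of the RCT--RWD result and cites \cite{van2023adaptive,vanderlaan2024atmle}. Your skeleton (score equation, rate condition, Donsker/CV-TMLE, CLT) is that standard argument, and those steps are fine.

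The gap is in the one step with real content, the oracle approximation. Your decomposition yields, exactly,
\begin{align*}
\Psi(P_{0,n})-\Psi(P_0)&=\{\Psi(P_{0,n})-\Psi(\tilde P_{0,n})\}+(\tilde P_{0,n}-P_0)\{D_{\mathcal{M}_0,\tilde P_{0,n}}-\tilde D_{\mathcal{M}_w,\tilde P_{0,n}}\}\\
&\quad+(\tilde P_{0,n}-P_0)\tilde D_{\mathcal{M}_w,\tilde P_{0,n}}+R_{\mathcal{M}_0}(\tilde P_{0,n},P_0).
\end{align*}
The two displayed hypotheses control only the first two terms. The last two cannot be ``absorbed'' into them, because neither hypothesis mentions them; you have to show they vanish. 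Also, the orthogonality you invoke sits at the wrong point. The first-order conditions of the projection give $(P_{0,n}-P_0)s=0$ for $s\in T_{\mathcal{M}_w}(P_{0,n})$. Your leftover term instead integrates an element of $T_{\mathcal{M}_w}(\tilde P_{0,n})$ against $\tilde P_{0,n}-P_0$.

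Both leftovers can be handled with what the paper already provides.

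\emph{The oracle remainder.} $R_{\mathcal{M}_0}(\tilde P_{0,n},P_0)=0$ exactly, not merely to second order. The projections defining $P_{0,n}$ and $\tilde P_{0,n}$ only replace the outcome regression by its least-squares projection. Hence $\tilde P_{0,n}$ has the same $P_W$, $\bar g_0$, $\Pi_0$ and $p_0$ as $P_0$. Since $\mathcal{M}_0$ is a working model of the same form (with the oracle basis), Theorem~\ref{thm:subgroup_exact_remainder} applies to it. Every term there carries a factor $\bar g_P-\bar g_0$, $\Pi_P-\Pi_0$ or $p_P-p_0$ (Corollaries~\ref{cor:subgroup_remainder_tilde_known_g} and~\ref{cor:subgroup_remainder_bias_known_Pi}), so it vanishes.

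\emph{The orthogonality term.} On the event $\{\mathcal{M}_w\subseteq\mathcal{M}_0\}$ you have $\tilde P_{0,n}=P_{0,n}$. There the first hypothesis holds trivially, and your orthogonality applies at the right point, giving $(P_{0,n}-P_0)\tilde D_{\mathcal{M}_w,P_{0,n}}=0$. Then $\Psi(P_{0,n})-\Psi(P_0)$ reduces exactly to the term in the second hypothesis. Because the conclusion is an $o_P$ statement, it suffices that this event has probability tending to one. That is precisely the regime $P(\mathcal{D}(\mathcal{R}_n)\subseteq\mathcal{D}(\mathcal{R}_0))\to1$ singled out in Subsection~\ref{sec:group_lasso}. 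Outside that regime, $(\tilde P_{0,n}-P_0)\tilde D_{\mathcal{M}_w,\tilde P_{0,n}}$ needs a hypothesis beyond the two displayed ones.

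A smaller slip is in the stability step. There $D_{\mathcal{M}_w,P_0}-D_{\mathcal{M}_0,P_0}$ depends on the data through the selected working model, so the variance/Chebyshev bound does not apply as written. Use asymptotic equicontinuity under the Donsker condition (or sample splitting for the model selection) instead.
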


\subsection{Exact remainder of the projection parameter $\Psi_{\mathcal{M}_w}$}

The main condition in Theorem \ref{thm:subgroup_atmle_theory} is the rate condition $R_{\mathcal{M}_w}(P_n^\star,P_0)=o_P(n^{-1/2})$. We therefore study the exact remainder of the projection parameter $\Psi_{\mathcal{M}_w}(P)=\tilde{\Psi}_{\mathcal{M}_{A,w}}(P)-\Psi^\#_{\mathcal{M}_{S,w}}(P)$. Proofs of the results in this subsection are deferred to Appendix \ref{app:exact_remainder}. Throughout, let $p_P:=P(S=1)$, $p_0:=P_0(S=1)$, and define $m_{A,P}(W):=\tau_{A,\beta_{A,P}}(W)$, $b_P(W):=\Pi_P(0\mid W,0)\tau_{S,\beta_{S,P}}(W,0)-\Pi_P(0\mid W,1)\tau_{S,\beta_{S,P}}(W,1)$, and $m_P(W):=m_{A,P}(W)-b_P(W)$. Then, $\tilde{\Psi}_{\mathcal{M}_{A,w}}(P)=E_P[m_{A,P}(W)\mid S=1]$ and 
$\Psi^\#_{\mathcal{M}_{S,w}}(P)=E_P[b_P(W)\mid S=1]$. Hence, $\Psi_{\mathcal{M}_w}(P)=E_P[m_P(W)\mid S=1]$.

Because both projection parameters average over the subgroup covariate distribution $P_{W\mid S=1}$, the subgroup exact remainder differs from the corresponding pooled covariate distribution derivation presented in \cite{qiu_an_2025} by an additional remainder coming from the denominator $P(S=1)$. We first consider the pooled projection parameter $\tilde{\Psi}_{\mathcal{M}_{A,w}}(P)=E_P[m_{A,P}(W)\mid S=1]$. Define the numerator $N_A(P):=E_P[S\,m_{A,P}(W)]=E_P[S\,\tau_{A,\beta_{A,P}}(W)]$.

\begin{lemma}\label{lem:subgroup_remainder_tilde_numerator}
The exact remainder $R_{N_A}(P,P_0):=N_A(P)-N_A(P_0)+P_0D_{N_A,P}$ is given by
\begin{align*}
R_{N_A}(P,P_0)&=\sum_jE_P[S\phi_{A,j}(W)]\Bigl\{I_{A,P}^{-1}P_0[(\bar g_P-\bar g_0)(1\mid W)\phi_A(W)(\theta_P-\theta_0)(W)\\
&+(\bar g_P-\bar g_0)(1-\bar g_P)(1\mid W)\phi_A(W)\{\tau_{A,\beta_{A,P}}-\tau_{A,\beta_{A,0}}\}(W)\\
&-(\bar g_P-\bar g_0)^2(1\mid W)\phi_A(W)\tau_{A,\beta_{A,P}}(W)\\
&-(\bar g_P-\bar g_0)(1\mid W)\bar g_0(1\mid W)\phi_A(W)\{\tau_{A,\beta_{A,P}}-\tau_{A,\beta_{A,0}}\}(W)
]_j
\Bigr\}.
\end{align*}
\end{lemma}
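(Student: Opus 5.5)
The plan is to compute the exact remainder of the numerator parameter $N_A(P)=E_P[S\,\tau_{A,\beta_{A,P}}(W)]$ directly from its canonical gradient. I take the working model to be linear in the basis, $\tau_{A,\beta}(W)=\sum_j\beta_j\phi_{A,j}(W)$, and write $\theta_P(W)=E_P(Y\mid W)$. As in the proof of Lemma \ref{lem:can_grad_psi_tilde}, the canonical gradient of the numerator is
$$
D_{N_A,P}=S\,\tau_{A,\beta_{A,P}}(W)-N_A(P)+\{E_P[S\phi_A(W)]\}^\top D_{A,\beta_A,P}.
$$
This holds because $\tilde\Psi_{\mathcal M_{A,w}}=N_A/p_P$, and Lemma \ref{lem:ratio_remainder_conditional_mean} lets us read off $D_{N_A,P}$ from $D_{\tilde\Psi,\mathcal M_{A,w},P}$. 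Substituting into $R_{N_A}(P,P_0)=N_A(P)-N_A(P_0)+P_0D_{N_A,P}$, the $N_A(P)$ terms cancel and we get
$$
R_{N_A}(P,P_0)=E_0[S\phi_A(W)]^\top(\beta_{A,P}-\beta_{A,0})+E_P[S\phi_A(W)]^\top P_0D_{A,\beta_A,P}.
$$

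The core step is to evaluate $P_0D_{A,\beta_A,P}=I_{A,P}^{-1}P_0[(A-\bar g_P(1\mid W))\phi_A(W)(Y-\bar Q_{w,P}(W,A))]$ by conditioning on $W$. First, $E_0[(A-\bar g_P)Y\mid W]=\bar g_0(1-\bar g_0)\tau_{A,P_0}(W)+(\bar g_0-\bar g_P)\theta_0(W)$. Second, $E_0[(A-\bar g_P)\mid W]\theta_P=(\bar g_0-\bar g_P)\theta_P$. Third, $E_0[(A-\bar g_P)^2\mid W]=\bar g_0(1-\bar g_0)+(\bar g_0-\bar g_P)^2$. Putting these together,
$$
E_0[(A-\bar g_P)(Y-\bar Q_{w,P})\mid W]=\bar g_0(1-\bar g_0)\{\tau_{A,P_0}-\tau_{A,\beta_{A,P}}\}+(\bar g_0-\bar g_P)(\theta_0-\theta_P)-(\bar g_0-\bar g_P)^2\tau_{A,\beta_{A,P}}.
$$
After multiplying by $\phi_A$ and integrating, the normal equations defining $\beta_{A,0}$ let us replace the true CATE $\tau_{A,P_0}$ by its projection $\tau_{A,\beta_{A,0}}$. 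We then split $\bar g_0(1-\bar g_0)$ as $\bar g_P(1-\bar g_P)$ plus the correction $(\bar g_P-\bar g_0)(1-\bar g_P)-(\bar g_P-\bar g_0)\bar g_0$; the sign of the correction is to be fixed carefully. The $\bar g_P(1-\bar g_P)$ piece gives $-I_{A,P}(\beta_{A,P}-\beta_{A,0})$, so $P_0D_{A,\beta_A,P}=-(\beta_{A,P}-\beta_{A,0})+I_{A,P}^{-1}P_0[\cdots]$. The bracket contains exactly the four second-order terms displayed in the lemma.

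The main obstacle is the bookkeeping. The first-order term $-(\beta_{A,P}-\beta_{A,0})$ must be cancelled against $E_0[S\phi_A]^\top(\beta_{A,P}-\beta_{A,0})$ from the plug-in part, and every leftover $\bar g$-difference must be matched to the stated signs. The cancellation leaves $(E_0-E_P)[S\phi_A]^\top(\beta_{A,P}-\beta_{A,0})$, a product of two differences and hence second order. I will check whether it is absorbed, for example by evaluating the plug-in part under $P$'s marginal of $(S,W)$ as in the statement's convention; otherwise it should be reported as an additional second-order term.
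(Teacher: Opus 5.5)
Your reduction to $R_{N_A}(P,P_0)=E_0[S\phi_A]^\top(\beta_{A,P}-\beta_{A,0})+E_P[S\phi_A]^\top P_0D_{A,\beta_A,P}$ is correct. So are the conditional computation of $E_0[(A-\bar g_P)(Y-\bar Q_{w,P})\mid W]$ and the use of the normal equations for $\beta_{A,0}$. The sign you were unsure about also works out: $\bar g_0(1-\bar g_0)=\bar g_P(1-\bar g_P)-(\bar g_P-\bar g_0)(1-\bar g_P-\bar g_0)$, and multiplying by $\tau_{A,\beta_{A,0}}-\tau_{A,\beta_{A,P}}$ gives exactly the second and fourth displayed terms.

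The step that fails is ``the $\bar g_P(1-\bar g_P)$ piece gives $-I_{A,P}(\beta_{A,P}-\beta_{A,0})$.'' The outer expectation there is under $P_0$. So the piece is actually $-I^{0}_{A,P}(\beta_{A,P}-\beta_{A,0})$, with $I^{0}_{A,P}:=E_0[\bar g_P(1-\bar g_P)(1\mid W)\phi_A\phi_A^\top(W)]$. By contrast, $I_{A,P}$ integrates over $P$'s marginal of $W$, so $I_{A,P}^{-1}I^{0}_{A,P}$ is not the identity in general. Carrying your own bookkeeping through correctly gives
\[
R_{N_A}(P,P_0)=E_P[S\phi_A]^\top I_{A,P}^{-1}P_0[\cdots]+E_P[S\phi_A]^\top I_{A,P}^{-1}\bigl(I_{A,P}-I^{0}_{A,P}\bigr)(\beta_{A,P}-\beta_{A,0})+(E_0-E_P)[S\phi_A]^\top(\beta_{A,P}-\beta_{A,0}),
\]
where $[\cdots]$ collects the four displayed terms. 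You found only one of the two leftover terms.

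Your suspicion that the leftover is not absorbed is right. Take $\phi_A\equiv 1$ and $\bar g_P=\bar g_0\equiv 1/2$. Then $\beta_{A,P}=E_P[\tau_{A,P}(W)]$ and $P_0D_{A,\beta_A,P}=\beta_{A,0}-\beta_{A,P}$, hence $R_{N_A}(P,P_0)=-(p_P-p_0)(\beta_{A,P}-\beta_{A,0})$. The displayed formula, and Corollary \ref{cor:subgroup_remainder_tilde_known_g}, give $0$ instead.

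The paper's proof does not address this. It only says the derivation is the pooled one of \cite{qiu_an_2025} with $E_P[\phi_A(W)]$ replaced by $E_P[S\phi_A(W)]$, which is the same computation as yours with these terms tacitly dropped. So the stated identity holds exactly only when $P$ and $P_0$ share the marginal distribution of $(S,W)$. In that case $E_0[S\phi_A]=E_P[S\phi_A]$ and $I^{0}_{A,P}=I_{A,P}$, the first-order pieces cancel, and your argument yields precisely the stated expression.

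Your proposed fix, evaluating only the plug-in part under $P$'s marginal, would not suffice, because $I_{A,P}$ also carries $P$'s $W$-marginal. To close the proof, you have two options:
\begin{itemize}
\item make the shared-$(S,W)$-marginal assumption explicit, or
\item state the remainder with both extra terms.
\end{itemize}
Each extra term is a product of a difference in the $(S,W)$-marginal and $\beta_{A,P}-\beta_{A,0}$. It is therefore typically $o_P(n^{-1/2})$ at $P=P_n^\star$ with the empirical marginal and a consistent $\beta_{A,n}$. That keeps the asymptotic conclusions, but not the claim of exact vanishing.
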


\begin{corollary}\label{cor:subgroup_remainder_tilde_known_g}
In the randomized-trial subgroup setting, if the treatment mechanism is known by design and correctly plugged in, that is, $\bar g_P=\bar g_0$, then $R_{N_A}(P,P_0)=0$, regardless of misspecification of $\theta_P$ and of the working model for $\tau_{A,P}$.
\end{corollary}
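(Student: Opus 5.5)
The plan is to read the corollary directly off the explicit formula for $R_{N_A}(P,P_0)$ in Lemma \ref{lem:subgroup_remainder_tilde_numerator}. Since we may take that lemma as given, there is no need to re-derive the remainder. We only need to check that every term inside the bracket $P_0[\cdots]_j$ carries a factor that vanishes when $\bar g_P=\bar g_0$.

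Concretely, I would inspect the four summands inside $P_0[\cdots]$ one at a time.
\begin{enumerate}
\item The first is $(\bar g_P-\bar g_0)(1\mid W)\phi_A(W)(\theta_P-\theta_0)(W)$.
\item The second is $(\bar g_P-\bar g_0)(1-\bar g_P)(1\mid W)\phi_A(W)\{\tau_{A,\beta_{A,P}}-\tau_{A,\beta_{A,0}}\}(W)$.
\item The third is $-(\bar g_P-\bar g_0)^2(1\mid W)\phi_A(W)\tau_{A,\beta_{A,P}}(W)$.
\item The fourth is $-(\bar g_P-\bar g_0)(1\mid W)\bar g_0(1\mid W)\phi_A(W)\{\tau_{A,\beta_{A,P}}-\tau_{A,\beta_{A,0}}\}(W)$.
\end{enumerate}
Each is a product in which $(\bar g_P-\bar g_0)(1\mid W)$ appears at least to the first power. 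The remaining factors are $\phi_A$, $\theta_P-\theta_0$, $\tau_{A,\beta_{A,P}}$ and $\tau_{A,\beta_{A,P}}-\tau_{A,\beta_{A,0}}$. Under the standing boundedness assumptions on the basis functions and regressions, these are $P_0$-integrable. Hence, if $\bar g_P(1\mid W)=\bar g_0(1\mid W)$ holds $P_0$-almost surely, each integrand is identically zero and each vector entry of $P_0[\cdots]_j$ is zero.

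It then follows that $I_{A,P}^{-1}$ times the zero vector is zero. Here $I_{A,P}$ is invertible by the nondegeneracy of the working model, which is already assumed for $\beta_A(P)$ to be well defined. The weighted sum $\sum_j E_P[S\phi_{A,j}(W)]\{\cdot\}_j$ therefore vanishes, giving $R_{N_A}(P,P_0)=0$.

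Nothing in this argument used any property of $\theta_P$ relative to $\theta_0$, or whether $\tau_{A,P}$ lies in $\mathcal T_{A,w}$. This gives the robustness claim "regardless of misspecification."

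I would close by noting why the hypothesis is natural in the randomized-trial setting. Under complete 1:1 randomization, $A\perp (S,W)$, so $\bar g_0(1\mid W)=P_0(A=1)=1/2$. Plugging in the design value gives $\bar g_P=\bar g_0$ exactly.

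The only mild obstacle is a bookkeeping point: making sure the $(\bar g_P-\bar g_0)$ factor is genuinely common to all four summands. The lemma's formula shows this, so no real difficulty is expected.
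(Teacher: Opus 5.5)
Your proposal is correct and follows the same argument as the paper, which also just reads the conclusion off the formula in Lemma~\ref{lem:subgroup_remainder_tilde_numerator}: every summand inside $P_0[\cdots]_j$ carries a factor $(\bar g_P-\bar g_0)(1\mid W)$, so the bracket, its image under $I_{A,P}^{-1}$, and the weighted sum over $j$ all vanish. One caveat applies equally to the paper: that lemma's formula tacitly takes $P$ to share the $(S,W)$-marginal of $P_0$, and for a general $P$ with $\bar g_P=\bar g_0$ a direct computation gives instead $R_{N_A}(P,P_0)=\{E_0[S\phi_A(W)]-I_{A,P_0}I_{A,P}^{-1}E_P[S\phi_A(W)]\}^\top(\beta_{A,P}-\beta_{A,0})$ (e.g.\ $(p_0-p_P)(\beta_{A,P}-\beta_{A,0})$ when $\phi_A\equiv 1$), which is second order and free of $\theta_P$ but not identically zero.
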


\begin{proof}
Under $\bar g_P=\bar g_0$, one could easily see that $R_{N_A}(P,P_0)=0$.
\end{proof}

Applying Lemma \ref{lem:ratio_remainder_conditional_mean} with $h_P(W)=m_{A,P}(W)$ gives the exact remainder for $\tilde{\Psi}_{\mathcal{M}_{A,w}}(P)$.

\begin{corollary}\label{cor:subgroup_remainder_tilde_ratio}
The exact remainder $\tilde R_{\mathcal{M}_{A,w}}(P,P_0):=\tilde{\Psi}_{\mathcal{M}_{A,w}}(P)-\tilde{\Psi}_{\mathcal{M}_{A,w}}(P_0)+P_0D_{\tilde{\Psi},\mathcal{M}_{A,w},P}$ is given by
\begin{align*}
\tilde R_{\mathcal{M}_{A,w}}(P,P_0)
&=\frac{1}{p_P}R_{N_A}(P,P_0)
+\frac{p_P-p_0}{p_P}\left\{\tilde{\Psi}_{\mathcal{M}_{A,w}}(P)-\tilde{\Psi}_{\mathcal{M}_{A,w}}(P_0)\right\}.
\end{align*}
\end{corollary}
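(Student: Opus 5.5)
This corollary is a direct application of Lemma \ref{lem:ratio_remainder_conditional_mean} with $h_P(W)=m_{A,P}(W)=\tau_{A,\beta_{A,P}}(W)$. The plan is to verify the lemma's hypotheses in this instance and then read off the formula.

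\textbf{Step 1: the ratio representation.} We have $\tilde{\Psi}_{\mathcal{M}_{A,w}}(P)=E_P[m_{A,P}(W)\mid S=1]=N_A(P)/p_P$, so $\tilde\Psi_{\mathcal{M}_{A,w}}=\Phi_h$ with $N_h=N_A$.

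\textbf{Step 2: the gradient matches the lemma's form.} I need $D_{\tilde{\Psi},\mathcal{M}_{A,w},P}=\{D_{N_A,P}-\tilde\Psi_{\mathcal{M}_{A,w}}(P)(S-p_P)\}/p_P$. By the delta method, the numerator's gradient is
\[
D_{N_A,P}=S\,m_{A,P}(W)-N_A(P)+E_P[S\phi_A(W)]^\top D_{A,\beta_A,P}.
\]
This uses the chain rule through $\beta_{A}(P)$, whose influence function is $D_{A,\beta_A,P}$ for the weighted least-squares projection, as implicit in Lemma \ref{lem:can_grad_psi_tilde}. Substituting $N_A(P)=p_P\tilde\Psi_{\mathcal{M}_{A,w}}(P)$ gives
\[
\frac{D_{N_A,P}-\tilde\Psi_{\mathcal{M}_{A,w}}(P)(S-p_P)}{p_P}=\frac{S}{p_P}\{m_{A,P}(W)-\tilde\Psi_{\mathcal{M}_{A,w}}(P)\}+\frac{1}{p_P}E_P[S\phi_A]^\top D_{A,\beta_A,P}.
\]
This is exactly the gradient in Lemma \ref{lem:can_grad_psi_tilde}.

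\textbf{Step 3: invoke the lemma.} Since the numerator remainder $R_{N_A}$ is defined with this $D_{N_A,P}$ (as in Lemma \ref{lem:subgroup_remainder_tilde_numerator}), Lemma \ref{lem:ratio_remainder_conditional_mean} yields the stated identity immediately.

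For completeness, the identity can also be checked directly. Using $P_0(S-p_P)=p_0-p_P$,
\[
\tilde R_{\mathcal{M}_{A,w}}=\frac{N_A(P)}{p_P}-\frac{N_A(P_0)}{p_0}+\frac{P_0D_{N_A,P}}{p_P}-\tilde\Psi_{\mathcal{M}_{A,w}}(P)\frac{p_0-p_P}{p_P}.
\]
Substituting $P_0D_{N_A,P}=R_{N_A}-N_A(P)+N_A(P_0)$ and $N_A(P_0)=p_0\tilde\Psi_{\mathcal{M}_{A,w}}(P_0)$, the terms regroup to
\[
\frac{R_{N_A}}{p_P}+\frac{p_P-p_0}{p_P}\{\tilde\Psi_{\mathcal{M}_{A,w}}(P)-\tilde\Psi_{\mathcal{M}_{A,w}}(P_0)\}.
\]

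\textbf{Main obstacle.} The only nonroutine point is Step 2: confirming that the $P$-dependence of $h_P$ through $\beta_{A,P}$ is fully captured in $D_{N_A,P}$. Otherwise the gradient in Lemma \ref{lem:can_grad_psi_tilde} would not decompose as the lemma requires. I would settle this by the chain rule through $\beta_A(P)$ described above.
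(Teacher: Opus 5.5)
Your proposal is correct and follows the paper's route: the paper obtains this corollary in one line by applying Lemma \ref{lem:ratio_remainder_conditional_mean} with $h_P(W)=m_{A,P}(W)$. Your Step 2 check adds detail the paper leaves implicit. It confirms that the gradient in Lemma \ref{lem:can_grad_psi_tilde} has the required quotient form with $D_{N_A,P}=S\,m_{A,P}(W)-N_A(P)+E_P[S\phi_A(W)]^\top D_{A,\beta_A,P}$, and your direct algebraic regrouping is also correct.
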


We next consider the bias projection parameter $\Psi^\#_{\mathcal{M}_{S,w}}(P)=E_P[b_P(W)\mid S=1]$. Define the numerator $N_S(P):=E_P[S\,b_P(W)]$.

\begin{lemma}\label{lem:subgroup_remainder_bias_numerator}
Recall that $I_{S,P}=E_P\Bigl[\Pi_P(1-\Pi_P)(1\mid W,A)\phi_S\phi_S^\top(W,A)\Bigr]$. The exact remainder $R_{N_S}(P,P_0):=N_S(P)-N_S(P_0)+P_0D_{N_S,P}$ is given by
\begin{align*}
R_{N_S}(P,P_0)&=
P_0\bigg\{
\Pi_P(1\mid W,1)\frac{\bar g_P-\bar g_0}{\bar g_P}(1\mid W)\tau_{S,\beta_{S,P}}(W,1)(\Pi_P-\Pi_0)(1\mid W,1)\\
&-\Pi_P(1\mid W,0)\frac{\bar g_P-\bar g_0}{\bar g_P}(0\mid W)\tau_{S,\beta_{S,P}}(W,0)(\Pi_P-\Pi_0)(1\mid W,0)\\
&-\Pi_0(1\mid W,0)\{\tau_{S,\beta_{S,P}}-\tau_{S,\beta_{S,0}}\}(W,0)(\Pi_P-\Pi_0)(0\mid W,0)\\
&+\Pi_0(1\mid W,1)\{\tau_{S,\beta_{S,P}}-\tau_{S,\beta_{S,0}}\}(W,1)(\Pi_P-\Pi_0)(0\mid W,1)\\
&+\sum_j
\Bigl\{
E_P[S\,\Pi_P(0\mid W,0)\phi_{S,j}(W,0)]
-
E_P[S\,\Pi_P(0\mid W,1)\phi_{S,j}(W,1)]
\Bigr\}\\
&\qquad\times
\Biggl[I_{S,P}^{-1}
P_0\Bigl[
(\Pi_P-\Pi_0)(1\mid W,A)\phi_S(W,A)(\bar Q_P-\bar Q_0)(W,A)\\
&\qquad
+(\Pi_P-\Pi_0)(1-\Pi_P)(1\mid W,A)\phi_S(W,A)
\{\tau_{S,\beta_{S,P}}-\tau_{S,\beta_{S,0}}\}(W,A)\\
&\qquad
-(\Pi_P-\Pi_0)^2(1\mid W,A)\phi_S(W,A)\tau_{S,\beta_{S,P}}(W,A)\\
&\qquad
-(\Pi_P-\Pi_0)(1\mid W,A)\Pi_0(1\mid W,A)\phi_S(W,A)
\{\tau_{S,\beta_{S,P}}-\tau_{S,\beta_{S,0}}\}(W,A)
\Bigr]
\Biggr]_j
\bigg\}.
\end{align*}
\end{lemma}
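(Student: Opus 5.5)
The plan is a direct computation of $N_S(P)-N_S(P_0)+P_0D_{N_S,P}$. The numerator gradient $D_{N_S,P}$ is read off from Lemma \ref{lem:can_grad_psi_pound} by multiplying its three components by $P(S=1)$ and dropping the centering by $\Psi^\#$. This gives the sum of three pieces: the marginal piece $S\,b_P(W)-N_S(P)$; the $\Pi$-piece $\bar\Pi_P(1\mid W)[A\tau_{S,\beta_{S,P}}(W,1)/\bar g_P(1\mid W)-(1-A)\tau_{S,\beta_{S,P}}(W,0)/\bar g_P(0\mid W)](S-\Pi_P(1\mid W,A))$; and the $\beta$-piece $c_P^\top D_{S,\beta_S,P}$, with $c_P:=E_P[S\Pi_P(0\mid W,0)\phi_S(W,0)]-E_P[S\Pi_P(0\mid W,1)\phi_S(W,1)]$. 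I will treat the three pieces separately and then reassemble them so that every surviving term is a product of two differences.

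\textbf{Marginal and $\Pi$-pieces.} First, $P_0$ of the marginal piece is $P_0[Sb_P]-N_S(P)$. Adding $N_S(P)-N_S(P_0)$ leaves $P_0[S(b_P-b_0)]=P_0[\Pi_0(1\mid W,\cdot)(b_P-b_0)]$ after conditioning on $(W,A)$. I expand $b_P-b_0$ by the product rule into two parts: the $(\Pi_P-\Pi_0)(0\mid W,a)\tau_{S,\beta_{S,P}}(W,a)$ parts, and the $\Pi_0(0\mid W,a)\{\tau_{S,\beta_{S,P}}-\tau_{S,\beta_{S,0}}\}(W,a)$ parts.

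Next, taking $P_0$ of the $\Pi$-piece conditional on $(W,A)$ replaces $S-\Pi_P$ by $(\Pi_0-\Pi_P)(1\mid W,A)$. Integrating over $A$ then produces $\bar g_0/\bar g_P$ factors. I write $\bar g_0/\bar g_P=1-(\bar g_P-\bar g_0)/\bar g_P$ and use $\bar\Pi(1\mid W)\bar g(a\mid W)=\Pi(1\mid W,a)P(A=a\mid W,S=1)$-type identities, so that $\bar\Pi_P$ converts to $\Pi_P(1\mid W,a)$. The ``$1$'' part should cancel the first-order $(\Pi_P-\Pi_0)\tau_{S,\beta_{S,P}}$ terms from the marginal piece, up to residual cross terms. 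What remains is the two $(\bar g_P-\bar g_0)/\bar g_P\cdot(\Pi_P-\Pi_0)$ lines and the two $\Pi_0(1\mid W,a)\{\tau_{S,\beta_{S,P}}-\tau_{S,\beta_{S,0}}\}(\Pi_P-\Pi_0)(0\mid W,a)$ lines, with signs fixed by $b$. I will need careful bookkeeping here: $\Pi_0(1)(\Pi_P-\Pi_0)(0)$ pairs with $\Pi_0(0)\cdot$ terms via $\Pi(0)=1-\Pi(1)$.

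\textbf{$\beta$-piece.} The leftover first-order term from the marginal piece is $E_0[S\,\Pi_0(0\mid W,a)\{\tau_{S,\beta_{S,P}}-\tau_{S,\beta_{S,0}}\}(W,a)]$, which is linear in $\beta_{S,P}-\beta_{S,0}$ with coefficient approximately $c_P$. It must be cancelled by $c_P^\top I_{S,P}^{-1}P_0[(S-\Pi_P)\phi_S(Y-Q_{w,P})]$. This is the same calculation as Lemma \ref{lem:subgroup_remainder_tilde_numerator}, with $(A,\bar g,\theta)$ replaced by $(S,\Pi,\bar Q)$. I write $Q_{w,P}=\bar Q_P+(S-\Pi_P)\tau_{S,\beta_{S,P}}$ and use the $P_0$-projection normal equation $P_0[(S-\Pi_0)\phi_S(Y-\bar Q_0-(S-\Pi_0)\tau_{S,\beta_{S,0}})]=0$. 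Expanding $(S-\Pi_P)=(S-\Pi_0)+(\Pi_0-\Pi_P)$ gives $-I_{S,0}(\beta_{S,P}-\beta_{S,0})$-type leading terms plus the four second-order terms displayed inside the bracket. The leading term, after $I_{S,P}^{-1}$ versus $I_{S,0}$ differences are absorbed into the $(\Pi_P-\Pi_0)(1-\Pi_P)$ and $\Pi_0$ terms, cancels the linear leftover.

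\textbf{Main obstacle.} I expect the hardest step to be this last cancellation. Getting the exact bracket requires matching $c_P$ (an expectation under $P$) against the coefficient arising under $P_0$, and matching $I_{S,P}$ against $I_{S,0}$, without stray terms. I would first try to guess the form by analogy with Lemma \ref{lem:subgroup_remainder_tilde_numerator}, then verify it by expanding $\Pi_P(1-\Pi_P)-\Pi_0(1-\Pi_0)$ exactly.
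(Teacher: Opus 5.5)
\textbf{Gap: the final reassembly step cannot succeed, because the displayed formula is not an exact identity.}

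Your three-piece split of $D_{N_S,P}$ is correct, and your $\beta$-piece step works. Use the normal equation for $\beta_{S,0}$, together with $E_0[(S-\Pi_P)^2\mid W,A]=\Pi_0(1-\Pi_0)+(\Pi_P-\Pi_0)^2$ and $\Pi_P(1-\Pi_P)-\Pi_0(1-\Pi_0)=(\Pi_P-\Pi_0)(1-\Pi_P)-(\Pi_P-\Pi_0)\Pi_0$. Then the displayed bracket equals exactly $c_P^\top\{P_0D_{S,\beta_S,P}+\beta_{S,P}-\beta_{S,0}\}$, reading $I_{S,P}$ under the $P_0$-marginal of $(W,A)$. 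The lemma therefore reduces to the claim that the first four displayed lines equal
$$P_0[S(b_P-b_0)]+P_0(\Pi\text{-piece})-c_P^\top(\beta_{S,P}-\beta_{S,0}).$$
This is where your plan breaks. Write $\sigma_0=1$, $\sigma_1=-1$, $e_a:=(\Pi_P-\Pi_0)(1\mid W,a)$ and $\delta_a:=\{\tau_{S,\beta_{S,P}}-\tau_{S,\beta_{S,0}}\}(W,a)$. Two mismatches survive.

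\textbf{First mismatch.} The marginal piece carries the weight $\bar\Pi_0(1\mid W)$, while the $\Pi$-piece carries $\bar\Pi_P(1\mid W)$. So the ``$1$'' part cancels the $(\Pi_P-\Pi_0)\tau_{S,\beta_{S,P}}$ terms only up to $\sum_a\sigma_a(\bar\Pi_P-\bar\Pi_0)(1\mid W)\,e_a\,\tau_{S,\beta_{S,P}}(W,a)$. This term appears nowhere in the display. Your conversion identity is also misstated. The correct one is $\bar\Pi(1\mid W)P(A=a\mid S=1,W)=\Pi(1\mid W,a)\bar g(a\mid W)$, and it yields the factor $\Pi_P(1\mid W,a)$ of lines 1--2 only if $P$ itself has $A\perp S\mid W$.

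\textbf{Second mismatch.} The linear leftover of the marginal piece has coefficient $c_0:=E_0[S\Pi_0(0\mid W,0)\phi_S(W,0)]-E_0[S\Pi_0(0\mid W,1)\phi_S(W,1)]$, not the $c_P$ that the $\beta$-piece removes. So $(c_0-c_P)^\top(\beta_{S,P}-\beta_{S,0})$ would have to reproduce lines 3--4. Under randomization with a common $W$-marginal, its integrand is $-\sigma_a e_a\{1-\Pi_0(1\mid W,a)-\Pi_P(1\mid W,a)\}\delta_a$. Lines 3--4 instead have integrand $\sigma_a\Pi_0(1\mid W,a)\,e_a\,\delta_a$.

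\textbf{Net discrepancy.} Take both laws in the RCT model, with $\bar g_P=\bar g_0$ and the same $W$-marginal. Then the true remainder equals the displayed expression plus
\[
P_0\Bigl[\sum_a\sigma_a\,e_a\bigl\{e_a\,\tau_{S,\beta_{S,P}}(W,a)-\Pi_P(0\mid W,a)\,\delta_a\bigr\}\Bigr],
\]
which is second order but not zero. No bookkeeping will make your computation land on the statement.

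\textbf{Concrete check.} Take $W$ degenerate and $A\sim\mathrm{Bernoulli}(1/2)$ independent of $S$ under both laws. Use the saturated working model $\phi_S(A)=(1,A)$, with $P_0(S=1)=0.5$ and $P(S=1)=0.6$. Set $Q_0(1,1)=0.2$, and every other $Q_0(s,a)$ and every $Q_P(s,a)$ equal to $0.3$. Then $N_S(P)=0$, $N_S(P_0)=0.025$ and $P_0D_{N_S,P}=0.02$, so $R_{N_S}(P,P_0)=-0.005$. This matches $(\pi_P-\pi_0)\{(\bar Q_P-\bar Q_0)(0)-(\bar Q_P-\bar Q_0)(1)\}$. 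The display instead gives $0$ (lines 1--2), plus $-0.005$ (lines 3--4), plus $-0.004$ (bracket), i.e.\ $-0.009$.

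\textbf{Relation to the paper's proof.} The paper does no computation here; it transfers the RCT--RWD formula of \cite{qiu_an_2025} by swapping the coefficient vector. That transfer is exactly what fails. Without an $S$-weight, all three pieces share one weight. Here the weight is $\bar\Pi_0(1\mid W)$ in the marginal piece and $\bar\Pi_P(1\mid W)$ in the other two.

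Your direct route is the right way to obtain an exact remainder. Report the expression it actually produces rather than aiming at the displayed one. That expression includes the two cross terms above, plus marginal-difference terms if $c_P$ and $I_{S,P}$ are taken under $P$'s own $(W,A)$-marginal.
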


\begin{corollary}\label{cor:subgroup_remainder_bias_known_Pi}
If $\Pi_P=\Pi_0$, then $R_{N_S}(P,P_0)=0$, regardless of misspecification of $\bar Q_P$ and of the working model for $\tau_{S,P}$.
\end{corollary}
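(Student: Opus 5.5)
The argument is a direct substitution into the explicit formula of Lemma \ref{lem:subgroup_remainder_bias_numerator}: I will check, term by term, that every summand carries a factor that vanishes when $\Pi_P=\Pi_0$. Throughout I read $\Pi_P=\Pi_0$ as equality of the functions $\Pi_P(s\mid w,a)$ and $\Pi_0(s\mid w,a)$ for all $s,a$ and $P_0$-almost every $w$. Since $\Pi(0\mid W,a)=1-\Pi(1\mid W,a)$, this also gives $(\Pi_P-\Pi_0)(0\mid W,a)=-(\Pi_P-\Pi_0)(1\mid W,a)=0$.

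The four leading terms of the $P_0\{\cdot\}$ integrand each contain one of the difference factors $(\Pi_P-\Pi_0)(1\mid W,1)$, $(\Pi_P-\Pi_0)(1\mid W,0)$, $(\Pi_P-\Pi_0)(0\mid W,0)$ or $(\Pi_P-\Pi_0)(0\mid W,1)$. So each integrand is identically zero $P_0$-a.s., and its expectation is zero. The other factors, such as $(\bar g_P-\bar g_0)/\bar g_P$, $\tau_{S,\beta_{S,P}}$ and $\tau_{S,\beta_{S,P}}-\tau_{S,\beta_{S,0}}$, are irrelevant provided they are finite. This holds under the standing positivity of $\bar g_P$ and boundedness of the working-model fits.

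For the final sum over $j$, I will take the coefficient vector
\[
E_P[S\,\Pi_P(0\mid W,0)\phi_{S,j}(W,0)]-E_P[S\,\Pi_P(0\mid W,1)\phi_{S,j}(W,1)]
\]
as finite, and $I_{S,P}^{-1}$ as well defined; this is the implicit assumption under which $\beta_{S,P}$ and the canonical gradient of Lemma \ref{lem:can_grad_psi_pound} exist. The vector that $I_{S,P}^{-1}$ multiplies is $P_0[\cdots]$ of four terms. Each of those four terms contains either $(\Pi_P-\Pi_0)(1\mid W,A)$ or its square, so each integrand vanishes pointwise, the vector is $0$, and so is its image under $I_{S,P}^{-1}$. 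Summing gives $R_{N_S}(P,P_0)=0$.

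I need never use how $\bar Q_P$ compares with $\bar Q_0$, or whether $\tau_{S,P}$ lies in $\mathcal T_{S,w}$. The factors $(\bar Q_P-\bar Q_0)$ and $\tau_{S,\beta_{S,P}}-\tau_{S,\beta_{S,0}}$ appear only multiplied by a vanishing $\Pi$-difference. This is exactly the claimed robustness to misspecification of $\bar Q_P$ and of the working model. There is no real obstacle, as the statement is purely algebraic once Lemma \ref{lem:subgroup_remainder_bias_numerator} is granted. The only point needing care is the integrability/invertibility caveat for $I_{S,P}^{-1}$ and the $\bar g_P^{-1}$ weights, which I will state explicitly so that "$0\times$ finite $=0$" is legitimate.
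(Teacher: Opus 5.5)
Your proposal is correct and is exactly the paper's argument, spelled out: the paper's proof substitutes $\Pi_P=\Pi_0$ into the formula of Lemma~\ref{lem:subgroup_remainder_bias_numerator} and observes that every summand carries a factor $(\Pi_P-\Pi_0)$. Your term-by-term check, with the added finiteness and invertibility caveats, makes that explicit.

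One caution: your ``no real obstacle'' conclusion is only as strong as that lemma. Computing $N_S(P)-N_S(P_0)+P_0D_{N_S,P}$ directly at $\Pi_P=\Pi_0$ leaves the term $\{c_0-I_{S,0}I_{S,P}^{-1}c_P\}^\top(\beta_{S,P}-\beta_{S,0})$, where $c_P$ is the coefficient vector in the lemma's sum over $j$ and $c_0$ is its $P_0$-analogue. The lemma's formula implicitly sets this term to zero, and it vanishes only under extra structure, e.g.\ when $P$ and $P_0$ share the $(S,W,A)$-moments entering $c_P$ and $I_{S,P}$.
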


\begin{proof}
Under $\Pi_P=\Pi_0$, one could easily see that $R_{N_S}(P,P_0)=0$.
\end{proof}

Applying Lemma \ref{lem:ratio_remainder_conditional_mean} with $h_P(W)=b_P(W)$ gives the exact remainder for $\Psi^\#_{\mathcal{M}_{S,w}}(P)$.

\begin{corollary}\label{cor:subgroup_remainder_bias_ratio}
The exact remainder $R^\#_{\mathcal{M}_{S,w}}(P,P_0):=\Psi^\#_{\mathcal{M}_{S,w}}(P)-\Psi^\#_{\mathcal{M}_{S,w}}(P_0)+P_0D_{\Psi^\#,\mathcal{M}_{S,w},P}$ is given by
\begin{align*}
R^\#_{\mathcal{M}_{S,w}}(P,P_0)
&=\frac{1}{p_P}R_{N_S}(P,P_0)
+\frac{p_P-p_0}{p_P}\left\{\Psi^\#_{\mathcal{M}_{S,w}}(P)-\Psi^\#_{\mathcal{M}_{S,w}}(P_0)\right\}.
\end{align*}
\end{corollary}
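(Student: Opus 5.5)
This follows directly from Lemma \ref{lem:ratio_remainder_conditional_mean} with $h_P(W)=b_P(W)$. The only work is to check that the lemma's hypotheses hold, and in particular that the canonical gradient of $\Psi^\#_{\mathcal{M}_{S,w}}$ in Lemma \ref{lem:can_grad_psi_pound} has the ratio form $D_{\Phi_h,P}=\{D_{N_S,P}-\Phi_h(P)(S-p_P)\}/p_P$. The plan has three steps.

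First, I identify the numerator gradient. The numerator is $N_S(P)=E_P[S\,b_P(W)]$. Differentiating along a smooth path $P_\epsilon$ gives two pieces. The first comes from the marginal law of $(S,W)$ with $b_P$ held fixed, and contributes $S\,b_P(W)-N_S(P)$. The second comes from the dependence of $b_P$ on $\Pi_P$ and on $\beta_S(P)$; these pathwise derivatives produce exactly $p_P$ times the $\Pi$-component plus $p_P$ times the $\beta_S$-component of Lemma \ref{lem:can_grad_psi_pound}. So I set
\[
D_{N_S,P}=S\,b_P(W)-N_S(P)+p_P\bigl\{D_{\Psi^\#,\mathcal{M}_{S,w},\Pi,P}+D_{\Psi^\#,\mathcal{M}_{S,w},\beta,P}\bigr\}.
\]

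Second, I verify the ratio form. Using $N_S(P)=p_P\,\Psi^\#_{\mathcal{M}_{S,w}}(P)$, compute
\[
\frac{D_{N_S,P}-\Psi^\#_{\mathcal{M}_{S,w}}(P)(S-p_P)}{p_P}.
\]
The terms not involving $\Pi$ or $\beta_S$ are $\{S\,b_P(W)-N_S(P)-\Psi^\#(P)S+\Psi^\#(P)p_P\}/p_P$. Since $N_S(P)=\Psi^\#(P)p_P$, these reduce to $S\{b_P(W)-\Psi^\#(P)\}/p_P$, which is exactly the $P_W$-component $D_{\Psi^\#,\mathcal{M}_{S,w},P_W,P}$. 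The remaining two components are reproduced unchanged. Hence the gradient of Lemma \ref{lem:can_grad_psi_pound} equals $D_{\Phi_h,P}$ with $h_P=b_P$, and the exact remainder of $N_S$ is the quantity $R_{N_S}$ whose explicit form is given in Lemma \ref{lem:subgroup_remainder_bias_numerator}.

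Third, I apply the lemma. Lemma \ref{lem:ratio_remainder_conditional_mean} gives
\[
R^\#_{\mathcal{M}_{S,w}}=\frac{R_{N_S}(P,P_0)}{p_P}+\frac{p_P-p_0}{p_P}\{\Phi_h(P)-\Phi_h(P_0)\},
\]
with $\Phi_h=\Psi^\#_{\mathcal{M}_{S,w}}$, which is the claim.

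If one wanted to avoid relying on the lemma, the algebra is short. Write $\Phi(P)-\Phi(P_0)+P_0D_{\Phi,P}$ and substitute $P_0(S-p_P)=p_0-p_P$ and $N_S(P_0)=p_0\Phi(P_0)$. This gives $\{R_{N_S}+p_P\Phi(P)-p_0\Phi(P_0)-\Phi(P)(p_0-p_P)\}/p_P-\Phi(P_0)+\Phi(P)$. It then collapses to $R_{N_S}/p_P+(p_P-p_0)(\Phi(P)-\Phi(P_0))/p_P$. I would check this with care, because the signs are easy to get wrong.

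The main obstacle is the first step: confirming that the $\Pi$- and $\beta_S$-pieces of Lemma \ref{lem:can_grad_psi_pound} equal $1/p_P$ times the corresponding pieces of the numerator gradient. This amounts to checking that the $1/P(S=1)$ prefactors and the $E_P[S\cdots]$ weights match the differentiation of $E_P[S\,b_P(W)]$. I would read this off the derivation in Appendix \ref{app:can_grads}, where the gradient is built as the derivative of $N_S$ divided by $p_P$.
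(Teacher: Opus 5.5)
Your proposal is correct and matches the paper, which obtains this corollary in one line by applying Lemma~\ref{lem:ratio_remainder_conditional_mean} with $h_P=b_P$. Your check that the gradient in Lemma~\ref{lem:can_grad_psi_pound} has the quotient form is exactly how Appendix~\ref{app:can_grads} builds it: the derivative of the numerator, minus $\Psi^\#$ times the derivative of $P(S=1)$, all divided by $P(S=1)$. There is one sign slip in your optional direct computation. Since $P_0D_{N_S,P}=R_{N_S}-N_S(P)+N_S(P_0)$, the bracket should read $R_{N_S}-p_P\Phi(P)+p_0\Phi(P_0)-\Phi(P)(p_0-p_P)$, not $R_{N_S}+p_P\Phi(P)-p_0\Phi(P_0)-\Phi(P)(p_0-p_P)$. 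With the corrected signs it collapses to the claimed expression; as you wrote it, it does not.
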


We can now combine the two components. Since $N_{\mathcal{M}_w}(P):=N_A(P)-N_S(P)=E_P[S\,m_P(W)]$, we have
\[
\Psi_{\mathcal{M}_w}(P)=\frac{N_{\mathcal{M}_w}(P)}{p_P}=E_P[m_P(W)\mid S=1].
\]

\begin{theorem}\label{thm:subgroup_exact_remainder}
The exact remainder $R_{\mathcal{M}_w}(P,P_0):=\Psi_{\mathcal{M}_w}(P)-\Psi_{\mathcal{M}_w}(P_0)+P_0D_{\mathcal{M}_w,P}$ for the subgroup projection parameter satisfies
\begin{align*}
R_{\mathcal{M}_w}(P,P_0)
&=\frac{1}{p_P}\Bigl\{R_{N_A}(P,P_0)-R_{N_S}(P,P_0)\Bigr\}
+\frac{p_P-p_0}{p_P}\left\{\Psi_{\mathcal{M}_w}(P)-\Psi_{\mathcal{M}_w}(P_0)\right\},
\end{align*}
where $R_{N_A}(P,P_0)$ is given in Lemma \ref{lem:subgroup_remainder_tilde_numerator} and $R_{N_S}(P,P_0)$ in Lemma \ref{lem:subgroup_remainder_bias_numerator}.
\end{theorem}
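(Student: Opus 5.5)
The plan is to reduce the statement to Lemma \ref{lem:ratio_remainder_conditional_mean} applied with $h_P(W)=m_P(W)=m_{A,P}(W)-b_P(W)$, combined with linearity of canonical gradients and exact remainders in the parameter. Since $\Psi_{\mathcal{M}_w}(P)=N_{\mathcal{M}_w}(P)/p_P$ with $N_{\mathcal{M}_w}=N_A-N_S$, Lemma \ref{lem:ratio_remainder_conditional_mean} gives directly
\[
R_{\mathcal{M}_w}(P,P_0)=\frac{1}{p_P}R_{N_{\mathcal{M}_w}}(P,P_0)+\frac{p_P-p_0}{p_P}\{\Psi_{\mathcal{M}_w}(P)-\Psi_{\mathcal{M}_w}(P_0)\}.
\]
This holds provided the canonical gradient $D_{\mathcal{M}_w,P}$ used in the definition of $R_{\mathcal{M}_w}$ coincides with $D_{\Phi_h,P}=\{D_{N_{\mathcal{M}_w},P}-\Psi_{\mathcal{M}_w}(P)(S-p_P)\}/p_P$. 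So the argument has two tasks: establish that identification, and show $R_{N_{\mathcal{M}_w}}=R_{N_A}-R_{N_S}$.

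For the second task, pathwise differentiation is linear, so $D_{N_{\mathcal{M}_w},P}=D_{N_A,P}-D_{N_S,P}$ in the model $\mathcal{M}_w=\mathcal{M}_{A,w}\cap\mathcal{M}_{S,w}$. The exact remainder is also linear in the parameter:
\[
R_{N_{\mathcal{M}_w}}=(N_A-N_S)(P)-(N_A-N_S)(P_0)+P_0(D_{N_A,P}-D_{N_S,P})=R_{N_A}-R_{N_S}.
\]
Substituting this into the display above gives the stated formula. The explicit forms of $R_{N_A}$ and $R_{N_S}$ are then those of Lemmas \ref{lem:subgroup_remainder_tilde_numerator} and \ref{lem:subgroup_remainder_bias_numerator}.

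For the first task, consistency of the gradient, I would check that $D_{\mathcal{M}_w,P}=D_{\tilde\Psi,\mathcal{M}_{A,w},P}-D_{\Psi^\#,\mathcal{M}_{S,w},P}$. By Lemmas \ref{lem:can_grad_psi_tilde} and \ref{lem:can_grad_psi_pound}, each of these has the ratio form required by Lemma \ref{lem:ratio_remainder_conditional_mean}, with numerator gradients $D_{N_A,P}$ and $D_{N_S,P}$. The centering terms work out as follows. Write
\[
\frac{S\{h-\Phi\}}{p}=\frac{Sh-N}{p}-\frac{\Phi(S-p)}{p}.
\]
The centering constants then combine linearly, because $\Psi_{\mathcal{M}_w}=\tilde\Psi_{\mathcal{M}_{A,w}}-\Psi^\#_{\mathcal{M}_{S,w}}$. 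An equivalent route is to subtract Corollaries \ref{cor:subgroup_remainder_tilde_ratio} and \ref{cor:subgroup_remainder_bias_ratio} directly: the first terms subtract to $\{R_{N_A}-R_{N_S}\}/p_P$, and the $(p_P-p_0)/p_P$ terms combine by linearity of the parameter difference. I would present this subtraction as the proof and use Lemma \ref{lem:ratio_remainder_conditional_mean} only as a cross-check.

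The main obstacle is a subtle one. The two components are projection parameters defined on different working models, $\mathcal{M}_{A,w}$ and $\mathcal{M}_{S,w}$, whereas the combined parameter is considered on $\mathcal{M}_w$. One must confirm that the gradient of the difference on $\mathcal{M}_w$ is the difference of the stated gradients, so that the remainders are evaluated with matching gradients. I would argue this by noting that $\mathcal{M}_w$ shares the relevant tangent space components, so the gradients restrict validly. Alternatively, and more simply, $D_{\mathcal{M}_w,P}$ can be defined in Theorem \ref{thm:subgroup_atmle_theory} as this difference, in which case the remainder identity is purely algebraic.
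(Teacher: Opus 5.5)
Your proposal is correct and is essentially the paper's argument. The paper gives no separate proof, but it sets up $N_{\mathcal{M}_w}=N_A-N_S$ and $\Psi_{\mathcal{M}_w}=N_{\mathcal{M}_w}/p_P$ right before the theorem precisely so that the result follows from Lemma~\ref{lem:ratio_remainder_conditional_mean} with $h_P=m_P$ plus linearity of the numerator remainders, or equivalently from subtracting Corollaries~\ref{cor:subgroup_remainder_tilde_ratio} and~\ref{cor:subgroup_remainder_bias_ratio}. The ``obstacle'' you flag is not real: $\tilde\Psi_{\mathcal{M}_{A,w}}$, $\Psi^\#_{\mathcal{M}_{S,w}}$ and $\Psi_{\mathcal{M}_w}$ are all defined on the same model $\mathcal{M}$ (the subscripts only name the working model used in the projection), and their gradients are derived along the same arbitrary paths, so linearity of pathwise derivatives gives $D_{\mathcal{M}_w,P}=D_{\tilde{\Psi},\mathcal{M}_{A,w},P}-D_{\Psi^\#,\mathcal{M}_{S,w},P}$ and the identity is purely algebraic.
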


For the pooled component, Corollary \ref{cor:subgroup_remainder_tilde_known_g} shows that the numerator remainder $R_{N_A}(P,P_0)$ vanishes exactly once the known randomized treatment mechanism $\bar g_0$ is correctly plugged in. Thus, compared to the RCT-RWD integration setting, the pooled component is relatively easy to control. Second, the additional remainder
$$
\frac{p_P-p_0}{p_P}\left\{\Psi_{\mathcal{M}_w}(P)-\Psi_{\mathcal{M}_w}(P_0)\right\}
$$
is typically easy to control. Since $p_P=P(S=1)$ is a one-dimensional mean parameter estimated at the parametric rate $n^{-1/2}$, this term is $o_P(n^{-1/2})$ provided $\Psi_{\mathcal{M}_w}(P_n^\star)-\Psi_{\mathcal{M}_w}(P_0)=o_P(1)$. If $p_P=p_0$ is correctly specified, this denominator term vanishes exactly. The main difficulty lies in the bias projection parameter $\Psi^\#_{\mathcal{M}_{S,w}}$. Here, the numerator remainder still depends on the accuracy of the subgroup membership mechanism $\Pi_P$, the outcome regression $\bar Q_P$, and the working model $\tau_{S,\beta_{S,P}}(W,A)$. However, if we manage to estimate $\Pi_0$, or equivalently, $\bar{\Pi}_0$ in our RCT subgroup analysis context, well, then the exact remainder for the bias projection parameter is also well controlled.

\section{Constructing an A-TMLE and practical considerations}\label{sec:atmle_practical}

We now discuss practical aspects of constructing an A-TMLE estimator, including the estimation of nuisance functions, strategies for data-adaptive selection of working models and the recommended loss functions, and computational considerations for efficient implementation.

\subsection{Strategies for estimating nuisance functions and working models}

The construction of an A-TMLE estimator requires estimators for the working models. We use the loss function
$$
\tilde{L}_{\theta_P,\bar{g}_P}(O)(\tau)=\bigg[Y-\theta_P(W)-(A-\bar{g}_P(1\mid W))\tau\bigg]^2
$$
to learn the pooled conditional average treatment effect $\tau_{A,P_0}$, where the loss function is indexed by nuisance functions $\theta_P(W)=E_P(Y\mid W)$ and $\bar{g}_P=P(A=1\mid W)$.
\begin{remark}
The loss function $\tilde{L}$ is double robust with respect to misspecification of either $\theta$ or $\bar{g}$ in that $\tau_{A,\beta_{A,P}}=\arg\min_{\tau\in\mathcal{T}_{A,w}}E_P\tilde{L}_{\theta,\bar{g}}(O)(\tau)$ if either $\theta=\theta_P$ or $\bar{g}=\bar{g}_P$.
\end{remark}
Similarly, we use the loss function
$$
L^\#_{\bar{Q}_P,\Pi_P}(O)(\tau)=\bigg[Y-\bar{Q}_P(W,A)-(S-\Pi_P(1\mid W,A))\tau\bigg]^2
$$
to learn the conditional average subgroup effect $\tau_{S,P_0}$, where the loss function is indexed by nuisance functions $\bar{Q}_P(W,A)=E_P(Y\mid W,A)$ and $\Pi_P(1\mid W,A)=P(S=1\mid W,A)$.
\begin{remark}
The loss function $L^\#$ is double robust with respect to misspecification of either $\bar{Q}$ or $\Pi$ in that $\tau_{S,\beta_{S,P}}=\arg\min_{\tau\in\mathcal{T}_{S,w}}E_PL^\#_{\bar{Q},\Pi}(O)(\tau)$ if either $\bar{Q}=\bar{Q}_P$ or $\Pi=\Pi_P$.
\end{remark}
We recommend estimating the nuisance functions $\theta$, $\bar{g}$, $\bar{Q}$, and $\Pi$ using Super Learner, an ensemble approach that allows the incorporation of multiple flexible machine learning algorithms \citep{sl_vdl_2007}. For obtaining working models for conditional average effects $\tau_{A,\beta_A}$ and $\tau_{S,\beta_S}$, we recommend using Highly Adaptive Lasso (HAL) \citep{vdl_generally_2017,benkeser_hal_2016}.

HAL is a nonparametric regression approach that attempts to represent an unknown target function by a very large dictionary of simple basis functions and then estimates the coefficients by empirical risk minimization with an $\ell_1$-type penalty. HAL only assumes that the true function $f_0$ is \textit{càdlàg} (right continuous with left limits) with a bounded variation (as defined in \cite{vdl_generally_2017}), which is arguably a plausible assumption for most functions one may encounter in many applied settings, as it allows discontinuities, threshold effects, nonlinearities, and high-order interactions, while ruling out only functions with arbitrarily oscillatory behavior or infinite variation over the covariate support. In many biomedical, clinical, and trial applications, the relevant regression functions are defined on bounded covariate domains, involve bounded outcomes or probabilities, and are expected to vary in a finite and structured way with baseline covariates. Under this view, the bounded-variation assumption is a plausible regularity condition rather than a restrictive parametric modeling assumption. HAL is motivated by a general representation of multivariate \textit{càdlàg} functions, and it acts as an empirical risk minimizer in that function class. Model complexity is controlled through the $\ell_1$-norm of the basis coefficients, which corresponds to the \textit{sectional variation norm} of the function, a natural quantity for regularization.

More concretely, if $\{\phi_1,\ldots,\phi_J\}$ denotes a dictionary of $J$ HAL basis functions (including an intercept term) and $L(\cdot)$ denotes a loss function, HAL estimates a regression function $f_0$ by solving
$$
f_n=\arg\min_{Q_\beta}\, P_n L(f_\beta)
\quad\text{subject to}\quad
f_\beta=\sum_{j=1}^J \beta_j \phi_j,\qquad \sum_{j=1}^J |\beta_j|\le C_n,
$$
or, equivalently, by a lasso-penalized empirical risk minimization problem. The parameter $C_n$ controls the complexity of the fit and is typically selected by cross-validation. HAL is attractive because it also provides us with a working model defined by the basis functions with non-zero coefficients in the final fit.

As an example, suppose $f_0:[0,1]^3\to\RR$ is a tri-variate function with arguments $X_1$, $X_2$, and $X_3$. Suppose we do not have knowledge on the smoothness of the function. Then a zero-order HAL basis dictionary would consist of indicator basis functions of the form one-way basis functions $I(X_1\geq u)$, $I(X_2\geq u)$, $I(X_3\geq u)$ for $u\in[0,1]$; two-way basis functions $I(X_1\geq u_1,X_2\geq u_2)$, $I(X_1\geq u_1,X_3\geq u_2)$, $I(X_2\geq u_1,X_3\geq u_3)$ for $(u_1,u_2)\in [0,1]^2$; and three-way basis functions $I(X_1\geq u_1,X_2\geq u_2,X_3\geq u_3)$ for $(u_1,u_2,u_3)\in[0,1]^3$. In \cite{benkeser_hal_2016}, the authors proposed to use a set of noninformative knot points implied by the observed data. One could first generate the HAL design matrix with each column a HAL basis function evaluated at the observed data and then run off-the-shelf Lasso optimization software such as the \texttt{R} package \texttt{glmnet}. In practice, it is often recommended to limit the maximum degrees of interactions and number of knot points per subspace of the function for computational feasibility consideration. In Subsection \ref{sec:group_lasso}, we discuss a novel, theoretically motivated approach to perform function subspace screening to alleviate the computational burden of HAL. 

\subsection{TMLE-style targeting of the working model coefficients of $\tilde{\Psi}$ and $\Psi^\#$}\label{subsec:tmle_style_targeting}

In \cite{van2023adaptive}, the authors propose to refit the working-model coefficients unpenalized so that the empirical mean of the corresponding components of the EIC is solved to zero. Here, we propose an alternative, TMLE-style approach for targeting these coefficients.

Let's first consider the parameter $\tilde{\Psi}$. The following lemma presents a closed-form targeting of the working model coefficients $\beta_A$ that is in the spirit of TMLE targeting via a one-dimensional fluctuation submodel.

\begin{lemma}\label{lem:tmle_psi_tilde}
Define $C_{A,n}\coloneq I_{A,n}^{-1}E_n\left[S\phi_A(W)\right]/P_n(S=1)$ as the clever covariate. Let $$
H_{A,n}\coloneq(A-\bar{g}_n(1\mid W))\phi_A(W)C_{A,n},
\qquad
R_{A,n}\coloneq Y-\theta_n(W)-(A-\bar{g}_n(1\mid W))\tau_{A,\beta_{A,n}}(W).
$$ 
Given a one-dimensional linear fluctuation submodel $\epsilon\to\beta_{A,n,\epsilon}\coloneq\beta_{A,n}+\epsilon C_{A,n}$ for $\beta_A$ through its initial estimator $\beta_{A,n}$ in the direction $C_{A,n}$. The closed-form solution $\epsilon^\star$ of $\epsilon$ that solves the score equation $0=P_nH_{A,n}(R_{A,n}-\epsilon H_{A,n})$ is given by
$$
\epsilon^{\star}=\frac{P_n(H_{A,n}R_{A,n})}{P_n(H_{A,n}^2)}.
$$
\end{lemma}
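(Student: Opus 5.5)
The claim is an elementary least-squares computation along a one-dimensional linear path, so the plan is to make the structure of the path explicit and then solve a linear equation in $\epsilon$. First, I will verify that perturbing the coefficient vector along $C_{A,n}$ perturbs the fitted working-model residual linearly. Since $\tau_{A,\beta}(W)=\beta^\top\phi_A(W)$ is linear in $\beta$,
\[
\tau_{A,\beta_{A,n,\epsilon}}(W)=\tau_{A,\beta_{A,n}}(W)+\epsilon\,\phi_A(W)^\top C_{A,n}.
\]
Hence the residual under the loss $\tilde L_{\theta_n,\bar g_n}$ becomes
\[
Y-\theta_n(W)-(A-\bar g_n(1\mid W))\tau_{A,\beta_{A,n,\epsilon}}(W)=R_{A,n}-\epsilon H_{A,n}.
\]
Here $H_{A,n}$ is read as the scalar $(A-\bar g_n(1\mid W))\phi_A(W)^\top C_{A,n}$.

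Second, I will note that the displayed score equation is exactly the derivative condition for this path. Differentiating $P_n\tilde L_{\theta_n,\bar g_n}(\tau_{A,\beta_{A,n,\epsilon}})=P_n(R_{A,n}-\epsilon H_{A,n})^2$ in $\epsilon$ gives $-2P_nH_{A,n}(R_{A,n}-\epsilon H_{A,n})$. Setting this to zero is the stated equation, which justifies calling it the score of the fluctuation. The objective is a convex quadratic in $\epsilon$, so the root is its unique minimizer whenever $P_nH_{A,n}^2>0$.

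Third, I will expand the equation as $P_n(H_{A,n}R_{A,n})-\epsilon P_n(H_{A,n}^2)=0$. This uses only linearity of $P_n$ and the fact that $\epsilon$ is a constant. Dividing through gives $\epsilon^\star=P_n(H_{A,n}R_{A,n})/P_n(H_{A,n}^2)$.

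The only point needing care, which I expect to be the main obstacle, is nondegeneracy: $P_nH_{A,n}^2>0$. This holds unless $\phi_A(W_i)^\top C_{A,n}=0$ for every observation or $A_i=\bar g_n(1\mid W_i)$ for every observation. The second case cannot occur when $\bar g_n\in(0,1)$, for instance $1/2$ by randomization. The first case would mean the clever direction is null on the sample, and then any $\epsilon$ solves the equation and we may take $\epsilon^\star=0$. I would state this caveat explicitly.

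Finally, I will remark why the construction is useful. At $\epsilon^\star$, the updated fit solves $P_n\{(A-\bar g_n)\phi_A^\top C_{A,n}(Y-\theta_n-(A-\bar g_n)\tau_{A,\beta^\star})\}=0$. Comparing with Lemma \ref{lem:can_grad_psi_tilde}, this is precisely the empirical mean of the $\beta_A$-component $\{E_n[S\phi_A]\}^\top D_{A,\beta_A,n}/P_n(S=1)$ of the canonical gradient. This uses symmetry of $I_{A,n}^{-1}$ and takes $\theta_n+(A-\bar g_n)\tau$ as the working regression $\bar Q_{w,n}$. The computation itself is routine.
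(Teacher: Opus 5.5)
Your proposal is correct and follows essentially the same route as the paper's proof. Both use linearity of $\tau_{A,\beta}$ in $\beta$ to get $R_{A,n,\epsilon}=R_{A,n}-\epsilon H_{A,n}$ along the path, and then solve the linear equation $P_n(H_{A,n}R_{A,n})-\epsilon P_n(H_{A,n}^2)=0$; your additions (the derivative-of-the-squared-loss reading of the score, the caveat $P_nH_{A,n}^2>0$, and the identification of $H_{A,n}R_{A,n}$ with the $\beta_A$-component of the canonical gradient) are valid supplements to what the paper states.
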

The proof is provided in Appendix \ref{app:tmle_psi_tilde}. Lemma \ref{lem:tmle_psi_tilde} shows that it is not necessary to solve all score equations in the working model using the unpenalized fit. Since our goal is to solve the empirical mean of the EIC presented in Lemma \ref{lem:can_grad_psi_tilde}, we only need to solve that particular one. This preserves all the signals to solve the score equation that really matters from a de-biasing of the target parameter perspective.

Now, similarly, we can derive a similar lemma for $\beta_S$ for the working model for $\Psi^\#$.

\begin{lemma}
Define 
$$
C_{S,n}\coloneq I_{S,n}^{-1}\frac{E_n\Big[S\big\{\Pi_n(0\mid W,0)\phi_S(W,0)-\Pi_n(0\mid W,1)\phi_S(W,1)\big\}\Big]}{P_n(S=1)}
$$
as the clever covariate. Define a one-dimensional linear fluctuation submodel $\gamma\to\beta_{S,n,\gamma}\coloneq\beta_{S,n}+\gamma C_{S,n}$ for $\beta_S$ in the direction $C_{S,n}$.
Let
\begin{align*}
H_{S,n}&\coloneq(S-\Pi_n(1\mid W,A))\phi_S(W,A)C_{S,n}\\
R_{S,n}&\coloneq Y-\bar{Q}_n(W,A)-(S-\Pi_n(1\mid W,A))\tau_{S,\beta_{S,n}}(W,A).
\end{align*}
The closed-form solution $\gamma^\star$ of $\gamma$ that solves the estimating equation $0=P_nH_{S,n}(R_{S,n}-\gamma H_{S,n})$ is given by
$$
\gamma^{\star}=\frac{P_n(H_{S,n}R_{S,n})}{P_n(H_{S,n}^2)}.
$$
\end{lemma}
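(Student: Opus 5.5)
The claim is a one-dimensional least-squares computation. I will prove it the same way the paper treats Lemma \ref{lem:tmle_psi_tilde}, with $(A,\bar g_n,\theta_n,\phi_A,C_{A,n})$ replaced by $(S,\Pi_n,\bar Q_n,\phi_S,C_{S,n})$.

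\textbf{Step 1: the residual is affine in $\gamma$.} Under the linear fluctuation $\beta_{S,n,\gamma}=\beta_{S,n}+\gamma C_{S,n}$, the working model is linear in its coefficients, so
$$\tau_{S,\beta_{S,n,\gamma}}(W,A)=\phi_S(W,A)^\top\beta_{S,n}+\gamma\,\phi_S(W,A)^\top C_{S,n}.$$
The fluctuated residual of the loss $L^\#_{\bar Q_n,\Pi_n}$ is therefore
$$Y-\bar Q_n(W,A)-(S-\Pi_n(1\mid W,A))\tau_{S,\beta_{S,n,\gamma}}(W,A)=R_{S,n}-\gamma H_{S,n},$$
where $H_{S,n}=(S-\Pi_n(1\mid W,A))\phi_S(W,A)^\top C_{S,n}$. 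Here the product $\phi_S C_{S,n}$ in the statement is read as this scalar inner product.

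\textbf{Step 2: the estimating equation is the score of the loss.} Differentiating $P_nL^\#_{\bar Q_n,\Pi_n}(\tau_{S,\beta_{S,n,\gamma}})$ in $\gamma$ gives
$$-2P_nH_{S,n}(R_{S,n}-\gamma H_{S,n}).$$
So the stated equation is the first-order condition of a convex quadratic in $\gamma$.

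\textbf{Step 3: solve the linear equation.} The equation reads
$$P_n(H_{S,n}R_{S,n})-\gamma\,P_n(H_{S,n}^2)=0,$$
which gives $\gamma^\star=P_n(H_{S,n}R_{S,n})/P_n(H_{S,n}^2)$. Because the quadratic is convex, this root is also the minimizer.

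The only point needing care is that $P_n(H_{S,n}^2)>0$. I would argue this from nondegeneracy. Assume $C_{S,n}\neq 0$; otherwise no targeting is needed and any $\gamma$ works. Assume also that $S-\Pi_n(1\mid W,A)$ does not vanish identically on the sample, which holds when $0<\Pi_n<1$ and both $S$ values occur. Finally, assume $\phi_S^\top C_{S,n}$ is not identically zero on the sample. Under these conditions $H_{S,n}$ is nonzero at some observation, so the denominator is positive.

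\textbf{Why the choice of direction matters.} The clever-covariate direction makes $P_nH_{S,n}(R_{S,n}-\gamma^\star H_{S,n})=0$ coincide with the empirical mean of $D_{\Psi^\#,\mathcal M_{S,w},\beta,P}$ from Lemma \ref{lem:can_grad_psi_pound}, evaluated at the plug-in estimates. This component is exactly
$$\frac{1}{P_n(S=1)}E_n\bigl[S\{\Pi_n(0\mid W,0)\phi_S(W,0)-\Pi_n(0\mid W,1)\phi_S(W,1)\}\bigr]^\top I_{S,n}^{-1}(S-\Pi_n)\phi_S\,(\text{residual}),$$
and by symmetry of $I_{S,n}$ it equals $H_{S,n}$ times the residual.
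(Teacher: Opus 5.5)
Your proof is correct and is essentially the paper's argument. The linear fluctuation turns the residual into $R_{S,n}-\gamma H_{S,n}$, so the estimating equation is linear in $\gamma$ and solves to $\gamma^\star$; your least-squares and EIC remarks match the paper's observation that $D^\#_{\mathcal{M}_{S,w},\beta,P}=H_PR_P$.

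One small tightening of your side remark on $P_n(H_{S,n}^2)>0$. Knowing that $S-\Pi_n(1\mid W,A)$ and $\phi_S^\top C_{S,n}$ are each not identically zero on the sample does not force their product to be nonzero at a common observation. Strict $0<\Pi_n<1$, as guaranteed by the truncation the paper uses, makes $S_i-\Pi_n(1\mid W_i,A_i)\neq 0$ at every sample point, and that closes the gap.
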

The proof is provided in Appendix \ref{app:tmle_psi_pound}.

\subsection{Using group Lasso to screen for HAL subspaces}\label{sec:group_lasso}

For zero-order HAL, assuming all the covariates are continuous, the column dimension of the HAL design matrix could go as large as $n\times (2^d-1)$, which is computationally infeasible. Therefore, it is \textit{never} recommended to run the full-blown HAL on reasonably-sized data. There has been some recent work attempting to address the computational aspect of HAL, including \cite{schuler_lassoed_2022,wang_highly_2026,meixide_highly_2026}. Here, we introduce a novel approach that is both practically and theoretically appealing to screen the number of HAL basis functions down to a computationally feasible regime via a group-sparse penalty.

Let $\mathcal{R}_n$ denote a data-adaptively selected index set specifying a collection of HAL basis functions with non-zero coefficients based on the observed sample, and $\mathcal{R}_0$ denote a fixed (oracle) index set corresponding to the oracle submodel that contains the true function of interest. Define $\mathcal{D}(\mathcal{R}_n)$ and $\mathcal{D}(\mathcal{R}_0)$ as the corresponding working models. The A-TMLE framework suggests viewing $\mathcal{D}({\cal R}_n)$ as a data-adaptive working model and requiring it to stabilize relative to a fixed oracle model $\mathcal{D}({\cal R}_0)$ \citep{van2023adaptive}. A convenient regime is the special case $P\bigl(\mathcal{D}({\cal R}_n)\subseteq \mathcal{D}({\cal R}_0)\bigr)\to 1$,
together with the requirement that the sieve within $\mathcal{D}({\cal R}_n)$ approximates $\mathcal{D}({\cal R}_0)$ from below at the required rate (e.g., see Conditions C2 and C4 in \cite{van2023adaptive}). The goal is not necessarily exact recovery of the smallest support of $f_0$, but rather asymptotic stabilization to some fixed oracle set ${\cal R}_0$ such that $f_0\in \mathcal{D}({\cal R}_0)$. The oracle submodel need not be minimal; it may simply be the limiting selected model. A natural sufficient formulation is therefore that ${\cal R}_n$ converges from below to ${\cal R}_0$, in the sense that $P({\cal R}_n\subseteq {\cal R}_0)\to 1$, and the induced sieve within $\mathcal{D}({\cal R}_n)$ becomes sufficiently rich to approximate $\mathcal{D}({\cal R}_0)$ at the required rate. Under this regime, the final estimator lies in the oracle subspace with high probability for large $n$.

HAL basis functions are naturally organized by \emph{subspace}. Each basis function depends on a particular subset of coordinates $s\subset\{1,\ldots,d\}$, where $d$ is the dimension of $f_0$, and a knot point $u$. Therefore, the collection of all basis functions sharing the same coordinate set $s$ can be viewed as representing one additive or interaction component of the target function. Now, this motivates a practical way to construct such a sequence ${\cal R}_n$ using a group-sparsity penalty as a first-stage screen for subspaces. Consider the zero-order HAL case. Suppose we restrict attention to variable interactions of degree at most $m$, giving the collection of coordinate indices
$$
{\cal S}_m=\{s\subset\{1,\dots,d\}: 0<|s|\le m\}.
$$
For each $s\in{\cal S}_m$, let $\beta_s$ denote the block of coefficients corresponding to all zero-order HAL basis functions depending on the same coordinate set $s$ but different knot points. Given a loss function $L(f)(O)$, one may define the group Lasso estimator \citep{yuan_model_2006} given by
$$
\hat\beta=\arg\min_\beta
P_n L(f_\beta)(O)+\lambda \sum_{s\in{\cal S}_m} w_s \|\beta_s\|_2,
$$
with weights $w_s$ chosen, for example, as $w_s=\sqrt{p_s}$, where $p_s$ is the number of basis functions in group $s$. The role of this penalty is to induce sparsity at the level of subspaces rather than individual basis functions. That is, for many interaction sets $s$, the entire coefficient block $\beta_s$ is set to zero. In this sense, the group Lasso acts as a screening rule for additive components and interactions, producing a relatively small collection of candidate subspaces that appear to matter.

This screening step can then be combined with a second-stage HAL-MLE. Specifically, one may first run the group Lasso with a small number of knot points for each subspace, over a range of $\lambda$ values, to generate a ranked family of candidate supports $\{{\cal R}_{0,\ell}:\ell=1,2,\dots\}$. Then, for each candidate support, one fits a HAL-MLE restricted to the corresponding subspace $\mathcal{D}({\cal R}_{0,\ell})$, using a much denser knot point grid. Cross-validation may then be used over the candidate support index $\ell$ to select the final working model. Conceptually, this mirrors the setup of cross-validation over a growing sequence of working models described in \cite{van2023adaptive}. Specifically, the screening step selects a candidate oracle submodel, and the final within-subspace HAL fit may be undersmoothed, possibly, only relative to that selected submodel, rather than relative to the full function space.

We present the pseudo-code for the implementation of A-TMLE in Algorithm \ref{alg:atmle}.

\begin{breakablealgorithm}
\footnotesize
\caption{Pseudo-code for A-TMLE}
\label{alg:atmle}
\begin{algorithmic}[1]
\STATE Obtain estimators $\bar{g}_n$ of $\bar{g}_0$ and $\theta_n$ of $\theta_0$;
\STATE Fit a HAL with basis functions $\{\phi_{A,j}(W):j\}$:
$$
\beta_{A,n}=\arg\min_{\beta_A,\norm{\beta_A}_1\leq M_{A,n}}
P_n\Bigl[
\bigl\{
Y-\theta_n(W)-(A-\bar{g}_n(1\mid W))\tau_{A,\beta_A}(W)
\bigr\}^2
\Bigr];
$$
\STATE Define $\mathcal{R}_{A,n}$ as the index set of the non-zero entries of $\beta_{A,n}$, and restrict $\tau_{A,\beta_A}$ to the selected working model generated by $\{\phi_{A,j}(W):j\in\mathcal{R}_{A,n}\}$;
\STATE Define
$$
C_{A,n}\coloneq I_{A,n}^{-1}\frac{P_n\{S\phi_A(W)\}}{P_n(S=1)},
\qquad
H_{A,n}\coloneq (A-\bar{g}_n(1\mid W))\phi_A(W)C_{A,n},
$$
and
$$
R_{A,n}\coloneq Y-\theta_n(W)-(A-\bar{g}_n(1\mid W))\tau_{A,\beta_{A,n}}(W).
$$
Compute
$$
\epsilon_{A,n}^{\star}\coloneq \frac{P_n(H_{A,n}R_{A,n})}{P_n(H_{A,n}^2)},
$$
and update
$$
\beta_{A,n}^{\star}\coloneq \beta_{A,n}+\epsilon_{A,n}^{\star}C_{A,n}.
$$
\STATE Obtain estimators $\Pi_n^{(0)}$ of $\Pi_0$ and $\bar{Q}_n$ of $\bar{Q}_0$;
\STATE Fit a HAL with basis functions $\{\phi_{S,j}(W,A):j\}$:
$$
\beta_{S,n}^{(0)}=\arg\min_{\beta_S,\norm{\beta_S}_1\leq M_{S,n}}
P_n\Bigl[
\bigl\{
Y-\bar{Q}_n(W,A)-(S-\Pi_n^{(0)}(1\mid W,A))\tau_{S,\beta_S}(W,A)
\bigr\}^2
\Bigr];
$$
\STATE Define $\mathcal{R}_{S,n}$ as the index set of the non-zero entries of $\beta_{S,n}^{(0)}$, and restrict $\tau_{S,\beta_S}$ to the selected working model generated by $\{\phi_{S,j}(W,A):j\in\mathcal{R}_{S,n}\}$;
\STATE Define the initial estimate 
$$
P_n^{(0)}\coloneq(\bar{g}_n,\theta_n,\beta_{A,n}^{\star},\Pi_n^{(0)},\bar{Q}_n,\beta_{S,n}^{(0)}),
$$ 
and the convergence threshold 
$$
s_n\coloneq \sigma_n^{(0)}/(\sqrt{n}\log n),$$
where $\sigma_n^{(0)}$ is the estimated standard deviation of the efficient influence curve of $\Psi_{\mathcal{M}_w}$ at $P_n^{(0)}$. Set iteration counter $k\coloneq 0$;
\WHILE{$|P_nD_{\mathcal{M}_w,P_n^{(k)}}|>s_n$}
\STATE Target $\Pi$ by fitting a univariate logistic regression with weights $\bar{\Pi}_n(1\mid W)/P_n(S=1)$:
$$
\text{logit}\,\Pi_n^{(k+1)}(1\mid W,A)
=
\text{logit}\,\Pi_n^{(k)}(1\mid W,A)
+\varepsilon_n^{(k)}
\left\{
\frac{A}{\bar{g}_n(1\mid W)}\tau_{S,\beta_{S,n}^{(k)}}(W,1)-\frac{1-A}{\bar{g}_n(0\mid W)}\tau_{S,\beta_{S,n}^{(k)}}(W,0)
\right\},
$$
where $\varepsilon_n^{(k)}$ is the fitted coefficient;
\STATE Define
$$
C_{S,n}^{(k+1)}
\coloneq
 I_{S,n}^{(k+1)-1}
\frac{
P_n\!\Bigl[
S\Bigl\{
\Pi_n^{(k+1)}(0\mid W,0)\phi_S(W,0)
-
\Pi_n^{(k+1)}(0\mid W,1)\phi_S(W,1)
\Bigr\}
\Bigr]
}{
P_n(S=1)
},
$$
$$
H_{S,n}^{(k+1)}
\coloneq
(S-\Pi_n^{(k+1)}(1\mid W,A))\phi_S(W,A)C_{S,n}^{(k+1)},
$$
and
$$
R_{S,n}^{(k+1)}
\coloneq
Y-\bar{Q}_n(W,A)-(S-\Pi_n^{(k+1)}(1\mid W,A))\tau_{S,\beta_{S,n}^{(k)}}(W,A).
$$
Compute
$$
\gamma_n^{(k+1)\star}
\coloneq
\frac{P_n(H_{S,n}^{(k+1)}R_{S,n}^{(k+1)})}{P_n((H_{S,n}^{(k+1)})^2)},
$$
and update
$$
\beta_{S,n}^{(k+1)}
\coloneq
\beta_{S,n}^{(k)}+\gamma_n^{(k+1)\star}C_{S,n}^{(k+1)}.
$$
\STATE Define
$$
P_n^{(k+1)}
\coloneq
(\bar{g}_n,\theta_n,\beta_{A,n}^{\star},\Pi_n^{(k+1)},\bar{Q}_n,\beta_{S,n}^{(k+1)}),
$$
and update
$$
s_n\coloneq \sigma_n^{(k+1)}/(\sqrt{n}\log n),
$$
where $\sigma_n^{(k+1)}$ is the estimated standard deviation of the efficient influence curve of $\Psi_{\mathcal{M}_w}$ at $P_n^{(k+1)}$;
\STATE Increase iteration counter $k$ by 1;
\ENDWHILE
\STATE Define $K\coloneq k$, and compute the plug-in estimator
$$
\Psi_{\mathcal{M}_w}(P_n^{(K)})
=
\tilde{\Psi}_{\mathcal{M}_{A,w}}(\beta_{A,n}^{\star})
-
\Psi_{\mathcal{M}_{S,w}}^\#(\Pi_n^{(K)},\beta_{S,n}^{(K)}),
$$
where
$$
\tilde{\Psi}_{\mathcal{M}_{A,w}}(\beta_{A,n}^{\star})
=
\frac{P_n\bigl[S\,\tau_{A,\beta_{A,n}^{\star}}(W)\bigr]}{P_n(S=1)},
$$
and
$$
\Psi_{\mathcal{M}_{S,w}}^\#(\Pi_n^{(K)},\beta_{S,n}^{(K)})
=
\frac{
P_n\Bigl[
S\Bigl\{
\Pi_n^{(K)}(0\mid W,0)\tau_{S,\beta_{S,n}^{(K)}}(W,0)
-
\Pi_n^{(K)}(0\mid W,1)\tau_{S,\beta_{S,n}^{(K)}}(W,1)
\Bigr\}
\Bigr]
}{
P_n(S=1)
}.
$$
\STATE Compute the Wald-type 95\% confidence interval
$$
\Psi_{\mathcal{M}_w}(P_n^{(K)})
\pm
1.96\sqrt{P_n\bigl(D_{\mathcal{M}_w,P_n^{(K)}}^2\bigr)/n}.
$$
\end{algorithmic}
\end{breakablealgorithm}

\section{Simulation study}\label{sec:simulation}

The goal of this simulation study is to compare estimators for subgroup analysis that are capable of borrowing information from the remainder of the trial while preserving nominal confidence interval coverage. As benchmarks, we include three estimators fit only using subgroup data: an unadjusted difference-in-means estimator, a subgroup-only augmented inverse probability weighted (AIPW) estimator \citep{robins_estimation_1994,robins_analysis_1995,vdl_unified_2003, tsiatis_semiparametric_2006}, and a subgroup-only TMLE (as described in Remark \ref{rem:subgroup_tmle}). We compare these subgroup-only estimators with TMLE-PR and A-TMLE. Considering our LEADER real-data application, we designed our simulation study to closely mimic the data-generating distribution that likely generated the real LEADER trial data. While LEADER data are not publicly available, to make the simulation reproducible, we generate baseline covariates based on summary statistics and dependence structures of the observed LEADER trial data.

\subsection{Synthetic data-generating processes based on LEADER}\label{subsec:dgps}

For demonstration purposes, we consider the Asian subgroup as the subgroup of interest here. We generated baseline covariates to resemble the LEADER trial population using a Gaussian copula model with empirically estimated dependence structure and marginal distributions matched to reported summary statistics of patient baseline characteristics of LEADER. Specifically, let $Z=(Z_1,\dots,Z_d)$ denote a multivariate normal random vector with mean zero and correlation matrix $R$. We transformed each component to a uniform variable $U_j=\Phi(Z_j)$, where $\Phi(\cdot)$ is the standard normal cumulative distribution function (CDF). These uniforms were then mapped to observed baseline covariates. For continuous baseline covariates, we used a piecewise-linear inverse CDF constructed from the reported minimum, first quartile, median, third quartile, and maximum. Thus, if $W_j$ denotes a continuous covariate and $q_j(p)$ its approximate quantile function, then $W_j=\tilde{q}_j(U_j)$, where $\tilde{q}_j$ is the linear interpolation through the five reported quantile points. This process was done for age, diabetes duration, estimated glomerular filtration rate (eGFR), body mass index (BMI), glycated hemoglobin (HbA1c), low-density lipoprotein (LDL) cholesterol, and high-density lipoprotein (HDL) cholesterol. For binary baseline covariates, we used Bernoulli draws induced by the latent uniforms, i.e., $W_j=I(U_j<p_j)$, where $p_j$ was chosen to match the reported marginal prevalence. This process was done for sex, insulin-naive status (INSNVFL), antihypertensive treatment, and prior cardiovascular disease status (PRCVFL).

Throughout, we let $W$ denote the baseline covariates other than race, and let $S\in\{0,1\}$ denote the subgroup indicator for Asian race. Asian race was generated separately from the Gaussian copula model using a logistic model depending on baseline BMI, so that the simulated Asian subgroup was both approximately the same size as in LEADER and leaner on average. Let $z_{\text{BMI}}$ be standardized BMI. We generated $S\sim\text{Bernoulli}\{\text{expit}(\alpha_0+\alpha_1 z_{\text{BMI}})\}$, where $\alpha_0$ was chosen so that the marginal prevalence of Asian participants matched the LEADER race proportion, $920/9340\approx 0.099$, and $\alpha_1=-1.131383<0$ made Asian race more common among individuals with lower BMI.

The treatment was randomized independently of baseline covariates, that is, $A\sim\text{Bernoulli}(1/2)$. We designed the following three scenarios for the outcome process. These scenarios correspond to three broad classes of event-generating mechanisms commonly encountered in time-to-event settings, including (1) simple proportional-hazards (PH), (2) complex PH, and (3) complex non-PH, thereby allowing us to examine estimator performance across a range of plausible data-generating regimes. These scenarios were intentionally constructed so that the subgroup-specific treatment effect among Asian patients differs substantially from the pooled treatment effect evaluated over the Asian covariate distribution. In all three scenarios, this discrepancy is induced by explicit $A\times S$ interaction terms. Consequently, these settings represent an intentionally unfavorable setting for estimators that rely on the pooled estimand $\tilde{\Psi}$ naively when the scientific target is subgroup-specific, $\Psi$, because the resulting bias parameter $\Psi^\#$ is large in magnitude by design. We emphasize that this setup should be viewed as a stress-test rather than as a literal representation of the LEADER trial. Indeed, in our real-data analysis of LEADER presented in the following section, the estimated bias parameter is quite small. Nevertheless, from a simulation standpoint, it is useful to first examine this adverse regime in order to assess estimator behavior when the true bias estimand is non-negligible.

\paragraph{Scenario 1: Simple proportional-hazards.}

In this scenario, we generated time to event from a Weibull proportional-hazards (PH) model. Conditional on treatment $A$, subgroup status $S$, and baseline covariates $W$, the hazard function took the form $\lambda_1(t\mid S,A,W)=\eta\nu t^{\nu-1}\exp\bigl(\beta_W^\top W+\beta_AA+\beta_{AS}AS\bigr),$ where $\nu=1.5$ is the Weibull shape parameter, $\eta$ is the scale parameter, $\beta_A=\log(0.95)$, and $\beta_{AS}=\log(0.55)-\log(0.95)$. Thus, the treatment hazard ratio is $0.95$ for non-Asian participants ($S=0$) and $0.55$ for Asian participants ($S=1$). The Weibull scale parameter $\eta$ is calibrated by Monte Carlo so that, under 1:1 randomization, the overall pooled event risk by 3.8 years matches the LEADER target event proportion, $1302/9340$. Event times were simulated by inverse transform sampling:
$$
T=\left\{\frac{-\log(U)}{\eta\exp\bigl(\beta_W^\top W+\beta_AA+\beta_{AS}AS\bigr)}\right\}^{1/\nu},
$$
where $U\sim\text{Uniform}(0,1)$.

To simplify the design, we assumed no dropout or loss to follow-up. LEADER actually had very high retention and nearly complete endpoint ascertainment. In published follow-up summaries, 96.8\% of randomized participants either completed the final visit, died, or experienced a primary outcome event, and vital status was known for 99.7\% of participants \citep{marso_liraglutide_2016}. In our simulation study, all participants were administratively censored at a common fixed follow-up time $\tau=3.8$ years, chosen to reflect the median follow-up in LEADER. Thus, the observed follow-up time and event indicator were $\tilde{T}=\min(T,\tau)$ and $\Delta=I(T\leq\tau)$. The binary observed outcome used in the simulation was therefore $Y=I(T\leq\tau)=\Delta$, that is, whether the event occurred by 3.8 years. Under this data-generating process,
$$
E(Y\mid S=s,A=a,W)=1-\exp\left\{-\eta\tau^\nu\exp\bigl(\beta_W^\top W+a(\beta_A+\beta_{AS}s)\bigr)\right\}.
$$

\paragraph{Scenario 2: Complex proportional-hazards.}

In this scenario, the hazard still satisfies the PH assumption, but it involves a LEADER-style linear prognostic score together with additional non-linear terms. First, we define the centered/scaled covariates
\begin{gather*}
\text{age}_c=\frac{\text{age}-64}{10},\qquad
\text{dur}_c=\frac{\text{diabetes duration}-11.5}{10},\qquad
\text{egfr}_c=\frac{\text{eGFR}-84.3}{20},\\
\text{bmi}_c=\frac{\text{BMI}-31.7}{10},\qquad
\text{hba1c}_c=\frac{\text{HbA1c}-8.3}{2},\qquad
\text{ldl}_c=\text{LDL}-2.20.
\end{gather*}
Let $\ell(W)$ denote the linear prognostic score used in Scenario 1, and define a non-linear prognostic function:
\begin{align*}
h(W)&=0.08\,\text{age}_c^2+0.06\,\text{dur}_c+0.08\sin(\pi\text{hba1c}_c)+0.06\,\text{ldl}_c\text{egfr}_c\\
&\quad+0.05\,\text{bmi}_c^2+0.08\,\text{PRCVFL}+0.06\,\text{INSNVFL}.
\end{align*}
We then define the log-relative hazard as $f_2(S,A,W)=\ell(W)+2.5\{h(W)+0.04S\}+\beta_AA+\beta_{AS}AS$, with $\beta_A=\log(0.95)$ and $\beta_{AS}=\log(0.55)-\log(0.95)$. Therefore, $\lambda_2(t\mid S,A,W)=\eta\nu t^{\nu-1}\exp\{f_2(S,A,W)\}$. This is still a PH model because the treatment effect enters multiplicatively through a time-constant term. Under this scenario,
$$
E(Y\mid S=s,A=a,W)=1-\exp\left\{-\eta\tau^\nu\exp\bigl(\ell(W)+2.5\{h(W)+0.04s\}+a(\beta_A+\beta_{AS}s)\bigr)\right\}.
$$

\paragraph{Scenario 3: Complex non-proportional-hazards.}

We now design a scenario in which the hazard is piecewise in time, so that the PH assumption is violated. We use the same linear score $\ell(W)$ and non-linear prognostic function $h(W)$ as in Scenario 2, and let the change-point be $t_0=1.5$ years. Define $\beta_A^{(1)}=\log(0.98)$ and $\beta_A^{(2)}=\log(0.90)$ for the early and late treatment effects among non-Asian participants, and $\beta_{AS}^{(1)}=\log(0.70)-\log(0.98)$ and $\beta_{AS}^{(2)}=\log(0.45)-\log(0.90)$ for the additional subgroup-treatment interaction in the early and late periods. Then the hazard is
$$
\lambda_3(t\mid S,A,W)=
\begin{cases}
\eta\nu t^{\nu-1}\exp\bigl\{\ell(W)+2.5\{h(W)+0.04S\}+A(\beta_A^{(1)}+\beta_{AS}^{(1)}S)\bigr\},&t<t_0,\\[6pt]
\eta\nu t^{\nu-1}\exp\bigl\{\ell(W)+2.5\{h(W)+0.04S\}+A(\beta_A^{(2)}+\beta_{AS}^{(2)}S)\bigr\},&t\ge t_0.
\end{cases}
$$
For $\tau>t_0$, the cumulative hazard is
\begin{align*}
\Lambda_{3}(\tau\mid S=s,A=a,W)&=\eta\Big[
t_0^\nu\exp\bigl\{\ell(W)+2.5\{h(W)+0.04s\}+a(\beta_A^{(1)}+\beta_{AS}^{(1)}s)\bigr\}\\
&+(\tau^\nu-t_0^\nu)\exp\bigl\{\ell(W)+2.5\{h(W)+0.04s\}+a(\beta_A^{(2)}+\beta_{AS}^{(2)}s)\bigr\}\Big]
\end{align*}
and therefore
$$
E(Y\mid S=s,A=a,W)=1-\exp\left\{-\Lambda_{3}(\tau\mid S=s,A=a,W)\right\}.
$$

\paragraph{}
For all three scenarios, the true causal estimands were evaluated by Monte Carlo integration under the known data-generating processes. Specifically, we generated a large sample of baseline covariates $(W,S)$ from the same baseline distribution used in the simulation. For each scenario $j\in\{1,2,3\}$, subgroup value $s\in\{0,1\}$, treatment level $a\in\{0,1\}$, and time $t$, we computed the conditional event risk $F_{j}(t\mid S=s,A=a,W)=P(Y=1\mid S=s,A=a,W)$. We then computed the truth as:
$$
\psi_{j}(t)=E\bigl[F_{j}(t\mid S=1,A=1,W)-F_{j}(t\mid S=1,A=0,W)\mid S=1\bigr],
$$
the subgroup-specific treatment effect averaged over the Asian covariate distribution.

\paragraph{Bias function under each scenario.}

Now, we compute the bias function, i.e., $\tau_{S,0}(W,A)=E_0(Y\mid S=1,A,W)-E_0(Y\mid S=0,A,W)$ under each scenario. This quantity is non-zero in all scenarios because the outcome model contains explicit $A\times S$ interaction terms, so the effect of treatment differs between subgroup and non-subgroup patients even after conditioning on $W$. Under Scenario 1, recall that
$$
E(Y\mid S=s,A=a,W)=1-\exp\left\{-\eta\tau^\nu\exp\bigl(\beta_W^\top W+a(\beta_A+\beta_{AS}s)\bigr)\right\}.
$$
Therefore,
$$
\tau_{S,0}(A,W)=\exp\left\{-\eta\tau^\nu\exp\bigl(\beta_W^\top W+A\beta_A\bigr)\right\}
-\exp\left\{-\eta\tau^\nu\exp\bigl(\beta_W^\top W+A(\beta_A+\beta_{AS})\bigr)\right\}.
$$
In particular, when $A=0$, this bias function is zero, whereas when $A=1$ it is non-zero because the treatment hazard differs between $S=1$ and $S=0$ through the interaction coefficient $\beta_{AS}$. Under Scenario 2, we have
$$
E(Y\mid S=s,A=a,W)=1-\exp\left\{-\eta\tau^\nu\exp\bigl(\ell(W)+2.5\{h(W)+0.04s\}+a(\beta_A+\beta_{AS}s)\bigr)\right\},
$$
so the bias function is
$$
\tau_{S,0}(W,A)=\exp\left\{-\eta\tau^\nu\exp\bigl(\ell(W)+2.5h(W)+A\beta_A\bigr)\right\}
-\exp\left\{-\eta\tau^\nu\exp\bigl(\ell(W)+2.5\{h(W)+0.04\}+A(\beta_A+\beta_{AS})\bigr)\right\}.
$$
Unlike Scenario 1, this bias function is already non-zero at $A=0$ because $S$ also enters the prognostic part of the hazard through the term $2.5\times0.04S$. When $A=1$, the bias additionally reflects the subgroup-treatment interaction. Under Scenario 3, recall that we have
$$
E(Y\mid S=s,A=a,W)=1-\exp\left\{-\Lambda_{3}(\tau\mid S=s,A=a,W)\right\},
$$
and hence
$$
\tau_{S,0}(A,W)=\exp\left\{-\Lambda_{3}(\tau\mid S=0,A,W)\right\}
-\exp\left\{-\Lambda_{3}(\tau\mid S=1,A,W)\right\}.
$$
Again, this bias function is non-zero both because $S$ is prognostic and because the treatment effect is allowed to differ by subgroup in both the early and late hazard components.

\subsection{Candidate estimators compared in the simulation study}\label{subsec:candidate_estimators}

We compare the performance of five candidate estimators. First, as a reference, we consider an unadjusted estimator on the subgroup-only data. Since conditional on the subgroup, treatment is still randomized, the subgroup-only unadjusted estimator is unbiased and asymptotically normal, but not efficient. Second, we consider a subgroup-only AIPW estimator and a subgroup-only TMLE estimator. Due to covariate adjustment in these two methods, if the baseline covariates are prognostic, we expect them to be more efficient than the unadjusted estimator \citep{schuler_increasing_2022,emilie_within_2026}. Third, we consider TMLE-PR, described in Section \ref{sec:tmle_pr}, which improves upon the subgroup-only TMLE by fitting the initial outcome regression on the pooled RCT data, allowing non-subgroup observations to help fit the subgroup-specific outcome regression. Finally, we consider the subgroup A-TMLE discussed in Section \ref{sec:atmle}. To ensure a fair comparison, we use the same Super Learner library for all relevant nuisance function estimation. The details on the configurations of the candidate estimators are available in Appendix \ref{app:implementation_details}.

\subsection{Simulation results}

Results are presented in Table \ref{tab:leader_largebias_n2000_asian}. Across all three scenarios, the unadjusted estimator has the largest empirical standard error and MSE, as expected when baseline covariates are prognostic. The subgroup-only AIPW and subgroup-only TMLE estimators show nearly identical performance and improve modestly over the unadjusted estimator through covariate adjustment. Both TMLE-PR and A-TMLE further reduce empirical standard error and MSE relative to the subgroup-only estimators. Between the two proposed estimators, A-TMLE is more centered and maintains approximately nominal empirical coverage, whereas TMLE-PR achieves the smallest empirical standard error and MSE but has conservative confidence interval coverage. Overall, the results demonstrate clear and consistent efficiency gains for both TMLE-PR and A-TMLE compared to the unadjusted and covariate-adjusted subgroup-only estimators across all three scenarios.

Figure \ref{fig:bias_correction} is useful for visualizing the bias-correction of A-TMLE in action. We can clearly see that the A-TMLE is not just going after the pooled estimand naively. It is carrying out a meaningful bias correction. This is exactly the behavior one would hope to see given the design of the simulation, where the subgroup-specific treatment effect differs substantially from the pooled treatment effect over the subgroup covariate distribution. The visible gap between the $\tilde{\Psi}$ estimates and the final subgroup-specific estimates confirms that the estimated bias parameter is practically important in these scenarios. We emphasize that a feature of A-TMLE is its ability to achieve efficiency gains even when the bias parameter is large in magnitude. In contrast, many ``test-then-pool'' data integration approaches derive their gains primarily under small bias. In the current setting, where the bias is large relative to the standard error, such methods would tend to reject pooling altogether and revert to subgroup-only estimation. Consequently, we would not expect them to outperform subgroup-only AIPW or TMLE under this regime.

\begin{table}[htbp]
\centering
\resizebox{16cm}{!}{
\begin{tabular}{crccccc}
\toprule
Scenario & Estimator & $|\text{Bias}|$ ($\times 10^{-2}$) & SE ($\times 10^{-2}$) & MSE ($\times 10^{-2}$) & Coverage & Oracle Coverage \\
\midrule
\multirow{5}{*}{Scenario 1} & Unadjusted & 0.088 & 4.593 & 0.211 & 0.94 & 0.95
\\
 & AIPW & 0.085 & 4.089 & 0.167 & 0.96 & 0.95
\\
 & TMLE & 0.105 & 4.070 & 0.165 & 0.96 & 0.96
\\
 & TMLE-PR & 0.425 & 3.544 & 0.127 & 0.98 & 0.96
\\
 & A-TMLE & 0.035 & 3.757 & 0.141 & 0.95 & 0.94
\\
\midrule
\multirow{5}{*}{Scenario 2} & Unadjusted & 0.069 & 4.145 & 0.172 & 0.96 & 0.96
\\
 & AIPW & 0.038 & 3.998 & 0.160 & 0.97 & 0.97
\\
 & TMLE & 0.087 & 3.969 & 0.157 & 0.97 & 0.96
\\
 & TMLE-PR & 0.454 & 3.392 & 0.117 & 0.99 & 0.96
\\
 & A-TMLE & 0.086 & 3.525 & 0.124 & 0.96 & 0.95
\\
\midrule
\multirow{5}{*}{Scenario 3} & Unadjusted & 0.099 & 4.127 & 0.170 & 0.95 & 0.96
\\
 & AIPW & 0.101 & 4.026 & 0.162 & 0.97 & 0.95
\\
 & TMLE & 0.153 & 4.059 & 0.165 & 0.95 & 0.96
\\
 & TMLE-PR & 0.472 & 3.351 & 0.114 & 0.98 & 0.96
\\
 & A-TMLE & 0.088 & 3.451 & 0.119 & 0.97 & 0.96
\\
\bottomrule
\end{tabular}
}
\caption{Performance of the candidate estimators under the scenarios described in subsection \ref{subsec:dgps} at sample size $n=2,000$ across 500 Monte Carlo replications for the Asian-subgroup target estimand. Reported quantities are the absolute bias, empirical standard error (SE), mean squared error (MSE), empirical 95\% confidence interval coverage using the estimated standard error, and oracle coverage using the Monte Carlo standard deviation in place of the estimated standard error. Bias, SE, and MSE are reported on the scale of the target parameter and multiplied by $10^{-2}$.}
\label{tab:leader_largebias_n2000_asian}
\end{table}

\begin{figure}[htbp]
\centering
\includegraphics[width=0.8\linewidth]{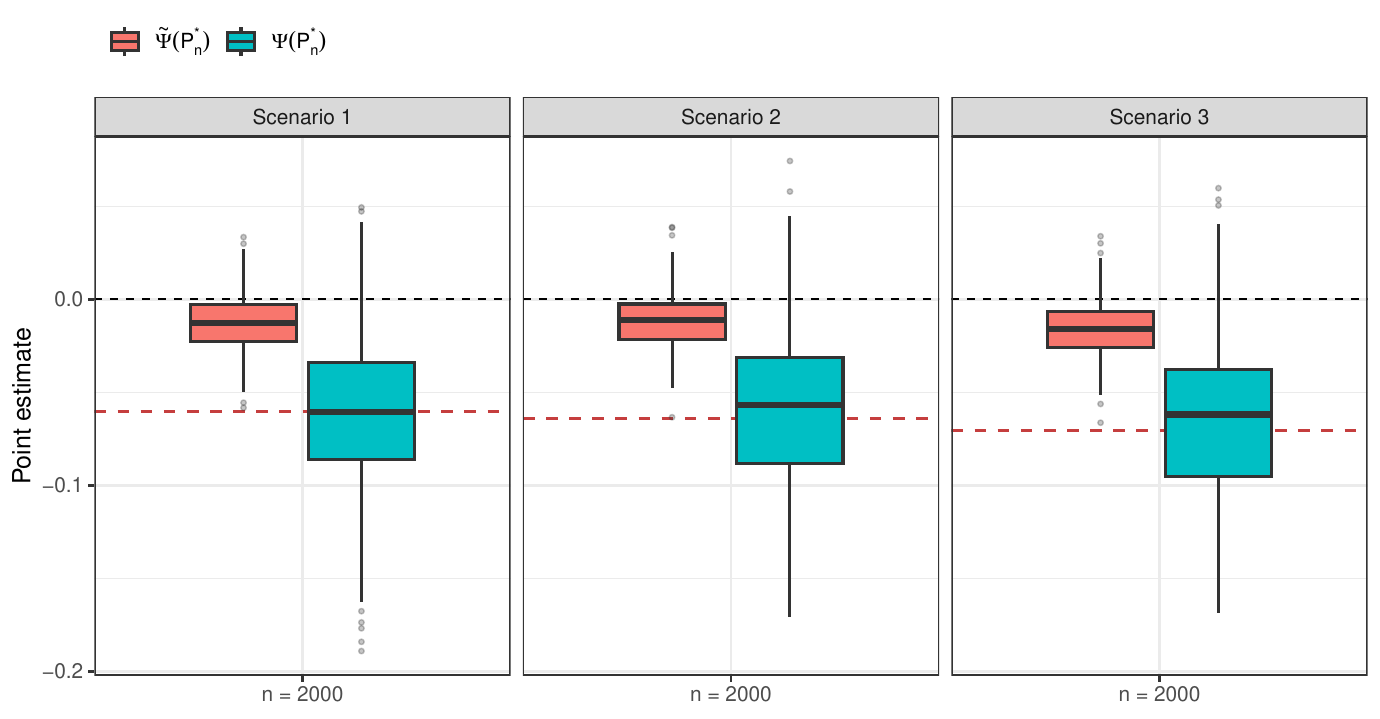}
\caption{Distributions of the pooled projection estimate $\tilde{\Psi}$ and the final bias-corrected subgroup estimate $\Psi=\tilde{\Psi}-\Psi^\#$ across the simulation scenarios at sample size $n=2,000$ over 500 Monte Carlo runs of A-TMLE. The red dashed line indicates the true subgroup-specific target estimand $\Psi$. The separation between the pooled and bias-corrected estimates reflects the magnitude of the estimated bias parameter $\Psi^\#$ and demonstrates that A-TMLE is performing a non-trivial bias correction.}
\label{fig:bias_correction}
\end{figure}

\subsection{Simulations where baseline covariates are strongly predictive of subgroup status}

The previous simulation setting was designed to stress test the performance of the estimators under an unfavorable setting where the subgroup-specific effect was very different from the overall effect. However, in our real-data application, we have seen that the estimate of the bias parameter $\Psi^\#$ is actually quite small, and there is also practical concern that the baseline covariates might be strongly predictive of whether an individual belongs to a subgroup. Inspired by these observations, we conducted additional simulations that kept the same three classes of event-time structures but used the original small-bias treatment-effect specifications, removing the large explicit subgroup-treatment interaction and making subgroup membership strongly predictable from baseline covariates. Specifically, we generated $S$ from a logistic model with linear predictor $\alpha_0-5z_{\mathrm{BMI}}+3z_{\mathrm{age}}+2z_{\mathrm{eGFR}}+1.5\,\mathrm{SEX}+1.5\,\mathrm{INSNVFL}-2\,\mathrm{PRCVFL}$, where $\alpha_0$ was calibrated to preserve the Asian subgroup prevalence $920/9340$.
\begin{table}[htbp]
\centering
\resizebox{16cm}{!}{
\begin{tabular}{crccccc}
\toprule
Scenario & Estimator & $|\text{Bias}|$ ($\times 10^{-2}$) & SE ($\times 10^{-2}$) & MSE ($\times 10^{-2}$) & Coverage & Oracle Coverage \\
\midrule
\multirow{5}{*}{Scenario 1} & Unadjusted & 0.006 & 4.931 & 0.243 & 0.94 & 0.94
\\
 & AIPW & 0.082 & 4.387 & 0.192 & 0.95 & 0.95
\\
 & TMLE & 0.076 & 4.449 & 0.198 & 0.95 & 0.95
\\
 & TMLE-PR & 0.207 & 3.878 & 0.151 & 0.98 & 0.94
\\
 & A-TMLE & 0.291 & 3.191 & 0.102 & 0.95 & 0.95
\\
\midrule
\multirow{5}{*}{Scenario 2} & Unadjusted & 0.005 & 5.215 & 0.271 & 0.95 & 0.95
\\
 & AIPW & 0.023 & 4.819 & 0.232 & 0.95 & 0.95
\\
 & TMLE & 0.017 & 4.916 & 0.241 & 0.94 & 0.95
\\
 & TMLE-PR & 0.038 & 4.171 & 0.174 & 0.98 & 0.94
\\
 & A-TMLE & 0.163 & 3.152 & 0.099 & 0.95 & 0.95
\\
\midrule
\multirow{5}{*}{Scenario 3} & Unadjusted & 0.002 & 5.243 & 0.274 & 0.95 & 0.94
\\
 & AIPW & 0.011 & 4.780 & 0.228 & 0.95 & 0.95
\\
 & TMLE & 0.019 & 4.874 & 0.237 & 0.94 & 0.95
\\
 & TMLE-PR & 0.052 & 4.216 & 0.177 & 0.98 & 0.93
\\
 & A-TMLE & 0.134 & 3.251 & 0.106 & 0.94 & 0.95
\\
\bottomrule
\end{tabular}
}
\caption{Performance of the candidate estimators under the setting where baseline covariates strongly predict subgroup membership at sample size $n=2,000$ across 500 Monte Carlo replications for the Asian-subgroup target estimand. Reported quantities are the absolute bias, empirical standard error (SE), mean squared error (MSE), empirical 95\% confidence interval coverage using the estimated standard error, and oracle coverage using the Monte Carlo standard deviation in place of the estimated standard error. Bias, SE, and MSE are reported on the scale of the target parameter and multiplied by $10^{-2}$.}
\label{tab:leader_highSpred_n2000_asian}
\end{table}

Results are presented in Table \ref{tab:leader_highSpred_n2000_asian}. As before, the absolute bias is small relative to the standard error for all estimators across all three scenarios, so differences in performance are mainly driven by variance. The unadjusted estimator has the largest MSE in every scenario. Subgroup-only AIPW and subgroup-only TMLE improve over the unadjusted estimator, while TMLE-PR further reduces empirical standard error and MSE. However, TMLE-PR remains noticeably less efficient than A-TMLE. A-TMLE delivers the largest efficiency gains, reducing the MSE by more than half relative to the unadjusted estimator while maintaining approximately nominal empirical coverage. One way to understand this behavior is to view TMLE-PR through the same decomposition underlying A-TMLE. When $W$ is highly predictive of $S$, the efficient influence curve for estimating the bias parameter $\Psi^\#$ depends on inverse weighting of the subgroup membership mechanism $\Pi$ in a nonparametric model. A-TMLE explicitly regularizes this component, which can lead to improved finite-sample efficiency relative to a standard TMLE.

\section{Case study: LEADER trial subgroup analysis}\label{sec:LEADER}

To demonstrate the practical utility of our approach, we applied A-TMLE to estimate subgroup-specific treatment effects using data from the LEADER cardiovascular outcome trial \citep{marso2013design,marso_liraglutide_2016}. LEADER was a large, multinational, randomized, double-blind, placebo-controlled trial designed to evaluate the cardiovascular safety of liraglutide in patients with type 2 diabetes mellitus (T2DM) at high risk for cardiovascular events. The trial population consisted of T2DM patients 50 years of age or older with at least one pre-existing cardiovascular condition, with an HbA1c level of 7.0\% or greater, and who were either (1) treatment naive, or (2) treated with oral anti-hyperglycemic agents and/or insulin. The trial included $n=9,340$ patients, $4,669$ in the liraglutide treatment group and $4,672$ in the placebo control group. Pre-specified subgroups consisted of persons self-identifying as American Indian or Alaska Native ($n=11$), Asian ($n=936$), Black or African American ($n=777$), Native Hawaiian or other Pacific Islander ($n=8$), Other ($n=370$), or White ($n=7238$). For the purposes of this subgroup analysis, we focused on effects in the Asian and Black subgroups, since the majority of trial participants are White, and the remaining subgroups have very small sample sizes. Mean exposure time in the trial was 3.5 years. 

Our goal was to estimate subgroup-specific risk differences in MACE between patients receiving the active intervention and those in the control arm across racial subgroups. Some subgroups are relatively underrepresented in the trial, leading to high variance and reduced statistical power for conventional subgroup-specific estimators. We apply A-TMLE to borrow strength from the remainder of the trial population while correcting for subgroup-specific differences, effectively treating observations outside each subgroup as auxiliary data.

We estimated the marginal average treatment effect comparing treatment arm to control arm on MACE events at 1, 1.5, and 2 years. We adjusted for the following baseline covariates: age, sex, diabetes duration, previous/current use of insulin, antihypertensive therapy use, eGFR, BMI, HbA1c, LDL, HDL and evidence of prior cardiovascular event. Details on the configurations of the A-TMLE and AIPW estimator are available in Appendix \ref{app:implementation_details}. The results are presented in Figure \ref{fig:LEADER_results}. Across the first 730 days of follow-up, the subgroup analyses generally suggest that liraglutide is associated with a lower event risk than placebo in both Asian and Black subgroups, but the strength and stability of that signal differ by estimator. In the Black or African American subgroup, the A-TMLE estimates are remarkably stable, at about a 2-percentage-point reduction at 365, 540, and 730 days, with 95\% confidence intervals excluding zero at all three time points. The AIPW estimates are similar in the direction of the effect but more variable, ranging from roughly a 1.8- to 3.6-percentage-point reduction, with statistical uncertainty large enough that only the 540-day estimate excludes zero. In the Asian subgroup, the contrast between estimators is larger. The A-TMLE again suggests a modest but consistent protective effect of about 1.5 to 1.6 percentage points at each time horizon, whereas the AIPW estimates are close to zero and quite imprecise, with much larger confidence intervals covering the null.

\begin{figure}[!ht]
\centering
\includegraphics[width=0.8\linewidth]{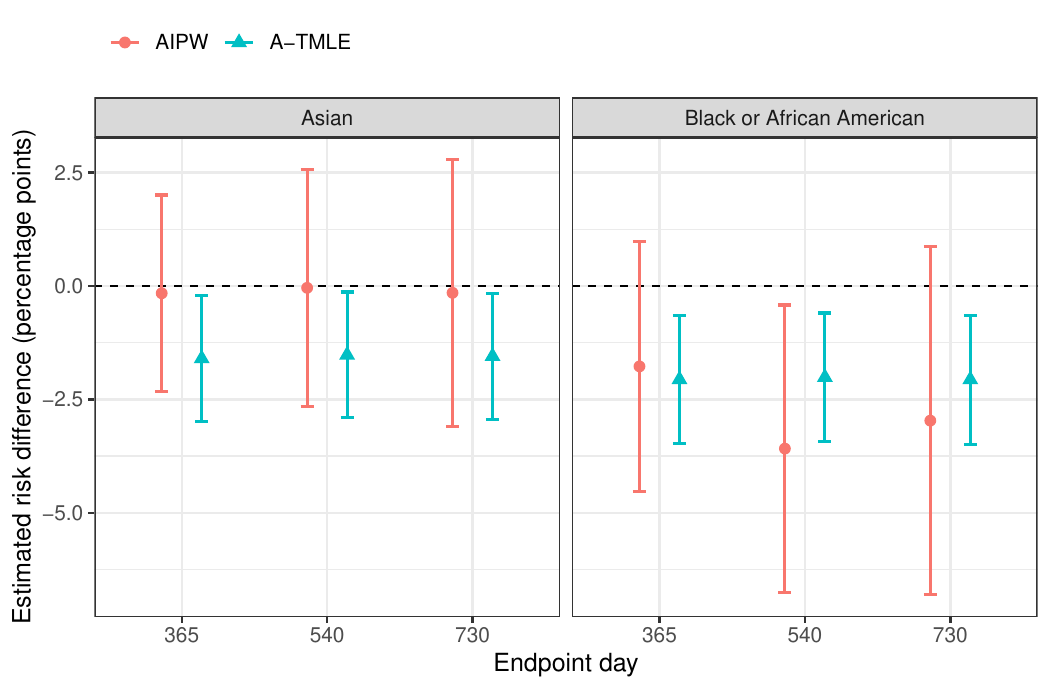}
\caption{Estimated risk differences comparing liraglutide versus placebo for the primary endpoint in the LEADER trial within the Asian and Black subgroups, evaluated at 365, 540, and 730 days of follow-up. Each point denotes the subgroup-specific estimated risk difference, expressed in percentage points, and vertical bars denote 95\% confidence intervals. Results are shown side by side for the AIPW estimator and the A-TMLE estimator. Negative values indicate a lower estimated event risk under liraglutide than under placebo.}
\label{fig:LEADER_results}
\end{figure}

\section{Discussion}\label{sec:discussion}

We proposed two estimators for subgroup-specific treatment effect estimation that borrow strength from observations outside the subgroup of interest while preserving the target estimand defined over the subgroup covariate distribution, $P_{W\mid S=1}$. By viewing the remainder of the RCT population as an auxiliary source, we obtain an estimator that remains grounded in the internal validity of the trial while potentially achieving substantial efficiency gains relative to subgroup-only analyses. Both the simulation study and the LEADER case study suggest that this approach can improve precision, and our LEADER analysis provides subgroup-specific estimates in racial groups for which the original trial was not powered to draw definitive conclusions.

A key feature of A-TMLE is the decomposition $\Psi(P)=\tilde{\Psi}(P)-\Psi^\#(P)$, which makes explicit that the benefit of borrowing information depends on the size and complexity of the subgroup-specific bias term. This is both a strength and a limitation. When the bias function is small or well approximated by a relatively low-complexity working model, the projected bias correction can be estimated stably and the resulting A-TMLE can be much more efficient than conventional subgroup-specific estimators. However, when the subgroup-by-treatment interaction is highly complex, projection bias may be non-negligible, and the efficiency gains from borrowing strength may be reduced \citep{vanderlaan2024atmle}. Thus, the practical value of the method depends not only on subgroup sample size, but also on how well the bias structure can be represented by the selected working model.

A related limitation concerns the current working-model selection strategy for the bias component. In the RCT setting considered here, the bias estimand is driven by treatment-dependent subgroup differences, that is, by subgroup-by-treatment interaction. To avoid selecting a degenerate working model with zero estimated bias simply because the treatment variable is omitted, we currently restrict model selection to a class of working models that always include the treatment variable $A$. While operationally convenient, this requirement is somewhat unnatural, as the object driving $\Psi^\#$ is not the main effect of $A$ itself, but rather the treatment-dependent component of $\tau_S(W,A)$. A more principled approach would be to develop a targeted working-model selection strategy that more directly prioritizes subgroup-by-treatment interaction structure. We view such targeted selection as an important direction and note that related efforts are already underway.

There are several additional limitations. First, for simplicity we assumed no loss to follow-up and reduced the time-to-event outcome to a fixed time horizon risk. Extensions to right-censoring and other trial complications remain important open problems. Note that for right-censored failure-time data, a simple (non-efficient) approach is to use inverse probability of censoring weighting with A-TMLE. For A-TMLE, unlike the treatment mechanism $\bar g_0$, which is known by design in the RCT, the subgroup-membership mechanism still must be estimated from the data. In our setting this is the quantity $\bar{\Pi}_0(1\mid W)=P_0(S=1\mid W)$. For the asymptotic results to kick in, this mechanism needs to be estimated sufficiently well. This may be challenging because subgroup membership can itself be a complicated function of baseline covariates, especially when the subgroup is characterized by a combination of demographic, clinical, and laboratory features. Thus, even though the subgroup setting is more favorable than general RCT-RWD integration in several respects, this particular nuisance component remains nontrivial. Relatedly, if $W$ is highly predictive of $S$, then $\bar{\Pi}_0(1\mid W)$ may be close to 0 or 1 over substantial parts of the covariate space. In that regime, the coefficients indexing the working model for $\tau_S(W,A)$ may become only weakly identified. In such settings, A-TMLE effectively places less emphasis on learning the working models in regions with limited overlap, focusing instead on areas of the covariate space where overlap is adequate. This behavior can be either advantageous or problematic. On one hand, consider a baseline covariate that strongly predicts subgroup membership, such as BMI. For strata with high BMI, if subgroup membership has little impact on the mean outcome, then limited attention to these regions is beneficial: there is little bias to correct, and reduced modeling effort can lead to lower variance. On the other hand, if individuals with high BMI exhibit substantially different mean outcomes depending on subgroup membership, then this becomes problematic. In such cases, A-TMLE may under-emphasize these regions and fail to adequately correct for bias, effectively leaving bias unaddressed.

Despite these limitations, we believe this work provides a useful and principled framework for improving subgroup analyses using only data already collected within the trial itself. More broadly, it suggests that data integration ideas can be fruitfully repurposed within a single RCT by treating non-subgroup participants as an auxiliary source and explicitly modeling the bias induced by borrowing information across populations. Future work will focus on more targeted working-model selection strategies, extensions to settings with right-censoring and more complicated data structures, and broader empirical evaluation in additional trial applications.

\newpage
\bibliography{references}

\newpage

\appendix

\section{Proof of lemmas}\label{app:can_grads}

\subsection{Proof of Lemma \ref{lem:can_grad_psi_tilde}}

\begin{proof}
First, note that by the law of total expectation, we can express our target parameter $\tilde{\Psi}_{\mathcal{M}_{A,w}}$ as
$$
\tilde{\Psi}_{\mathcal{M}_{A,w}}(P)=\frac{1}{P(S=1)}E[I(S=1)\cdot\tau_{A,\beta_P}(W)].
$$
We take the pathwise derivative of $\tilde{\Psi}_{\mathcal{M}_{A,w}}$ along a path $P_\epsilon$ through $P$ at $\epsilon=0$ with a corresponding score
$$
s(O)=\frac{d}{d\epsilon}\log dP_{\epsilon}(O)\eval_{\epsilon=0}.
$$
We have that
$$
\frac{d}{d\epsilon}\tilde{\Psi}_{\mathcal{M}_{A,w}}(P_\epsilon)\eval_{\epsilon=0}=\frac{d}{d\epsilon}\left[\frac{1}{P_{\epsilon}(S=1)}E_{P_\epsilon}\left[S\cdot\tau_{A,\beta_{P_\epsilon}}(W)\right]\right]_{\epsilon=0}.
$$
Define
$$
\mathbf{A}_\epsilon=P_\epsilon(S=1)\quad\text{and}\quad \mathbf{B}_\epsilon=\int S\cdot\tau_{A,\beta_{P_\epsilon}}(W)dP_{\epsilon}(O).
$$
Then, we have
\begin{align*}
\frac{d}{d\epsilon}\tilde{\Psi}_{\mathcal{M}_{A,w}}(P_\epsilon)\eval_{\epsilon=0}&=-\frac{1}{\mathbf{A}^2}\cdot\frac{d}{d\epsilon}\mathbf{A}_\epsilon\eval_{\epsilon=0}\mathbf{B}+\frac{1}{\mathbf{A}}\cdot\frac{d}{d\epsilon}\mathbf{B}_\epsilon\eval_{\epsilon=0}\\
&=\frac{1}{\mathbf{A}}\left[\frac{d}{d\epsilon}\mathbf{B}_\epsilon-\tilde{\Psi}_{\mathcal{M}_{A,w}}(P)\cdot\frac{d}{d\epsilon}\mathbf{A}_\epsilon\right]_{\epsilon=0}.
\end{align*}
Recall the influence curve of $P(S=1)$ is $I(S=1)-P(S=1)$. Therefore, we have
$$
\frac{d}{d\epsilon}\mathbf{A}_{\epsilon}\eval_{\epsilon=0}=E_P[I(S=1)s(O)].
$$
Note also that
$$
\frac{d}{d\epsilon}\tau_{A,\beta_{P_\epsilon}}(W)\eval_{\epsilon=0}=\frac{d}{d\epsilon}\left[\beta_{P_\epsilon}^\top\phi_A(W)\right]_{\epsilon=0}=\phi_A^\top(W)\frac{d}{d\epsilon}\beta_{P_{\epsilon}}\eval_{\epsilon=0}=\phi^\top_A(W)E_P[D_{A,\beta,P}(O)s(O)].
$$
Therefore,
\begin{align*}
\frac{d}{d\epsilon}\mathbf{B}_{\epsilon}\eval_{\epsilon=0}&=E_P\left\{S\cdot\tau_{A,\beta_P}(W)s(O)+S\cdot\phi^\top_A(W)E_P\left[D_{A,\beta,P}(O)s(O)\right]\right\}\\
&=E_P\left\{\left[S\cdot \tau_{A,\beta_P}(W)+E_P[S\phi_A(W)]^\top D_{A,\beta,P}(O)\right]s(O)\right\}.
\end{align*}
Finally, we have that
$$
\frac{d}{d\epsilon}\tilde{\Psi}_{\mathcal{M}_{A,w}}(P_\epsilon)\eval_{\epsilon=0}=E_P\left\{\left[\frac{S}{P(S=1)}\left[\tau_{A,\beta_P}(W)-\tilde{\Psi}_{\mathcal{M}_{A,w}}(P)\right]+\frac{1}{P(S=1)}E_P[S\phi_A(W)]^\top D_{A,\beta,P}(O)\right]\cdot s(O)\right\}.
$$
Hence, the canonical gradient of $\tilde{\Psi}_{\mathcal{M}_{A,w}}$ at $P$ is given by
$$
D_{\tilde{\Psi},\mathcal{M}_{A,w},P}(O)=\frac{S}{P(S=1)}\left[\tau_{A,\beta_P}(W)-\tilde{\Psi}_{\mathcal{M}_{A,w}}(P)\right]+\frac{1}{P(S=1)}E_P[S\phi_A(W)]^\top D_{A,\beta,P}(O).
$$
\end{proof}

\subsection{Proof of Lemma \ref{lem:can_grad_psi_pound}}

\begin{proof}
First, we again express our target parameter $\Psi_{\mathcal{M}_{S,w}}^\#$ as
$$
\Psi_{\mathcal{M}_{S,w}}^\#(P)=\frac{1}{P(S=1)}E_P\bigg\{S\left[\Pi_P(0\mid W,0)\tau_{S,\beta_P}(W,0)-\Pi_P(0\mid W,1)\tau_{S,\beta_P}(W,1)\right]\bigg\}.
$$
We define the control-arm portion and treatment-arm portion of the target parameter as
\begin{align*}
\Psi^\#_{\mathcal{M}_{S,w},A_0}(P)&=\frac{1}{P(S=1)}E_P\bigg\{S[\Pi_P(0\mid W,0)\tau_{S,\beta_P}(W,0)]\bigg\},\\
\Psi^\#_{\mathcal{M}_{S,w},A_1}(P)&=\frac{1}{P(S=1)}E_P\bigg\{S[\Pi_P(0\mid W,1)\tau_{S,\beta_P}(W,1)]\bigg\},
\end{align*}
respectively, so that $\Psi^\#_{\mathcal{M}_{S,w}}(P)=\Psi^\#_{\mathcal{M}_{S,w},A_0}(P)-\Psi^\#_{\mathcal{M}_{S,w},A_1}(P)$. We focus on the control-arm portion as the treatment-arm portion is an exact mirroring of the same derivation. Similar to the previous proof, let
$$
\mathbf{A}_\epsilon=P_\epsilon(S=1)\quad\text{and}\quad\mathbf{C}_\epsilon=\int S\cdot [\Pi_{P_\epsilon}(0\mid W,0)\tau_{S,\beta_{P_\epsilon}}(W,0)]dP_{\epsilon}(O).
$$
Note that the influence curve of $\Pi_P(0\mid w,a)$ is given by
$$
\frac{I(W=w,A=a)}{P(w,a)}(I(S=0)-\Pi_P(0\mid w,a)).
$$
So, we have
\begin{align*}
\frac{d}{d\epsilon}\Pi_{P_\epsilon}(0\mid W,0)\eval_{\epsilon=0}&=E_P\left\{\frac{1-A}{g_P(0\mid W)}(I(S=0)-\Pi_P(0\mid W,0))s(O)\bigg\rvert W\right\}\\
&=-E_P\left\{\frac{1-A}{g_P(0\mid W)}(S-\Pi_P(1\mid W,0))s(O)\bigg\rvert W\right\}
\end{align*}
As derived in Lemma 7 of our previous A-TMLE data fusion paper, the canonical gradient of $\beta$ at $P$ is given by
$$
D_{S,\beta,P}(O)=I_{S,P}^{-1}(S-\Pi_P(1\mid W,A))\phi_{S}(W,A)(Y-Q_{w,P}(S,W,A)).
$$
So, we have
$$
\frac{d}{d\epsilon}\tau_{S,\beta_{P_\epsilon}}(W,0)\eval_{\epsilon=0}=\phi_S^\top(W,0)E_P[D_{S,\beta,P}(O)s(O)].
$$
Therefore, we have
\begin{align*}
\frac{d}{d\epsilon}\mathbf{C}_\epsilon\eval_{\epsilon=0}&=E_P\bigg\{\bigg(S\cdot\Pi_P(0\mid W,0)\tau_{S,\beta_P}(W,0)-E_P[S\cdot\Pi_P(0\mid W,0)\tau_{S,\beta_P}(W,0)]\bigg)s(O)\bigg\}\\
&-E_P\bigg\{\left(\bar{\Pi}_P(1\mid W)\tau_{S,\beta_P}(W,0)\frac{1-A}{\bar{g}_P(0\mid W)}(S-\Pi_P(1\mid W,0))\right)s(O)\bigg\}\\
&+E_P\bigg\{\bigg(E_P[S\cdot\Pi_P(0\mid W,0)\phi_S(W,0)]^\top D_{S,\beta,P}(O)\bigg)s(O)\bigg\}.
\end{align*}
Recall from the previous proof that we have
$$
\frac{d}{d\epsilon}\Psi_{\mathcal{M}_{S,w},A_0}^\#(P_\epsilon)\eval_{\epsilon=0}=\frac{1}{\mathbf{A}}\left[\frac{d}{d\epsilon}\mathbf{C}_\epsilon-\Psi_{\mathcal{M}_{S,w},A_0}^\#(P)\cdot\frac{d}{d\epsilon}\mathbf{A}_\epsilon\right]_{\epsilon=0}.
$$
Hence, the canonical gradient of $\Psi_{\mathcal{M}_{S,w},A_0}^\#$ at $P$ is
\begin{align*}
D_{\Psi^\#,\mathcal{M}_{S,w},A_0,P}(O)&=\frac{S}{P(S=1)}\left[\Pi_P(0\mid W,0)\tau_{S,\beta_P}(W,0)-\Psi^\#_{\mathcal{M}_{S,w},A_0}(P)\right]\\
&-\frac{1}{P(S=1)}\bar{\Pi}_P(W)\tau_{S,\beta_P}(W,0)\frac{1-A}{g_P(0\mid W)}(S-\Pi_P(1\mid W,0))\\
&+\frac{1}{P(S=1)}E_P\left[S\cdot\Pi_P(0\mid W,0)\phi_S(W,0)\right]^\top D_{S,\beta,P}(O).
\end{align*}
The derivation for $D_{\Psi^\#,\mathcal{M}_{S,w},A_1,P}(O)$ mirrors this. We therefore arrive at the desired canonical gradient presented in the lemma.
\end{proof}

\section{Exact remainder derivations}\label{app:exact_remainder}

\subsection{Proof of Lemma \ref{lem:ratio_remainder_conditional_mean}}

\begin{proof}
Since $\Phi_h(P)=N_h(P)/p_P$, by the quotient-rule,
$$
D_{\Phi_h,P}(O)=\frac{D_{N_h,P}(O)-\Phi_h(P)\{S-p_P\}}{p_P}.
$$
Therefore,
\begin{align*}
R_{\Phi_h}(P,P_0)&=\frac{N_h(P)}{p_P}-\frac{N_h(P_0)}{p_0}+\frac{1}{p_P}P_0D_{N_h,P}-\frac{\Phi_h(P)}{p_P}P_0\{S-p_P\}\\
&=\frac{1}{p_P}\Bigl\{N_h(P)-N_h(P_0)+P_0D_{N_h,P}\Bigr\}+\frac{N_h(P_0)}{p_P}-\frac{N_h(P_0)}{p_0}-\frac{\Phi_h(P)}{p_P}(p_0-p_P)\\
&=\frac{1}{p_P}R_{N_h}(P,P_0)+\frac{p_P-p_0}{p_P}\{\Phi_h(P)-\Phi_h(P_0)\}.
\end{align*}
\end{proof}

\subsection{Proof of Lemma \ref{lem:subgroup_remainder_np_tmle}}

\begin{proof}
We begin with the numerator parameter $N_\Psi(P)=E_P[S\,m_P(W)]$. Its canonical gradient at $P$ is
$$
D_{N_\Psi,P}(O)=S\,m_P(W)-N_\Psi(P)+S\left\{\frac{I(A=1)}{g_P(1\mid 1,W)}-\frac{I(A=0)}{g_P(0\mid 1,W)}\right\}\bigl[Y-Q_P(1,W,A)\bigr].
$$
Indeed, since $\Psi(P)=N_\Psi(P)/p_P$, Lemma \ref{lem:ratio_remainder_conditional_mean} implies that
$$
D_{\Psi,P}(O)=\frac{D_{N_\Psi,P}(O)-\Psi(P)\{S-p_P\}}{p_P},
$$
which gives exactly the canonical gradient stated earlier for $\Psi$. We therefore first derive the exact remainder of the numerator:
$$
R_{N_\Psi}(P,P_0)\coloneq N_\Psi(P)-N_\Psi(P_0)+P_0D_{N_\Psi,P}.
$$
Substituting the expression for $D_{N_\Psi,P}$ yields
\begin{align*}
R_{N_\Psi}(P,P_0)&=N_\Psi(P)-N_\Psi(P_0)+P_0\bigl[S\,m_P(W)-N_\Psi(P)\bigr]\\
&+P_0\left[S\left\{\frac{I(A=1)}{g_P(1\mid 1,W)}-\frac{I(A=0)}{g_P(0\mid 1,W)}\right\}\bigl\{Y-Q_P(1,W,A)\bigr\}\right].
\end{align*}
Because $P_0\{S\,m_P(W)-N_\Psi(P)\}=E_0[S\,m_P(W)]-N_\Psi(P)$, the first two terms simplify to $E_0[S\,m_P(W)]-E_0[S\,m_0(W)]$.
Hence
\begin{align*}
R_{N_\Psi}(P,P_0)&=E_0\bigl[S\{m_P(W)-m_0(W)\}\bigr]\\
&+E_0\left[S\left\{\frac{I(A=1)}{g_P(1\mid 1,W)}-\frac{I(A=0)}{g_P(0\mid 1,W)}\right\}\bigl\{Y-Q_P(1,W,A)\bigr\}\right].
\end{align*}
Now condition on $(S,W,A)$. Since the second term is multiplied by $S$, only the subgroup observations contribute, and $E_0\bigl[Y-Q_P(1,W,A)\mid S=1,W,A\bigr]=Q_0(1,W,A)-Q_P(1,W,A)$. Therefore,
\begin{align*}
E_0\left[S\left\{\frac{I(A=1)}{g_P(1\mid 1,W)}-\frac{I(A=0)}{g_P(0\mid 1,W)}\right\}\bigl\{Y-Q_P(1,W,A)\bigr\}\right]&=E_0\left[S\frac{g_0(1\mid 1,W)}{g_P(1\mid 1,W)}\bigl\{Q_0(1,W,1)-Q_P(1,W,1)\bigr\}\right]\\
&-E_0\left[S\frac{g_0(0\mid 1,W)}{g_P(0\mid 1,W)}\bigl\{Q_0(1,W,0)-Q_P(1,W,0)\bigr\}\right].
\end{align*}
Also,
\begin{align*}
E_0\bigl[S\{m_P(W)-m_0(W)\}\bigr]&=E_0\bigl[S\{Q_P(1,W,1)-Q_0(1,W,1)\}\bigr]\\
&-E_0\bigl[S\{Q_P(1,W,0)-Q_0(1,W,0)\}\bigr].
\end{align*}
Combining the two gives
\begin{align*}
R_{N_\Psi}(P,P_0)&=E_0\left[S\left\{1-\frac{g_0(1\mid 1,W)}{g_P(1\mid 1,W)}\right\}\bigl\{Q_P(1,W,1)-Q_0(1,W,1)\bigr\}\right]\\
&-E_0\left[S\left\{1-\frac{g_0(0\mid 1,W)}{g_P(0\mid 1,W)}\right\}\bigl\{Q_P(1,W,0)-Q_0(1,W,0)\bigr\}\right].
\end{align*}
Using
$$
1-\frac{g_0(a\mid 1,W)}{g_P(a\mid 1,W)}=\frac{g_P-g_0}{g_P}(a\mid 1,W),
$$
we obtain
\begin{align*}
R_{N_\Psi}(P,P_0)&=E_0\left[S\frac{g_P-g_0}{g_P}(1\mid 1,W)\bigl\{Q_P(1,W,1)-Q_0(1,W,1)\bigr\}\right]\\
&-E_0\left[S\frac{g_P-g_0}{g_P}(0\mid 1,W)\bigl\{Q_P(1,W,0)-Q_0(1,W,0)\bigr\}\right].
\end{align*}
Finally, apply Lemma \ref{lem:ratio_remainder_conditional_mean} with $h_P(W)=m_P(W)$. Since $\Psi(P)=E_P[m_P(W)\mid S=1]$ and $N_h(P)=N_\Psi(P)$, we obtain
$$
R_\Psi(P,P_0)=\frac{1}{p_P}R_{N_\Psi}(P,P_0)+\frac{p_P-p_0}{p_P}\{\Psi(P)-\Psi(P_0)\}.
$$
Substituting the expression for $R_{N_\Psi}(P,P_0)$ into the above proves the lemma.
\end{proof}

\subsection{Proof of Lemma \ref{lem:subgroup_remainder_tilde_numerator}}

\begin{proof}
The derivation is identical to the corresponding pooled-population derivation in \cite{qiu_an_2025}, except that the coefficient vector $E_P[\phi_A(W)]$ is replaced by $E_P[S\phi_A(W)]$ because the numerator parameter is now $E_P[S\tau_{A,\beta_{A,P}}(W)]$ rather than $E_P[\tau_{A,\beta_{A,P}}(W)]$.
\end{proof}

\subsection{Proof of Lemma \ref{lem:subgroup_remainder_bias_numerator}}

\begin{proof}
The derivation is the analogue of the corresponding bias numerator derivation in the RCT-RWD setting presented in \cite{qiu_an_2025}. The coefficient vector
$$
E_P[\Pi_P(0\mid W,0)\phi_S(W,0)]-E_P[\Pi_P(0\mid W,1)\phi_S(W,1)]
$$
is replaced by the coefficient vector
$$
E_P[S\,\Pi_P(0\mid W,0)\phi_S(W,0)]-E_P[S\,\Pi_P(0\mid W,1)\phi_S(W,1)],
$$
because the numerator parameter is now $E_P[S\,b_P(W)]$ rather than $E_P[b_P(W)]$. 
\end{proof}

\section{TMLE-style targeting of the working model coefficients}
In \cite{vanderlaan2024atmle}, the authors consider refitting the working-model coefficients so that the empirical mean of the corresponding components of the EIC is solved to zero. This is done by refitting the coefficients of the working models using the unpenalized version of the loss functions. In this appendix, we propose an alternative, TMLE-style approach for targeting these coefficients.

\subsection{Targeting working model coefficients for the pooled-ATE parameter $\tilde{\Psi}$}\label{app:tmle_psi_tilde}

Recall that the $\beta$-component of the EIC is given by
$$
D_{\mathcal{M}_{A,w},\beta,P}=\tilde{D}_{\beta,P}^\top E_P\phi(W),
$$
where
$$
\tilde{D}_{\beta,P}=I_{A,P}^{-1}(A-g_P(1\mid W))\phi(W)\left[Y-\theta_P(W)-(A-g_P(1\mid W))\sum_j\beta_P(j)\phi_j(W)\right],
$$
and $I_{A,P}=E_Pg_P(1-g_P)(1\mid W)\phi\phi^\top(W)$. Define $C_n=I_{A,n}^{-1}E_n\phi(W).$ Consider a one-dimensional submodel
$\beta_{n,\epsilon}=\beta_n+\epsilon C_n$ for $\beta$ through its initial $\beta_n$ in the direction $C_n$. Then we have that
$$
\tau_{A,\beta_{n,\epsilon}}(W)=\sum_j\beta_{n,\epsilon}(j)\phi_j(W)=\tau_{A,\beta_n}(W)+\epsilon\phi(W)C_n.
$$
We define
$$
H_n=(A-g_n(1\mid W))\phi(W)C_n,
\qquad
R_n=Y-\theta_n(W)-(A-g_n(1\mid W))\tau_{A,\beta_n}(W).
$$
Then, we can represent the $\beta$-component of the EIC as $D_{\mathcal{M}_{A,w},\beta,P}=H_PR_P.$ Note that 
\begin{align*}
R_{n,\epsilon}&=Y-\theta_n(W)-(A-g_n(1\mid W))\tau_{A,\beta_{n,\epsilon}}(W)\\
&=R_n-\epsilon(A-g_n(1\mid W))\phi(W)C_n\\
&=R_n-\epsilon H_n.
\end{align*}
Therefore, the closed-form solution $\epsilon^\star$ of $\epsilon$ that solves the estimating equation $0=P_nH_n(R_n-\epsilon H_n)$ is given by
$$
\epsilon^{\star}=\frac{P_n(H_nR_n)}{P_n(H_n^2)}.
$$
For $\tilde{\Psi}_2$, one could replace the definition of $C_n$ by 
$$
I_{A,n}^{-1}E_n\left[\frac{S}{P_n(S=1)}\phi_A(W)\right]^\top.
$$

\subsection{Targeting working model coefficients for the bias parameter $\Psi^\#$}\label{app:tmle_psi_pound}

Recall that the $\beta$-component of the EIC is given by
$$
D^\#_{\mathcal{M}_{S,w},\beta,P}=D_{\beta,P}^\top\left[E_P\Pi_P(0\mid W,0)\phi(W,0)-E_P\Pi_P(0\mid W,1)\phi(W,1)\right],
$$
where
$$
D_{\beta,P}=I_{S,P}^{-1}(S-\Pi_P(1\mid W,A))\phi(W,A)\left[Y-\bar Q_P(W,A)-(S-\Pi_P(1\mid W,A))\sum_j\beta_P(j)\phi_j(W,A)\right],
$$
and $I_{S,P}=E_P\Pi_P(1-\Pi_P)(1\mid W,A)\phi\phi^\top(W,A)$. Define
$$
C_n=I_{S,n}^{-1}\left[E_n\Pi_n(0\mid W,0)\phi(W,0)-E_n\Pi_n(0\mid W,1)\phi(W,1)\right].
$$
Consider a one-dimensional submodel $\beta_{n,\epsilon}=\beta_{n}+\epsilon C_n$ for $\beta$ in the direction $C_n$. Then we have that
$$
\tau_{S,\beta_{n,\epsilon}}(W,A)=\sum_j\beta_{n,\epsilon}(j)\phi_j(W,A)=\tau_{S,\beta_n}(W,A)+\epsilon\phi(W,A)C_n.
$$
Define
\begin{align*}
H_n&=(S-\Pi_n(1\mid W,A))\phi(W,A)C_n\\
R_n&=Y-\bar{Q}_n(W,A)-(S-\Pi_n(1\mid W,A))\tau_{S,\beta_n}(W,A),
\end{align*}
we can then write the EIC as $D^\#_{\mathcal{M}_{S,w},\beta,P}=H_PR_P$. Note that 
\begin{align*}
R_{n,\epsilon}&=Y-\bar{Q}_n(W,A)-(S-\Pi_n(1\mid W,A))\tau_{S,\beta_{n,\epsilon}}(W,A)\\
&=R_n-\epsilon(S-\Pi_n(1\mid W,A))\phi(W,A)C_n\\
&=R_n-\epsilon H_n.
\end{align*}
Therefore, the closed-form solution $\epsilon^\star$ of $\epsilon$ that solves the estimating equation $0=P_nH_n(R_n-\epsilon H_n)$ is again given by
$$
\epsilon^{\star}=\frac{P_n(H_nR_n)}{P_n(H_n^2)}.
$$
\section{Implementation details}\label{app:implementation_details}

\subsection{Estimator configurations for simulations}

For all candidate estimators considered in our simulation studies, nuisance functions are estimated using a discrete Super Learner (\texttt{R} packages \texttt{SuperLearner} \citep{polley_package_2019} and \texttt{sl3} \citep{coyle_sl3_2021}) with a 3-fold cross-validation scheme. The Super Learner library includes generalized linear models \citep{nelder_generalized_1972}, multivariate adaptive regression splines \citep{friedman_multivariate_1991}, and generalized additive models \citep{hastie_generalized_1986,hastie_gam_2017}. The final learner is selected based on cross-validated log-likelihood loss. All candidate learners are implemented with their default hyperparameters. For A-TMLE, the working models for both the conditional average treatment effect $\tau_{A}(W)=E(Y\mid A=1,W)-E(Y\mid A=0,W)$ and the conditional average subgroup effect $\tau_S(A,W)=E(Y\mid S=1,W,A)-E(Y\mid S=0,W,A)$ were estimated using HAL (\texttt{R} package \texttt{hal9001} \citep{coyle_hal9001_2025,hejazi_hal9001_2020}) with the group Lasso (\texttt{R} package \texttt{sparsegl} \citep{xiaoxuan_sparsegl_2024}) screening procedure described in Subsection \ref{sec:group_lasso}. Specifically, we first perform screening using a first-order basis with \texttt{max\_degree=5}, \texttt{smoothness\_orders=1}, and \texttt{num\_knots=5}, followed by HAL fitting on the selected subspace with \texttt{max\_degree=3}, \texttt{smoothness\_orders=1}, and \texttt{num\_knots=20}. The targeting step is implemented using the TMLE-style procedure described in Subsection \ref{subsec:tmle_style_targeting}. To protect against inverse weighting of small estimated probabilities (treatment mechanism or subgroup membership mechanism, whichever is relevant), we use the fixed truncation rule $5/(\sqrt{n}\log n)$, which was empirically shown to be a robust practical default in many settings \citep{xu_investigating_2026}.

\subsection{Estimator configurations for the case study}

For all candidate estimators considered in our LEADER case study, nuisance functions are estimated using a discrete Super Learner with a 3-fold cross-validation scheme. The Super Learner library includes generalized linear models \citep{nelder_generalized_1972}, Lasso regression \citep{tibshirani_regression_1996,friedman_regularization_2010}, Bayesian additive regression trees \citep{chipman_bart_2010,kapelner_bartmachine_2016}, multivariate adaptive regression splines \citep{friedman_multivariate_1991}, generalized additive models \citep{hastie_generalized_1986,hastie_gam_2017}, and random forests \citep{breiman_random_2001,marvin_ranger_2017}, with the final learner selected by cross-validated log-likelihood loss. A-TMLE working models are fitted the same way as in simulations. The same truncation rule $5/(\sqrt{n}\log n)$ is also applied.

\end{document}